\definecolor{darkblue}{RGB}{0,0,180}
\newcommand{\te}{\theta}
\newcommand{\Pref}{\ensuremath{\mathrm{Pref}}}
\newcommand{\Suff}{\ensuremath{\mathrm{Suff}}}
\newcommand{\lcm}{\ensuremath{\mathrm{lcm}}}
\newtheorem{problem}{Problem}
\newtheorem{claim}{Claim}
\begin{document}

\markboth{L. Kari, B. Masson, S. Seki}
{Properties of Pseudo-Primitive Words and their Applications}

%
\catchline{}{}{}{}{}
%

\title{PROPERTIES OF PSEUDO-PRIMITIVE WORDS\\AND THEIR APPLICATIONS}
\author{LILA KARI, BENO\^{I}T MASSON, SHINNOSUKE SEKI}
\address{Department of Computer Science, The University of Western Ontario, \\ 
London, Ontario, N6A 5B7, Canada \\
\email{\{lila, benoit, sseki\}@csd.uwo.ca}
}

\maketitle

\begin{history}
\received{(Day Month Year)}
\accepted{(Day Month Year)}
\comby{(xxxxxxxxxx)}
\end{history}

\begin{abstract}
	A pseudo-primitive word with respect to an antimorphic involution $\te$ is a word which cannot be written as a catenation of occurrences of a strictly shorter word $t$ and $\te(t)$. 
	Properties of pseudo-primitive words are investigated in this paper. 
	These properties link pseudo-primitive words with essential notions in combinatorics on words such as primitive words, (pseudo)-palindromes, and (pseudo)-commutativity.
	Their applications include an improved solution to the  extended Lyndon-Sch\"{u}tzenberger equation $u_1u_2 \cdots u_\ell = v_1 \cdots v_n w_1 \cdots w_m$, where $u_1, \ldots, u_\ell \in \{u, \theta(u)\}$, $v_1, \ldots, v_n \in \{v, \theta(v)\}$, and $w_1, \ldots, w_m \in \{w, \theta(w)\}$ for some words $u, v, w$, integers $\ell, n, m \ge 2$, and an antimorphic involution $\te$. 
	We prove that for $\ell \ge 4$, $n, m \ge 3$, this equation implies that $u, v, w$ can be expressed in terms of a common word $t$ and its image $\theta(t)$. 
	Moreover, several cases  of this equation where $\ell = 3$ are examined. 
\end{abstract}

\keywords{
	antimorphic involution, 
	(pseudo-)primitive word, 
	(extended) Lyndon-Sch\"{u}tzenberger equation, 
	(pseudo-)periodic word, 
	(pseudo-)palindrome, 
	(weak) defect effect 
}

\ccode{2000 Mathematics Subject Classification: 68Q70, 68R15}


	\section{Introduction}

For elements $u, v, w$ in a free group, the equation of the form $u^\ell = v^n w^m$ $(\ell, n, m \ge 2)$ is known as the {\it Lyndon-Sch\"{u}tzenberger equation} (LS equation for short). 
Lyndon and Sch\"{u}tzenberger~\cite{LySc62} investigated the question of finding all possible solutions for this equation in a free group, and proved that if the equation holds, then $u$, $v$, and $w$ are all powers of a common element. 
This equation can be also considered on the semigroup of all finite words over a fixed alphabet $\Sigma$, and an analogous result holds. 

\begin{theorem}[see, e.g., \cite{HaNo04,LySc62,Man02}]\label{thm:original}
	For words $u, v, w \in \Sigma^+$ and integers $\ell, n, m \ge 2$, the equation $u^\ell = v^n w^m$ implies that $u, v, w$ are powers of a common word. 
\end{theorem}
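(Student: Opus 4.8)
The plan is to reduce the assertion to the single claim that $u$ and $v$ are powers of a common word. Indeed, suppose this is known and let $t$ be the primitive root of $u$, so that $u=t^i$ and, $v$ being a power of the same primitive root, $v=t^j$. Substituting into $u^\ell=v^nw^m$ gives $t^{i\ell}=t^{jn}w^m$; since $v^n$ is a prefix of $u^\ell$ we have $jn\le i\ell$, and cancelling the prefix $t^{jn}$ leaves $w^m=t^{\,i\ell-jn}$ with $i\ell-jn\ge 1$. An equation $w^m=t^k$ with $m,k\ge 1$ and $t$ primitive forces $w$ to be a power of $t$, so $u,v,w$ are all powers of $t$, as desired. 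Thus everything rests on the pairwise statement, for which the main tool is the Fine and Wilf periodicity theorem: a word of length at least $p+q-\gcd(p,q)$ that admits the two periods $p$ and $q$ also admits the period $\gcd(p,q)$.

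To prove that $u$ and $v$ share a common root, write $a=|u|$, $b=|v|$, $c=|w|$, so that $\ell a=nb+mc$. As $\ell\ge 2$ we get $nb+mc\ge 2a$ and hence $\max(nb,mc)\ge a$. Since the equation is preserved under reversal, $u^\ell=v^nw^m$ if and only if $(u^R)^\ell=(w^R)^m(v^R)^n$, and since being powers of a common word is a reversal-invariant property, I may assume without loss of generality that $nb\ge mc$, so that $nb\ge a$. Then $v^n$ is a prefix of $u^\ell$ of length $nb\ge a$; as a factor of $u^\ell$ it has period $a$, and as an $n$-th power it has period $b$. When $nb\ge a+b-\gcd(a,b)$, the Fine and Wilf theorem yields that this prefix has period $g=\gcd(a,b)$; since $g\mid a$ and $g\mid b$, both $u$ and $v$ are then powers of its length-$g$ prefix, which is exactly the pairwise conclusion, and the reduction of the first paragraph finishes the proof.

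The main obstacle is the residual regime in which the length hypothesis of Fine and Wilf fails, namely $a\le nb<a+b-\gcd(a,b)$, equivalent to $(n-1)b<a\le nb$; here $u$ is long, lying between $v^{n-1}$ and $v^n$ as a prefix. The boundary sub-case $a=nb$ is easy: then $u=v^n$, and substitution gives $w^m=u^{\ell-1}$, a two-variable equation settling the claim. The genuinely delicate sub-case is $(n-1)b<a<nb$, where the periodic prefix alone is not enough---one can exhibit Fine and Wilf extremal words having both periods but no shorter common period, so the conclusion must use that $w^m$ is a true $m$-th power. I would exploit this by peeling off the prefix $v^{n-1}$ and rewriting the overlap between the $u$-periodicity and the $v$-periodicity as a strictly shorter equation of the same shape, then induct on the total length $\ell a$, checking at each step that the parameters stay $\ge 2$ or that one reaches an already-settled two-variable equation $x^p=y^q$. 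Making this reduction precise, and in particular confirming that the $m$-th-power constraint on the suffix eliminates the extremal configurations, is the real content and the hardest part of the argument.
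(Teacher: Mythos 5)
The paper does not prove this statement at all: it is the classical Lyndon--Sch\"utzenberger theorem, quoted with citations to the literature, so your attempt has to be judged on its own merits rather than against a proof in the paper. Your first two paragraphs are sound. The reduction of the three-word conclusion to the pairwise statement for $u$ and $v$ is correct (if $u=t^i$, $v=t^j$ with $t$ primitive, then $w^m=t^{i\ell-jn}$ forces $\rho(w)=t$), the reversal symmetry justifying the assumption $nb\ge mc$ is legitimate, the Fine--Wilf application when $nb\ge a+b-\gcd(a,b)$ is fine, and so is the boundary case $a=nb$.

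However, there is a genuine gap, and you have located it yourself: the case $(n-1)b<a<nb$ is not proved but only wished for. ``Peeling off the prefix $v^{n-1}$ and rewriting the overlap as a strictly shorter equation of the same shape, then inducting on $\ell a$'' is not an argument: you do not exhibit the shorter equation, you do not verify that its exponents remain $\ge 2$ (the theorem is false when an exponent is allowed to be $1$, since any word trivially factors as $u=vw$, so this check is essential and cannot be waved through), and you do not show how the $m$-th power structure of the suffix eliminates the Fine--Wilf extremal configurations. This residual case is precisely the content of the Lyndon--Sch\"utzenberger theorem: a short length count shows that if the bound $|u|+|v|$ fails for $v^n$ as a prefix and the bound $|u|+|w|$ fails for $w^m$ as a suffix, then $\ell a = nb+mc < 4a$, so $\ell\le 3$; in other words, for $\ell\ge 4$ Fine--Wilf alone settles everything, and all of the difficulty is concentrated in $\ell\in\{2,3\}$ with both $v^n$ and $w^m$ short. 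The known proofs resolve exactly this regime with substantive additional arguments --- synchronization and conjugacy properties of primitive words, notably the fact (Property (ii) in this paper's introduction) that a primitive word cannot occur as a proper infix of its square --- not by a routine induction on total length. Note also that before entering the hard case you could first attempt Fine--Wilf on the suffix $w^m$, which you never mention; the irreducible core is where the length hypothesis fails on both sides simultaneously, and your sketch does not engage with it. As it stands, the proposal establishes the theorem only in the regimes where Fine--Wilf applies, which falls short of the full statement.
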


The Lyndon-Sch\"{u}tzenberger equation has been generalized in several ways; e.g., the equation of the form $x^k = z_1^{k_1} z_2^{k_2}
 \cdots z_n^{k_n}$ was investigated by Harju and Nowotka~\cite{HaNo05} and its special cases in~\cite{ApDj68,Len65}. 
Czeizler et al.~\cite{CCKS09} have recently proposed another extension, which was originally motivated by the information encoded as
 DNA strands for DNA computing. 
In this framework, a DNA strand is modeled by a word $w$ and encodes the same information as its Watson-Crick complement. 
In formal language theory, the Watson-Crick complementarity of DNA strands is modeled by an antimorphic involution
 $\te$~\cite{KKT02,PRY01}, i.e., a function $\te$ on an alphabet $\Sigma^*$ that is 
{\em (a)} antimorphic, $\te(x y) = \te(y) \te(x)$, $\forall x, y \in \Sigma^*$, and {\em (b)} involution, $\te^2 = id$, the identity. 
Thus, we can model the property  whereby a DNA single strand binds to and 
 is completely equivalent to its Watson-Crick complement, by  considering 
a word $u$ and its image $\te (u)$ equivalent, for a given antimorphic involution $\te$. 

For words $u$, $v$, $w$, integers $\ell, n, m \ge 2$, and an antimorphic involution $\te$, an extended  Lyndon-Sch\"{u}tzenberger
 equation (ExLS equation) is  of the form 
\begin{equation}\label{eq:exLS}
	u_1u_2 \cdots u_\ell = v_1 \cdots v_n w_1 \cdots w_m,
\end{equation}
with $u_1, \ldots, u_\ell \in \{u, \te(u)\}$, $v_1, \ldots, v_n \in \{v, \te(v)\}$, and $w_1, \ldots, w_m \in \{w, \te(w)\}$. 
The question arises as to whether an equation of this form implies the existence of a word $t$ such that $u, v, w \in \{t, \te(t)\}^+$. 
A given triple $(\ell, n, m)$ of integers is said to {\it impose pseudo-periodicity, with respect to $\te$, on $u, v, w$}, or simply, to {\it impose $\te$-periodicity on $u, v, w$} if (\ref{eq:exLS}) implies $u, v, w \in \{t, \te(t)\}^+$ for some word $t$. 
Furthermore, we say that the triple $(\ell, n, m)$ {\it imposes $\te$-periodicity} if it imposes $\te$-periodicity on all $u, v, w$. 
The known results on ExLS equations \cite{CCKS09} are summarized in Table~\ref{tbl:exLS_summary}.

\begin{table}[h]
\label{tbl:exLS_summary}
\tbl{Summary of the known results regarding the extended Lyndon-Sch\"{u}tzenberger equation.}
{\begin{tabular}{r@{\hspace{8mm}}r@{\hspace{8mm}}r@{\hspace{10mm}}c}
\toprule
$l$ & $n$ & $m$ & $\theta$-periodicity \\
\colrule
$\ge 5$ & $\ge 3$ & $\ge 3$ & YES \\
\colrule
$3$ or $4$ & $\ge 3$ & $\ge 3$ & ? \\
$2$ & $\ge 2$ & $\ge 2$ & ? \\
\colrule
$\ge 3$ & $2$ & $\ge 2$ & NO \\
\botrule
\end{tabular}}
\end{table}

This paper is a step towards  solving  the unsettled cases of  Table~\ref{tbl:exLS_summary}, by using the  following strategy.
Concise proofs exist in the literature for Theorem~\ref{thm:original}, that make use of fundamental properties such as: 
\begin{enumerate}[(i)]
\item	The periodicity theorem of Fine and Wilf (FW theorem), \label{item:FW} 
\item	The fact that a primitive word cannot be a proper infix of its square, and \label{item:overlap}
\item	The fact that the  class of primitive words is closed under cyclic permutation. \label{item:cycle}
\end{enumerate}
(For details of each, see~\cite{ChKa97}.) 
In contrast, the proof given in~\cite{CCKS09} for the result about ExLS equations, stating that $(\ge 5, \ge 3, \ge 3)$ imposes
 $\te$-periodicity, involves techniques designed  for this specific purpose only.
Should Properties (\ref{item:FW}), (\ref{item:overlap}), (\ref{item:cycle}) be generalized so as to take into account the
 informational equivalence between a word $u$ and $\te(u)$,  they could possibly form a basis for a concise proof of the
 solutions to the ExLS equation. The approach we use in this paper is thus to seek analog properties for this extended case,
 and use the results we obtain to approach the unsettled cases in   Table~\ref{tbl:exLS_summary}. 

Czeizler, Kari, and Seki generalized Property~(\ref{item:FW}) in \cite{CKS08}. 
There,  first  the notion of primitive words was extended to that of pseudo-primitive words with respect to a given antimorphic
 involution $\te$ (or simply, $\te$-primitive words).
A word $u$ is said to be {\it $\te$-primitive} if there does not exist another word $t$ such that $u \in t\{t, \te(t)\}^+$.  
For example, if $\te$ is the mirror image over  $\{a, b\}^*$  (the identity function on $\{a, b\}$ extended to an 
 antimorphism on $\{a, b\}^*$), $aabb$ is $\te$-primitive, while $abba$ is not because it can be written as $ab \te(ab)$. 
Based on the $\te$-primitivity of words, Property~(\ref{item:FW}) was generalized as  follows: 
``For words $u, v$, if a word in $u\{u, \te(u)\}^*$ and a word in $v\{v, \te(v)\}^*$ share a long enough prefix (for details, see Theorems~\ref{thm:exFWlcm} and \ref{thm:exFWgcd}), then $u, v \in t\{t, \te(t)\}^*$ for some $\te$-primitive word $t$.'' 

In contrast, little is known about Properties (\ref{item:overlap}) and (\ref{item:cycle}) except that they cannot be 
 generalized as suggested in the previous example: non-trivial overlaps between two words in $\{t, \te(t)\}^+$ are possible,
 and cyclic permutations do  not in general preserve the $\te$-primitivity of words. 
As a preliminary step towards an extension of Property (\ref{item:overlap}), Czeizler et al.~examined the non-trivial overlap of the form $v_1 \cdots v_m x = y v_{m+1} \cdots v_{2m}$, where $m \ge 1$, $v_i$ is either $v$ or $\te(v)$ for some $\te$-primitive word $v$ $(1 \le i \le 2m)$, and both $x$ and $y$ are properly shorter than $v$ \cite{CCKS09}. 
Some of the results obtained there will be employed in this paper. 

One purpose of this paper is to explore further the extendability of Properties (\ref{item:overlap}) and (\ref{item:cycle}). 
The main result here is Theorem~\ref{thm:overlap3}, which states that for a $\te$-primitive word $x$, neither $x\te(x)$ nor $\te(x)x$ can be a proper infix of a word $x_1x_2x_3$, where $x_1, x_2, x_3 \in \{x, \te(x)\}$. 
Based on this result, we consider two problems: For a $\te$-primitive word $x$, 
(1) does $v, yvz \in \{x, \te(x)\}^+$ imply that $y$ and $z$ are in $\{x, \te(x)\}^*$?; and 
(2) if the catenation of words $u, v$ is in $\{x, \te(x)\}^+$, under what conditions does $u, v \in \{x, \te(x)\}^*$ hold? 
In particular, our investigation into the second problem will reveal close relationships between primitive words, $\te$-primitive words, and $\te$-palindromes (fixed points of $\te$). These relationships further present several cyclic permutations under which the $\te$-primitivity of words is preserved. 

The results thus obtained enable us to prove  that the triple $(4, \ge 3, \ge 3)$ imposes $\te$-periodicity (Theorem~\ref{thm:exLS4}) in a much simpler manner than the proof in \cite{CCKS09} for $(\ge 5, \ge 3, \ge 3)$. 
Even for $(3, n, m)$ ExLS equations, these results give some insight and narrow down the open cases of ExLS equations. 

The paper is organized as follows: in the next section, we provide required notions and notation. 
Section~\ref{sec:combinatorial} begins with the proof of some basic properties of $\te$-primitive words, and then proves some consequences of overlaps between $\te$-primitive words of a similar flavour with  Properties~(\ref{item:overlap}) and (\ref{item:cycle}) (e.g., Theorem~\ref{thm:overlap3}, Corollary~\ref{cor:conj_pal}).
These tools are used in Section~\ref{sec:ExLS}, where we prove that the $(4, \ge 3, \ge 3)$ ExLS equation has only $\te$-periodic solutions (Theorem~\ref{thm:exLS4}), and study particular cases of $(3, n, m)$ ExLS equations.

	\section{Preliminaries}\label{sec:pre}

An alphabet is a finite and non-empty set of symbols. 
In the sequel, we shall use a fixed non-singleton alphabet $\Sigma$. 
The set of all words over $\Sigma$ is denoted by $\Sigma^*$, which includes the empty word $\lambda$, and let $\Sigma^+ = \Sigma^* \setminus \{\lambda\}$. 
The length of a word $w \in \Sigma^*$ is denoted by $|w|$. 
A word $v$ is an {\it infix} (resp. {\it prefix}, {\it suffix}) of a word $w$ if $w = xvy$ (resp. $w = vy$, $w = xv$) for some $x, y \in \Sigma^*$; in any case, if $w \neq v$, then the infix (prefix, suffix) is said to be {\it proper}. 
For a word $w$, denote by $\Pref(w)$ the set of prefixes of $w$ and by $\Suff(w)$ the set of its suffixes.

A language $L$ is a subset of $\Sigma^*$. 
For a non-negative integer $n \ge 0$, we write $L^n$ for the language consisting of all words of the form $w_1 \cdots w_n$ such that each $w_i$ is in $L$. 
We also write $L^{\ge n}$ for $L^n \cup L^{n+1} \cup L^{n+2} \cup \cdots$. 
Analogously, we can define $L^{\le n} = L^0 \cup L^1 \cup \cdots \cup L^n$. 
For $L^{\ge 0}$ and $L^{\ge 1}$, we employ the traditional notation $L^*$ and $L^+$. 

A mapping $\te: \Sigma^* \to \Sigma^*$ is called an {\it antimorphic involution} of $\Sigma^*$ if $\te(xy) = \te(y)\te(x)$ for any $x, y \in \Sigma^*$ (antimorphism), and $\te^2$ is equal to the identity (involution). 
Throughout this paper, $\te$ denotes an antimorphic involution. 
The {\it mirror image}, which maps a word to its reverse, is a typical example of antimorphic involution. 
A word $w \in \Sigma^*$ is called a {\it $\te$-palindrome} if $w = \te(w)$.
A word which is a $\te$-palindrome with respect to a given but unspecified antimorphic involution $\te$ is also called a {\it pseudo-palindrome} \cite{dLDL06a}. 

A non-empty word $w \in \Sigma^+$ is said to be {\it primitive} if $w = v^n$ implies $n = 1$ for any word $v \in \Sigma^+$. 
It is known that any non-empty word $w \in \Sigma^+$ can be written as a power of a unique primitive word, which is called the {\it primitive root} of $w$, and denoted by $\rho(w)$. 
Two words which {\it commute} share a primitive root, that is, $uv = vu$ implies $\rho(u) = \rho(v)$ (see~\cite{ChKa97}). 
In literature, it is said that $uv = vu$ causes a {\it defect effect} (for details of defect effects and defect theorems, see~\cite{ChKa97,Man02}). 
The LS equation also causes defect effect, since $u^\ell = v^n w^m$ with $\ell, n, m \ge 2$ implies $\rho(u) = \rho(v) = \rho(w)$ (Theorem~\ref{thm:original}). 
The following results describe other relations causing a defect effect.

\begin{lemma}[\cite{CKS08}]\label{lem:pq-qp}
	Let $u \in \Sigma^+$ such that $u = p q$ for some $\te$-palindromes $p, q \in \Sigma^+$. If $q \in \Pref(u)$ and $|q| \geq |p|$, then $\rho(p) = \rho(q) = \rho(u)$.
\end{lemma}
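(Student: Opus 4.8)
The plan is to reduce the whole statement to the primitive root $t := \rho(p)$ and to show that $p$, $q$, and $u = pq$ are all powers of $t$. First I would observe that $q$ is a \emph{border} of $u$: it is a prefix of $u$ by hypothesis, and it is a suffix of $u$ because $u = pq$. Since $|q| \ge |p| = |u| - |q|$, this border has length at least $|u|/2$, so by the elementary correspondence between borders and periods, $u$ has period $|p|$. Hence $u$ is a prefix of the one-sided infinite word $p^\omega = ppp\cdots$, and deleting the leading full period $p$ shows that $q$ is a prefix of $p^\omega$ as well.

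Next I would pin down the root. Writing $p = t^a$ with $t = \rho(p)$ and $a \ge 1$, the condition $p = \te(p)$ gives $t^a = \te(t^a) = \te(t)^a$; comparing the first $|t|$ letters (the two sides have equal length) yields $t = \te(t)$, so the primitive root $t$ is itself a $\te$-palindrome. Since $q \in \Pref(p^\omega) = \Pref(t^\omega)$ and $|q| \ge |p| \ge |t|$, I can write $q = t^c t'$ with $c \ge 1$ and $t' \in \Pref(t)$ a proper prefix (the remainder of $|q|$ modulo $|t|$). At this point only $\rho(p) = t$ is clear, and the whole statement reduces to proving $t' = \lambda$: once $t' = \lambda$ we get $q = t^c$ and $u = pq = t^{a+c}$, so $\rho(p) = \rho(q) = \rho(u) = t$.

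This last step is the crux. Using $q = \te(q)$ together with $q = t^c t'$ and $t = \te(t)$, I obtain $q = \te(t') t^c$. Writing $t = t_1 t_2$ with $t_1 = \te(t')$, so that $|t_1| = |t'| < |t|$, I would embed $q$ into $t^{c+1}$: from $q = t^c t' \in \Pref(t^{c+1})$ and $q = \te(t') t^c = t_1 t^c$, matching the trailing block $t^c$ against $t^{c+1}$ shifted by $|t_1|$ forces $t^c = (t_2 t_1)^c$, and hence $t = t_2 t_1$. Combined with $t = t_1 t_2$ this gives $t_1 t_2 = t_2 t_1$, so $t_1$ and $t_2$ commute and $t$ is a power of their common primitive root. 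Since $t$ is primitive and $|t_2| = |t| - |t'| > 0$, this forces $t_1 = \lambda$, i.e. $t' = \lambda$, which closes the argument.

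The main obstacle is precisely this commutation step. The periodicity argument alone only places $q$ inside $\Pref(t^\omega)$ and does \emph{not} by itself force $q$ to be an exact power of $t$; one genuinely needs to combine the primitivity of the root $t$ with the palindrome constraint $q = \te(q)$ to exclude a nonempty ``fractional'' tail $t'$. Everything else in the proof is routine periodicity, so I expect the write-up to stand or fall on making the embedding $q \in \Pref(t^{c+1})$ and the resulting identity $t = t_2 t_1 = t_1 t_2$ fully rigorous.
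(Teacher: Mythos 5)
Your proof is correct. One point to note up front: the paper does not actually prove this lemma---it is imported verbatim from \cite{CKS08} and stated without proof---so there is no internal argument to compare yours against; your write-up therefore serves as a self-contained verification. Checking it step by step: the border argument ($q$ is both a prefix and a suffix of $u$, hence $u$ has period $|u|-|q|=|p|$) correctly places $u$, and then $q$, inside $\Pref(t^\omega)$ with $t=\rho(p)$; the observation that $t^a=\te(t)^a$ forces $t=\te(t)$ is right; and the crux---writing $q=t^ct'=\te(t')t^c$, embedding $q$ into $t^{c+1}$ to get $t^c=(t_2t_1)^c$ with $t=t_1t_2$, $t_1=\te(t')$, hence $t_1t_2=t_2t_1$, and invoking primitivity of $t$ to force $t_1=\lambda$---is rigorous as sketched. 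You also use the hypothesis $|q|\ge|p|$ exactly where it is needed (to guarantee $c\ge 1$, so that $\te(t')\in\Pref(t)$); this is essential, since the statement fails without it (e.g.\ $p=aabaa$, $q=a$ under mirror image). The route typically taken for results of this kind in the cited literature runs instead through the conjugacy equation $pq=qz$ and the structure of its solutions (as in Proposition 1 of this paper, $x=r(tr)^i$, $y=(tr)^j$, $z=(rt)^j$ with $r,t$ $\te$-palindromes), combined with the palindrome constraints; your argument avoids that machinery in favor of a direct periodicity-plus-commutation analysis, which is more elementary and equally complete.
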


\begin{theorem}[\cite{ChKa97}]\label{th:uv-expr}
	Let $u, v \in \Sigma^+$. 
	If there exist $\alpha(u, v) \in u\{u, v\}^*$ and $\beta(u, v) \in v\{u, v\}^*$ which share a prefix of length at least $|u| + |v|$, then $\rho(u) = \rho(v)$. 
\end{theorem}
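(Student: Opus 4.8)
The plan is to argue by induction on $|u| + |v|$, reducing a pair $(u,v)$ to a strictly smaller pair whose witnesses are obtained by peeling a common leading factor off $\alpha$ and $\beta$. Throughout I assume without loss of generality that $|u| \le |v|$. Let $z$ be the shared prefix, so $|z| \ge |u| + |v|$. Since $\alpha$ begins with $u$ and $\beta$ begins with $v$, and since $|u|, |v| \le |z|$, both $u$ and $v$ are prefixes of $z$; as $|u| \le |v|$ this gives $u \in \Pref(v)$. For the base case $|u| = |v|$, comparing the first $|u|$ symbols of $z$ forces $u = v$, whence $\rho(u) = \rho(v)$ trivially.

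For the inductive step assume $|u| < |v|$ and write $v = ut$ with $t \in \Sigma^+$. The crucial observation is the inclusion $\{u,v\}^* \subseteq \{u,t\}^*$, which holds because $v = ut \in \{u,t\}^*$. Consequently $\alpha \in u\{u,v\}^* \subseteq u\{u,t\}^*$. Moreover $\beta \in v\{u,v\}^* = ut\{u,v\}^*$, so $\beta$ also begins with $u$; writing $\alpha = u\alpha'$ and $\beta = u\gamma$ with $\gamma \in t\{u,v\}^* \subseteq t\{u,t\}^*$, and stripping this common leading $u$, I obtain that $\alpha'$ and $\gamma$ share a prefix of length at least $(|u|+|v|) - |u| = |u| + |t|$. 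Provided $\alpha' \in u\{u,t\}^*$, the pair $(u,t)$ together with the witnesses $\alpha', \gamma$ then satisfies the hypotheses of the theorem, and $|u| + |t| < |u| + |v|$.

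Applying the induction hypothesis yields $\rho(u) = \rho(t)$, so $u$ and $t$ are powers of a common primitive word $r$; hence $v = ut$ is also a power of $r$ and $\rho(v) = r = \rho(u)$, as desired. (Equivalently, once $u$ and $t$ are seen to commute one may invoke directly the fact recorded in the preliminaries that commuting words share a primitive root.) I expect the only delicate point to be verifying that the reduced $\alpha$-witness $\alpha'$ is legitimate, i.e.\ nonempty and beginning with $u$. Nonemptiness is clear since $\alpha'$ shares with $\gamma$ a prefix of positive length $|u|+|t| = |v| > 0$; and $\alpha'$ begins with $u$ because its first factor in the $\{u,v\}$-decomposition is either $u$ or $v = ut$, each of which starts with $u$. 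Together with the inclusion $\{u,v\}^* \subseteq \{u,t\}^*$, this places $\alpha'$ in $u\{u,t\}^*$ and closes the induction.
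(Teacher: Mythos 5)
Your proof is correct. Note that the paper itself gives no proof of this statement --- it is quoted as a known result from \cite{ChKa97} --- and your argument (Euclidean-style descent: observe $u \in \Pref(v)$, write $v = ut$, strip the common leading $u$ from both witnesses, and induct on $|u|+|t| < |u|+|v|$, with the inclusion $\{u,v\}^* \subseteq \{u,t\}^*$ guaranteeing the reduced witnesses are legitimate) is essentially the classical textbook proof of this defect-type theorem, with all the delicate points (nonemptiness of $\alpha'$, that it starts with $u$, and the transfer $\rho(u)=\rho(t) \Rightarrow \rho(v)=\rho(u)$) handled correctly.
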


The notion of primitive word was generalized into that of pseudo-primitive word by Czeizler, Kari, and Seki~\cite{CKS08}. 
For an antimorphic involution $\te$, a non-empty word $w \in \Sigma^+$ is said to be {\it pseudo-primitive with respect to $\te$}, or simply {\it $\te$-primitive}, if $w \in \{v, \te(v)\}^n$ implies $n = 1$ for any word $v \in \Sigma^+$. 
In \cite{CKS08} it was proved that for any non-empty word $w \in \Sigma^+$, there exists a unique $\te$-primitive word $t$ satisfying $w \in t\{t, \te(t)\}^*$. 
Such a word $t$ is called the {\it $\te$-primitive root} of $w$. 
The next lemma describes a property of the $\te$-primitive root of a $\te$-palindrome of even length. 

\begin{lemma}\label{lem:pali_even}
	Let $x \in \Sigma^+$ be a $\te$-primitive word and $p$ be a $\te$-palindrome of even length. 
	If $p = x_1 x_2 \cdots x_m$ for some $m \ge 1$ and $x_1, \ldots, x_m \in \{x, \te(x)\}$, then $m$ has to be even. 
\end{lemma}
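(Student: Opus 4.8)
The plan is to argue by contradiction: assuming $m$ is odd, I will extract from the palindromicity of $p$ a central factor that is forced to be $\te$-palindromic and equal (up to $\te$) to $x$, and then show that $x$ would have to be an even-length $\te$-palindrome, which is incompatible with $\te$-primitivity.

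First I would exploit the hypothesis $p = \te(p)$. Applying $\te$ to the given factorization yields
\[
p = \te(p) = \te(x_m)\te(x_{m-1})\cdots\te(x_1),
\]
and since each $x_i \in \{x, \te(x)\}$ has length $|x|$, every $\te(x_i)$ also has length $|x|$. Thus $p$ admits two factorizations into $m$ blocks of the common length $|x|$, and because the block boundaries in each factorization fall exactly at the multiples of $|x|$, the two factorizations must coincide block by block. This gives $x_i = \te(x_{m+1-i})$ for every $i$.

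Next I would inspect the central block. If $m = 2k+1$ is odd, then taking $i = k+1$ gives $m+1-i = k+1$, hence $x_{k+1} = \te(x_{k+1})$; that is, the middle factor is a $\te$-palindrome. Because $x_{k+1} \in \{x, \te(x)\}$ and $\te^2 = id$, this forces $x = \te(x)$, so $x$ is itself a $\te$-palindrome. Moreover, $|p| = m|x|$ is even while $m$ is odd, so $|x|$ must be even.

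The final step, which I expect to be the crux, is the observation that an even-length $\te$-palindrome can never be $\te$-primitive. Writing $x = uv$ with $|u| = |v| = |x|/2 \ge 1$, the palindromicity $x = \te(x) = \te(v)\te(u)$ forces $u = \te(v)$, whence $x = u\,\te(u) \in \{u, \te(u)\}^2$ with $u \in \Sigma^+$; this contradicts the assumption that $x$ is $\te$-primitive. Therefore $m$ cannot be odd, and $m$ must be even. The only genuinely delicate point is justifying the block-by-block alignment of the two factorizations in the second step, which is immediate here precisely because all factors share the common length $|x|$; everything else is routine bookkeeping with the antimorphic involution.
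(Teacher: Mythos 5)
Your proof is correct and follows essentially the same route as the paper: both arguments observe that for odd $m$ the middle block of the factorization must be a $\te$-palindrome (of even length, since $|p|$ is even), hence $x = y\te(y)$ for some non-empty $y$, contradicting the $\te$-primitivity of $x$. Your write-up simply makes explicit the block-alignment step $x_i = \te(x_{m+1-i})$ and the splitting of an even-length $\te$-palindrome, which the paper leaves implicit.
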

\begin{proof}
	Suppose that the equality held for some odd $m$. 
	Then $x$ must be of even length because $|p|$ is even. 
	Hence $x_{(m-1)/2}$ becomes a $\te$-palindrome. 
	Thus $x = y\te(y)$ for some $y \in \Sigma^+$. 
	However, this contradicts the $\te$-primitivity of $x$. 
\end{proof}

The {\it theorem of Fine and Wilf} (FW theorem) is one of the fundamental results on periodicity \cite{FiWi65}. 
It states that for two words $u, v \in \Sigma^+$, if a power of $u$ and a power of $v$ share a prefix of length at least $|u|+|v|-\gcd(|u|, |v|)$, then $\rho(u) = \rho(v)$, where $\gcd(\cdot, \cdot)$ denotes the greatest common divisor of two arguments (for its proof, see, e.g.,~\cite{ChKa97}). 
This theorem has been generalized in~\cite{CKS08}, by taking into account the equivalence between a word and its image under $\te$, in the following two forms. 

\begin{theorem}[\cite{CKS08}]\label{thm:exFWlcm}
	Let $u, v \in \Sigma^+$. 
	If a word in $\{u, \te(u)\}^*$ and a word in $\{v, \te(v)\}^*$ share a prefix of length at least $\lcm(|u|, |v|)$, then $u, v \in \{t, \te(t)\}^+$ for some $\te$-primitive word $t \in \Sigma^+$, where $\lcm(\cdot, \cdot)$ denotes the least common multiple of two arguments. 
\end{theorem}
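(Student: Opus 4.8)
The plan is to first strip the problem down to the case where $u$ and $v$ are themselves $\te$-primitive. Denote by $w$ the common prefix and, discarding any excess, assume $|w| = \lcm(|u|,|v|)$. Since this length is a multiple of both $|u|$ and $|v|$, the word $w$ inherits two factorizations, $w = x_1 \cdots x_k$ with each $x_i \in \{u, \te(u)\}$ and $w = y_1 \cdots y_j$ with each $y_i \in \{v, \te(v)\}$. Replacing $u$ and $v$ by their $\te$-primitive roots does not destroy these factorizations (it only refines them, since each block $u,\te(u)$ already lies in $\{t,\te(t)\}^*$ for the root $t$ of $u$) and preserves the hypothesis, because the root lengths divide $|u|,|v|$ and hence their $\lcm$ divides $|w|$. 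After this reduction it suffices to prove that two $\te$-primitive words $u, v$ admitting such a common $w$ must satisfy $v \in \{u, \te(u)\}$: then taking $t = u$ gives $u, v \in \{t, \te(t)\}$, which is the desired conclusion and in particular forces $|u| = |v|$.

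Second, I would argue by induction on $|u| + |v|$, taking $|u| \ge |v|$ without loss of generality and, after possibly applying $\te$ to the whole equation (which reverses $w$ but keeps it in both $\{u,\te(u)\}^*$ and $\{v,\te(v)\}^*$), arranging that $x_1 = u$, so that $v$ or $\te(v)$ is a prefix of $u$. The base case $|u| = |v|$ is immediate: then $w = x_1 = y_1$, so $w \in \{u, \te(u)\} \cap \{v, \te(v)\}$ and hence $v \in \{u, \te(u)\}$. For $|u| > |v|$ the goal is to expose a Euclid-style descent: overlaying the two tilings, the block $u$ straddles several $v$-blocks, and cutting $w$ at the first common boundary past position $|u|$ produces a strictly shorter word lying both in $\{v,\te(v)\}^*$ and in $\{u', \te(u')\}^*$ for a suitable shorter word $u'$ (essentially $u$ with a $\{v,\te(v)\}$-prefix removed), to which the induction hypothesis applies after re-extracting $\te$-primitive roots.

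The main obstacle is precisely the $\te$-twisting that distinguishes this from the classical Fine and Wilf setting: consecutive blocks may alternate between $u$ and $\te(u)$, so $w$ has no honest period $|u|$ and neither the Fine and Wilf theorem nor Theorem~\ref{th:uv-expr} can be invoked verbatim. The crucial observation for controlling this is that wherever a block of one type abuts a block of the opposite type, the factor spanning the junction is a $\te$-palindrome; Lemma~\ref{lem:pq-qp} then forces the relevant prefix and suffix to share a primitive root, and Lemma~\ref{lem:pali_even} constrains the parities of the block counts. These palindromic constraints are what both drive the descent and, in the terminal coprime configuration, contradict the assumed $\te$-primitivity unless $|u| = |v|$, yielding $v \in \{u, \te(u)\}$ and closing the induction. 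I expect the bulk of the work, and the genuine difficulty, to lie in the careful case analysis of these junction patterns rather than in the descent skeleton itself.
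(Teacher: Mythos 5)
First, a framing remark: the paper does not prove Theorem~\ref{thm:exFWlcm} at all --- it is imported verbatim from~\cite{CKS08} --- so there is no in-paper proof to compare yours against; what follows assesses your argument on its own terms. Your reduction is sound: passing to $\te$-primitive roots preserves the hypothesis (the root lengths divide $|u|$ and $|v|$, so their least common multiple divides $|w|$), and for $\te$-primitive $u, v$ the conclusion is indeed equivalent to $v \in \{u, \te(u)\}$. The base case $|u| = |v|$ is also fine.

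The genuine gap is the inductive step, in two respects. First, inside $w$ the two tilings have \emph{no} common boundary other than the endpoints: the boundaries are the multiples of $|u|$, respectively of $|v|$, so a common boundary is a common multiple of $|u|$ and $|v|$, hence a multiple of $\lcm(|u|,|v|) = |w|$. Thus ``the first common boundary past position $|u|$'' is the end of $w$, and your cut returns $w$ itself rather than a shorter instance. Second, and more fundamentally, the period-subtraction mechanism that powers every Euclid-style Fine and Wilf argument is unavailable here. A genuine period $|u|$ gives letter-wise equalities $w[i] = w[i+|u|]$, which can be combined with $w[i] = w[i+|v|]$ and subtracted; a tiling by $\{u, \te(u)\}$ gives no such equalities, because a block $\te(u)$ repeats the letters of $u$ in \emph{reversed} order. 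Concretely, writing $u = y_1 \cdots y_s z$ with $y_i \in \{v, \te(v)\}$ and $z$ a nonempty proper prefix of the next $v$-block, there is no reason the remainder of $w$ beyond any cut should lie in $\{u', \te(u')\}^*$ for $u'$ equal to ``$u$ with a $\{v,\te(v)\}$-prefix removed'': if the second $u$-block is $\te(u)$, its internal $\{v, \te(v)\}$-structure is the mirror image of that of the first, so the decomposition does not propagate by translation and there is nothing for the induction hypothesis to act on. This failure is not a technicality; it is precisely why pseudo-periodic Fine and Wilf theorems require different bounds and proofs (compare Theorem~\ref{thm:exFWgcd}, where $2|u|+|v|-\gcd(|u|,|v|)$ replaces the classical $|u|+|v|-\gcd(|u|,|v|)$). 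Your observation that opposite-type junctions $u\te(u)$ and $\te(u)u$ are $\te$-palindromes, with Lemmas~\ref{lem:pq-qp} and~\ref{lem:pali_even} as tools, does point in the right direction --- arguments in this area do revolve around such palindrome and overlap lemmas --- but in your proposal it remains a promissory note: the case analysis those lemmas are supposed to drive is never exhibited, and since the descent skeleton itself is what fails, that analysis \emph{is} the proof, not a finishing touch. (A minor additional slip: applying $\te$ to the whole equation does not arrange $x_1 = u$; it reverses the block order, making the first block $\te(x_k)$. The correct normalization is simply to rename $u \leftrightarrow \te(u)$, which preserves $\te$-primitivity.)
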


\begin{theorem}[\cite{CKS08}]\label{thm:exFWgcd}
	Let $u, v \in \Sigma^+$ with $|u| \ge |v|$. 
	If a word in $\{u, \te(u)\}^*$ and a word in $\{v, \te(v)\}^*$ share a prefix of length at least $2|u|+|v|-\gcd(|u|, |v|)$, then $u, v \in \{t, \te(t)\}^+$ for some $\te$-primitive word $t \in \Sigma^+$. 
\end{theorem}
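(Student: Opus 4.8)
The plan is to deduce the gcd-bound version (Theorem~\ref{thm:exFWgcd}) from the already-established lcm-bound version (Theorem~\ref{thm:exFWlcm}) wherever the lengths permit, and to handle the remaining lengths by a Euclidean-style descent. Throughout, write $a=|u|\ge b=|v|$ and $g=\gcd(a,b)$, and let $w$ be the shared prefix, so $|w|\ge 2a+b-g$. First I would make two harmless normalizations. Since $\{u,\te(u)\}=\{\te(u),u\}$ and, applying $\te$, $x\in\{t,\te(t)\}^+$ iff $\te(x)\in\{t,\te(t)\}^+$, both the hypothesis and the conclusion are invariant under replacing $u$ by $\te(u)$ (and likewise $v$ by $\te(v)$). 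Hence, choosing the witnessing factorizations and renaming the first block of each, I may assume $u,v\in\Pref(w)$; in particular $v\in\Pref(u)$. The argument then proceeds by strong induction on $a+b$.

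\textbf{Base and boundary cases.} Comparing the two thresholds, write $a=ga'$, $b=gb'$ with $\gcd(a',b')=1$; one checks that $\lcm(a,b)=ga'b'\le 2a+b-g=g(2a'+b'-1)$ exactly when $(a'-1)(b'-2)\le 1$, i.e. when $b'\in\{1,2\}$ (equivalently $b\mid a$, or $b=2g$). In those cases the shared prefix already meets the lcm threshold, so Theorem~\ref{thm:exFWlcm} applies verbatim and yields a $\te$-primitive $t$ with $u,v\in\{t,\te(t)\}^+$. The degenerate case $a=b$ is immediate as well: the first blocks coincide, forcing $u\in\{v,\te(v)\}$, so $u$ and $v$ share a $\te$-primitive root. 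Thus only the generic regime $a>b$ with $b'\ge 3$ remains, where $\lcm(a,b)$ strictly exceeds the available length and a genuine reduction is needed.

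\textbf{Descent.} In the generic case I would peel one $v$-block off the front of $u$. Since $v\in\Pref(u)$ and $w$ is $v$-pseudo-periodic well beyond position $a+b$, the two tilings of the overlap region --- by blocks from $\{u,\te(u)\}$ and by blocks from $\{v,\te(v)\}$ --- force a relation allowing $u$ to be replaced by a strictly shorter word $u'$ of length $a-b$ (a suitable factor of $u$, possibly a $\te$-image) such that the shifted tail of $w$ is simultaneously $u'$-pseudo-periodic and $v$-pseudo-periodic. The arithmetic is favourable: $\gcd(a-b,b)=g$, the reduced threshold is $2(a-b)+b-g=2a-b-g$, and after discarding the leading $v$ the surviving prefix has length at least $|w|-b\ge 2a-g>2a-b-g$, so the induction hypothesis applies to the pair $(u',v)$. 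It returns a $\te$-primitive $t$ with $u',v\in\{t,\te(t)\}^+$; since $u$ is recovered from $u'$ and $v$, both lying in $\{t,\te(t)\}^+$, I conclude $u\in\{t,\te(t)\}^+$ as well, closing the induction.

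\textbf{Main obstacle.} The delicate point is controlling the involution across the subtraction step. When the peeled $v$-block is $\te(v)$, or when a $v$-block straddles a boundary between a $u$-block and a $\te(u)$-block, the leftover piece $u'$ need not be a literal factor of $u$ but may appear reversed or complemented, and one must certify that the overlapping pieces are $\te$-palindromes (or at least share a primitive root) before they can be cancelled cleanly. This is exactly where Lemma~\ref{lem:pq-qp} (a product of two $\te$-palindromes with the longer one as a prefix forces a common primitive root) and Lemma~\ref{lem:pali_even} (the parity of the $\te$-primitive decomposition of an even-length $\te$-palindrome) enter, together with Theorem~\ref{th:uv-expr} to pin down ordinary primitive roots when the $\te$-pattern locally degenerates to a true period. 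Verifying that $u'$ is well defined as a word over $\{t,\te(t)\}$, and that the inherited prefix still realizes the reduced gcd bound, is precisely the bookkeeping that the extra ``$2|u|$'' term in the threshold is designed to pay for.
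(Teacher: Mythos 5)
First, note that the paper under review does not prove Theorem~\ref{thm:exFWgcd} at all: it is imported from~\cite{CKS08}, where it is established by a case analysis of how the blocks of $\{v,\te(v)\}^*$ align with the boundaries of $u_1u_2$, driven by overlap results in the spirit of Lemma~\ref{lem:overlap2} and the $\te$-commutativity equation (Proposition~\ref{prop:th-commute}) --- not by a Euclidean descent. So your proposal must stand on its own, and it does not. Your normalizations and all of the arithmetic are correct (the computation showing $\lcm(|u|,|v|)\le 2|u|+|v|-\gcd(|u|,|v|)$ exactly when $|v|$ divides $|u|$ or $|v|=2\gcd(|u|,|v|)$; the surviving lengths after peeling; $\gcd(|u|-|v|,|v|)=\gcd(|u|,|v|)$), but they are the easy part. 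The crux is the structural claim of your descent --- that there exists a word $u'$ of length $|u|-|v|$ such that the suffix of $w$ starting at position $|v|$ is a prefix of a word in $\{u',\te(u')\}^*$ --- and this is asserted, never proved. Pseudo-periodicity is not a period in the classical ``sliding'' sense: it is a block tiling anchored at position $0$, and a block equal to $\te(u)$ relates letters by a reflection (reversal composed with $\te$ on letters), not by a translation. Writing $u=vu'$, the tail of $w$ indeed begins with $u'$, but it then continues with $u_2\in\{u,\te(u)\}=\{vu',\,\te(u')\te(v)\}$, which is not an $\{u',\te(u')\}$-tiling; pseudo-periods simply do not subtract, and no local renaming repairs this.

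There is also structural evidence that the gap cannot be closed by the ``bookkeeping'' you describe. If your structural claim held whenever the peeling makes sense, the very same induction (with base case $|v|$ dividing $|u|$, settled by Theorem~\ref{thm:exFWlcm}) would prove the statement with the classical Fine and Wilf bound $|u|+|v|-\gcd(|u|,|v|)$ in place of $2|u|+|v|-\gcd(|u|,|v|)$: the arithmetic goes through verbatim, since $|w|-|v|\ge |u|-\gcd(|u|,|v|)$ meets the reduced classical threshold in both branches. But that stronger statement is false. Take $\te$ to be the mirror image on $\{a,b\}^*$, $u=aabba$, $v=aab$: then $u\te(u)=aabbaabbaa$ and $v\te(v)\te(v)=aabbaabaa$ share the prefix $aabbaab$ of length $7=|u|+|v|-\gcd(|u|,|v|)$, yet no word $t$ with $u,v\in\{t,\te(t)\}^+$ can exist, since $|t|$ would have to divide both $5$ and $3$, forcing $|t|=1$ and $u\in t^+$. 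Hence any correct proof must exploit the extra $|u|$ of overlap in an essential, structural way --- which is exactly what the alignment case analysis of~\cite{CKS08} does, and what is absent here. Your closing appeal to Lemmas~\ref{lem:pq-qp} and~\ref{lem:pali_even} and Theorem~\ref{th:uv-expr} names plausible tools, but names of tools are not an argument for the one claim on which the whole induction rests.
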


In a way, we can say that these theorems describe relations causing a {\it weak defect effect} because they all imply that $u, v \in \{t, \te(t)\}^+$ for some $\te$-primitive word $t \in \Sigma^+$, which is strictly weaker than the usual defect effect $\rho(u) = \rho(v)$~\cite{CKS08}.
Various relations causing such a weak defect effect were presented in~\cite{CKS08}. 

Besides, the commutativity $xy = yx$ was extended to the $\te$-commutativity $xy = \te(y)x$ in~\cite{KaMa08}. 
This is a special case of $xy = zx$, whose solutions are given as $x = r(tr)^i$, $y = (tr)^j$, and $z = (rt)^j$ for some $i \ge 0$, $j \ge 1$, and $r, t \in \Sigma^*$ such that $rt$ is primitive (see, e.g.,~\cite{ChKa97}).
The next proposition immediately follows from this; note that the $\te$-commutativity equation guarantees that both $r, t$ are $\te$-palindromes. 

\begin{proposition}[\cite{KaMa08}]\label{prop:th-commute}
	For $x, y \in \Sigma^+$, the solutions of $xy = \te(y)x$ are given by $x = r(tr)^i$ and $y = (tr)^j$ for some $i \ge 0$, $j \ge 1$, and $\te$-palindromes $r, t$ such that $rt$ is primitive. 
\end{proposition}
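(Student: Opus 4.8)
The plan is to leverage the general solution of $xy = zx$ quoted in the paragraph immediately preceding the statement, and to impose the single extra constraint coming from $\te$-commutativity, namely $z = \te(y)$. Since $xy = \te(y)x$ is precisely $xy = zx$ with $z := \te(y)$, any solution must already have the form $x = r(tr)^i$, $y = (tr)^j$, $z = (rt)^j$ for some $i \ge 0$, $j \ge 1$, and $r, t \in \Sigma^*$ with $rt$ primitive. Thus the only thing left to establish is that the requirement $z = \te(y)$ forces $r$ and $t$ to be $\te$-palindromes; the announced shapes of $x$ and $y$ are then inherited verbatim from the general solution, and the primitivity of $rt$ is likewise part of that solution rather than something to be derived.

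First I would substitute the expressions for $z$ and $y$ into $z = \te(y)$, obtaining $(rt)^j = \te\big((tr)^j\big)$. Then, using that $\te$ is an antimorphic involution, I would expand the right-hand side: writing $(tr)^j = tr\,tr\cdots tr$ and reversing the order of these blocks while applying $\te$ to each, one gets $\te\big((tr)^j\big) = (\te(r)\te(t))^j$. Hence $(rt)^j = (\te(r)\te(t))^j$.

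Next I would compare lengths. Because $|rt| = |\te(r)\te(t)| = |r| + |t|$, the two bases are of equal size, so equality of their $j$-th powers (with $j \ge 1$) descends immediately to equality of the bases, giving $rt = \te(r)\te(t)$. A final length-matching argument then finishes the proof: reading off the prefix of length $|r| = |\te(r)|$ of each side yields $r = \te(r)$, and reading off the suffix of length $|t| = |\te(t)|$ yields $t = \te(t)$. Therefore both $r$ and $t$ are $\te$-palindromes, as claimed.

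Since every step after invoking the general solution is a direct computation, I do not expect a genuine obstacle. The only point requiring a little care is the antimorphic expansion $\te\big((tr)^j\big) = (\te(r)\te(t))^j$, where one must remember to reverse the order of the factors rather than apply $\te$ factor by factor, together with the observation that $rt$ and $\te(r)\te(t)$ have the same length, so that equality of their $j$-th powers descends to equality of the bases.
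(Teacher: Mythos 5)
Your proposal is correct and follows essentially the same route as the paper: the paper also invokes the general solution $x = r(tr)^i$, $y = (tr)^j$, $z = (rt)^j$ of $xy = zx$ and then remarks that the constraint $z = \te(y)$ forces $r$ and $t$ to be $\te$-palindromes. Your computation $(rt)^j = \te\bigl((tr)^j\bigr) = (\te(r)\te(t))^j$, followed by descent to the bases and prefix/suffix matching, is exactly the detail the paper leaves implicit.
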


Although this equation does not cause even a weak defect effect, one encounters it often when considering word equations which involve $\te$. 
Note that for words $u, v \in \Sigma^*$, it was proved in~\cite{CKS08} that the system $uv = \te(uv)$ and $vu = \te(vu)$ causes a weak defect effect: $u, v \in \{t, \te(t)\}^*$ for some $t \in \Sigma^+$. 
Thus for words $x, y, z$ satisfying $xy = zx$, if both $y$ and $z$ are $\te$-palindromes, then the representation of solutions of $xy = zx$ implies $tr = \te(tr)$ and $rt = \te(rt)$. 
Hence the next result holds. 

\begin{proposition}[\cite{CCKS09}]\label{prop:pali_conjugate}
	For a word $x \in \Sigma^+$ and two $\te$-palindromes $y, z \in \Sigma^+$, the equation $xy = zx$ implies that $x, y, z \in \{t, \te(t)\}^*$ for some $t \in \Sigma^+$. 
\end{proposition}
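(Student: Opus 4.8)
The plan is to reduce the statement to two facts already available in the preliminaries: the standard parametrization of the solutions of the word equation $xy = zx$, and the weak defect effect produced by the simultaneous palindromicity of $uv$ and $vu$. First I would invoke the known description of the solutions of $xy = zx$ (the one recalled just before Proposition~\ref{prop:th-commute}): there exist words $p, q$ with $pq$ primitive and integers $i \ge 0$, $j \ge 1$ such that
\[
	x = p(qp)^i, \qquad y = (qp)^j, \qquad z = (pq)^j.
\]
I deliberately write the solution parameters as $p, q$, reserving the letter $t$ for the common word of the conclusion. The whole task is then to show that $p$ and $q$ are themselves built from a single word $t$ and its image $\te(t)$, since $x$, $y$, $z$ are all catenations of copies of $p$ and $q$.

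Next I would exploit the hypothesis that $y$ and $z$ are $\te$-palindromes. Because $\te$ is an antimorphic involution, it commutes with taking powers, i.e.\ $\te(w^j) = (\te(w))^j$ for every word $w$ and every $j \ge 0$. Hence $y = \te(y)$ becomes $(qp)^j = (\te(qp))^j$, and since the two sides have equal length, comparing their prefixes of length $|qp|$ yields $qp = \te(qp)$. The same argument applied to $z = \te(z)$ gives $pq = \te(pq)$. Thus both $qp$ and $pq$ are $\te$-palindromes.

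The decisive step is then to feed the pair $(u, v) = (q, p)$ into the weak defect result of~\cite{CKS08} recalled above: the system $uv = \te(uv)$ and $vu = \te(vu)$ forces $u, v \in \{t, \te(t)\}^*$ for some $t \in \Sigma^+$. Since we have just established $qp = \te(qp)$ and $pq = \te(pq)$, this immediately gives $p, q \in \{t, \te(t)\}^*$, and therefore $x = p(qp)^i$, $y = (qp)^j$, and $z = (pq)^j$ all lie in $\{t, \te(t)\}^*$, which is exactly the claim. Note that the result of~\cite{CKS08} is stated for $u, v \in \Sigma^*$, so the degenerate situations in which $p$ or $q$ is empty are covered without special treatment.

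The only point requiring any genuine care is the passage from $(qp)^j$ being a $\te$-palindrome to $qp$ being one, and I expect this to be essentially the sole obstacle; it is dispatched by the antimorphism identity $\te(w^j) = (\te(w))^j$ together with the length/prefix comparison above. Everything else is a direct appeal to the two cited results. In particular, the primitivity of $pq$, though part of the solution form of $xy=zx$, plays no active role in this argument.
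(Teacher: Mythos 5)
Your proof is correct and follows essentially the same route as the paper: the paper likewise takes the parametrization $x = r(tr)^i$, $y = (tr)^j$, $z = (rt)^j$ of $xy = zx$, observes that the $\te$-palindromicity of $y$ and $z$ forces $tr = \te(tr)$ and $rt = \te(rt)$, and then invokes the weak defect effect of the system $uv = \te(uv)$, $vu = \te(vu)$ from~\cite{CKS08}. Your explicit prefix-comparison argument for extracting $qp = \te(qp)$ from $(qp)^j = \te\bigl((qp)^j\bigr)$ merely fills in a step the paper leaves implicit.
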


	\section{Properties of Pseudo-Primitive Words}\label{sec:combinatorial}

The primitivity of words is one of the most essential notions in combinatorics on words. 
The past few decades saw a considerable number of studies on this topic (see e.g.,~\cite{ChKa97,Lot83,Shy01}). 
In contrast, research on the pseudo-primitivity of words has just been initiated in~\cite{CCKS09,CKS08}. 
For instance, although the class of pseudo-primitive words was proved to be properly included in that of primitive words~\cite{CKS08}, nothing else is known about the relation between these two classes. 
The purpose of this section is to prove various properties of pseudo-primitive words. 

Throughout this section, $\te$ is assumed to be a given antimorphic involution. 
We begin this section with a simple extension of a known result on the primitive root (Lemma~\ref{lem:rt_rootshare}) to the $\te$-primitive root (Lemma~\ref{lem:rt_th-rootshare}). 

\begin{lemma}[e.g.,~\cite{Shy01}]\label{lem:rt_rootshare}
	For words $u, v \in \Sigma^+$ and a primitive word $w \in \Sigma^+$, the following properties hold: 
	\begin{arabiclist}
	\item $u^n \in w^+$ implies $u \in w^+$; 
	\item $uv, u \in w^+$ or $uv, v \in w^+$ implies $u, v \in w^+$. 
	\end{arabiclist}
\end{lemma}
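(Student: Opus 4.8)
The plan is to reduce both parts to standard facts about primitive roots that are already recalled in Section~\ref{sec:pre}, namely the uniqueness of the primitive root and the fact that commuting words share a primitive root. Neither part requires anything beyond these facts together with cancellation in the free monoid $\Sigma^*$.

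For Part~1, I would suppose $u^n \in w^+$, say $u^n = w^k$ for some $k \ge 1$, and first observe that $u$ commutes with $w^k$:
\[
u \cdot w^k = u \cdot u^n = u^{n+1} = u^n \cdot u = w^k \cdot u.
\]
Hence $\rho(u) = \rho(w^k)$ by the commutativity defect effect. Since $w$ is primitive we have $\rho(w^k) = w$, so $\rho(u) = w$ and therefore $u \in w^+$. (Equivalently, writing $u = \rho(u)^j$ gives $w^k = \rho(u)^{jn}$, and the uniqueness of the primitive root forces $w = \rho(u)$; this is the route I would use if I wanted to avoid invoking commutation explicitly.)

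For Part~2, I would treat the two hypotheses symmetrically. Under $uv, u \in w^+$, write $uv = w^a$ and $u = w^b$; then $w^b v = w^a$, and left-cancellation yields $v = w^{a-b}$. Since $v \in \Sigma^+$ we have $a > b$, so $v \in w^+$, which together with the hypothesis $u \in w^+$ gives the claim. Under $uv, v \in w^+$, write $uv = w^a$ and $v = w^c$; then $uv = w^{a-c}\,w^c = w^{a-c} v$, and right-cancellation of $v$ gives $u = w^{a-c} \in w^+$, again using $u \in \Sigma^+$ to ensure $a > c$.

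I expect there to be essentially no real obstacle: Part~1 collapses the moment one notices the commutation $u\,w^k = w^k u$, and Part~2 is pure cancellation. The only point deserving a line of care is checking that the relevant exponents are strictly positive, so that $u$ and $v$ genuinely lie in $w^+$ rather than merely in $w^*$; this is guaranteed by $u, v \in \Sigma^+$.
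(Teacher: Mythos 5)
Your proof is correct. Note that the paper itself gives no proof of this lemma: it is stated as a known fact with a citation to the literature (Shyr's book), so there is nothing to compare against; your argument is a valid, self-contained derivation using only facts the paper recalls in its preliminaries (the commutation defect effect $uv = vu \Rightarrow \rho(u) = \rho(v)$, uniqueness of the primitive root, and cancellation in $\Sigma^*$). The exponent bookkeeping you flag ($a > b$, resp.\ $a > c$, forced by $u, v \in \Sigma^+$) is indeed the only point needing care, and you handle it. One small observation worth making: your cancellation argument for Part~2 never uses the primitivity of $w$, so that part holds for arbitrary $w \in \Sigma^+$; only Part~1 genuinely needs $w$ primitive (e.g.\ $u = ab$, $w = u^2$, $n = 2$ shows it fails otherwise).
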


\begin{lemma}\label{lem:rt_th-rootshare}
	For words $u, v \in \Sigma^+$ and a $\te$-primitive word $x \in \Sigma^+$, the following properties hold:
	\begin{arabiclist}
	\item for some $n \ge 1$, $u^n \in \{x, \te(x)\}^+$ implies $u \in \{x, \te(x)\}^+$; 
	\item $uv, u \in \{x, \te(x)\}^+$, or $uv, v \in \{x, \te(x)\}^+$ implies $u, v \in \{x, \te(x)\}^+$; 
	\item $\te(u)v, u \in \{x, \te(x)\}^+$, or $u\te(v), v \in \{x, \te(x)\}^+$ implies $u, v \in \{x, \te(x)\}^+$. 
	\end{arabiclist}
\end{lemma}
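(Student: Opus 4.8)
The plan is to handle the three parts in the order 2, 3, 1, since parts~2 and~3 are elementary consequences of the fact that $x$ and $\te(x)$ have the \emph{same length} $|x|$, whereas part~1 is where the genuine content---the uniqueness of the $\te$-primitive root---comes into play. Throughout I would first record two observations that follow directly from the definitions. First, $\te(x)$ is $\te$-primitive whenever $x$ is (otherwise applying $\te$ to a nontrivial factorization of $\te(x)$ produces one for $x$), so the set $\{x,\te(x)\}$ is stable under $\te$. Second, $\{x,\te(x)\}^+$ is closed under $\te$, because $\te(x_1\cdots x_j)=\te(x_j)\cdots\te(x_1)$ and each $\te(x_i)$ again lies in $\{x,\te(x)\}$.

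For part~2, suppose $u,uv\in\{x,\te(x)\}^+$. Every word of $\{x,\te(x)\}^+$ is a catenation of blocks of length exactly $|x|$, so its block boundaries are forced at the multiples of $|x|$; in particular $|u|$ and $|uv|$ are multiples of $|x|$. Since $u$ is a prefix of $uv$ of length a multiple of $|x|$, the forced factorization of $uv$ restricts on the first $|u|$ letters to a factorization of $u$, and the remaining blocks spell $v$. As $v$ is nonempty it is a nonempty catenation of elements of $\{x,\te(x)\}$, i.e. $v\in\{x,\te(x)\}^+$; the symmetric hypothesis $uv,v\in\{x,\te(x)\}^+$ is handled identically by peeling blocks off the right. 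Part~3 then reduces to part~2 using the $\te$-closure above: from $u\in\{x,\te(x)\}^+$ I get $\te(u)\in\{x,\te(x)\}^+$, so $\te(u)v,\te(u)\in\{x,\te(x)\}^+$ yields $v\in\{x,\te(x)\}^+$ by part~2 (and $u$ is given), and dually $u\te(v),\te(v)\in\{x,\te(x)\}^+$ yields $u\in\{x,\te(x)\}^+$.

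For part~1, let $t$ be the $\te$-primitive root of $u$, so $u\in t\{t,\te(t)\}^*$; then $u^n\in t\{t,\te(t)\}^*$ as well, which exhibits $t$ as the $\te$-primitive root of $u^n$. On the other hand $u^n\in\{x,\te(x)\}^+$ begins either with $x$, giving $u^n\in x\{x,\te(x)\}^*$, or with $\te(x)$, giving $u^n\in\te(x)\{\te(x),x\}^*$; since both $x$ and $\te(x)$ are $\te$-primitive, in either case the $\te$-primitive root of $u^n$ lies in $\{x,\te(x)\}$. By the uniqueness of the $\te$-primitive root, $t\in\{x,\te(x)\}$, whence $\{t,\te(t)\}=\{x,\te(x)\}$ and $u\in\{t,\te(t)\}^+=\{x,\te(x)\}^+$.

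The main obstacle is the honest bookkeeping in part~1: I must make sure both descriptions of $u^n$---as $t\{t,\te(t)\}^*$ inherited from $u$, and as starting with a leading block $x$ or $\te(x)$---are genuinely in the normalized form ``a $\te$-primitive word followed by elements of its $\te$-orbit,'' so that the uniqueness result applies verbatim. The case split according to whether $u^n$ starts with $x$ or $\te(x)$, together with the remark that $\te$ preserves $\te$-primitivity, is exactly what legitimizes both as valid choices of $\te$-primitive root and forces them to coincide. Parts~2 and~3, by contrast, require nothing beyond the equal-length property and the $\te$-closure of $\{x,\te(x)\}^+$.
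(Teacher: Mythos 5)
Your proposal is correct, and parts~2 and~3 are exactly the paper's argument: the paper disposes of them with the remark that they are ``immediately proved by comparing the length of words,'' which is precisely your block-boundary observation that every factorization over $\{x,\te(x)\}$ has its cut points at multiples of $|x|$ (your reduction of part~3 to part~2 via $\te$-closure of $\{x,\te(x)\}^+$ is a clean way to organize that). Where you diverge is part~1. The paper derives it from Theorem~\ref{thm:exFWlcm}: the word $u^n$ lies both in $\{u,\te(u)\}^*$ and in $\{x,\te(x)\}^*$, and its length $n|u|$ is a common multiple of $|u|$ and $|x|$, hence at least $\lcm(|u|,|x|)$; the theorem then gives $u,x\in\{t,\te(t)\}^+$ for some $\te$-primitive $t$, and the $\te$-primitivity of $x$ forces $x\in\{t,\te(t)\}$, so $\{t,\te(t)\}=\{x,\te(x)\}$ and $u\in\{x,\te(x)\}^+$. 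You instead invoke the other black box quoted in Section~\ref{sec:pre}, the existence and uniqueness of the $\te$-primitive root from \cite{CKS08}: the root $t$ of $u$ is also a root of $u^n$, while the leading block of $u^n$ exhibits $x$ or $\te(x)$ as a root, so uniqueness gives $t\in\{x,\te(x)\}$. Both routes are one-step appeals to results stated in the preliminaries and cost about the same; yours needs the auxiliary (and correctly supplied) remark that $\te$ preserves $\te$-primitivity to legitimize $\te(x)$ as a root in the leading-$\te(x)$ case, whereas the paper's needs the lcm length check and the observation that a $\te$-primitive word cannot lie in $\{t,\te(t)\}^{\ge 2}$. The paper's choice keeps the lemma within its ``weak defect effect'' theme; your choice is slightly more structural and avoids any length arithmetic in part~1.
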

\begin{proof}
	The first property follows from Theorem~\ref{thm:exFWlcm}, while the others are immediately proved by comparing the length of words.
\end{proof}

As mentioned in the introduction, if a word $w$ is primitive, then the equation $w^2 = ywz$ implies either $y = \lambda$ or $z = \lambda$. 
Since a $\te$-primitive word is primitive, this applies to $\te$-primitive words, too; a $\te$-primitive word $x$ cannot be a proper infix of $x^2$. 
However, due to the informational equivalence between $x$ and $\te(x)$, we should consider equations like $x^2 = y\te(x)z$ as well, and in fact this equation can hold with non-empty $y$ and $z$. 
Nevertheless, we can state an analogous theorem based on the next lemma.

\begin{lemma}[\cite{CKS08}]\label{lem:overlap2}
	Let $x \in \Sigma^+$ be a $\te$-primitive word, and $x_1$, $x_2$, $x_3$, $x_4 \in \{x, \te(x)\}$. 
	If $x_1 x_2 y = z x_3 x_4$ for some non-empty words $y, z \in \Sigma^+$ with $|y|, |z| < |x|$, then $x_2 \neq x_3$. 
\end{lemma}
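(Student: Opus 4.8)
The plan is to argue by contradiction: assume $x_2 = x_3$ and deduce that $x$ is not $\te$-primitive (which is impossible, since $\te$-primitive words are primitive). First I would do the length bookkeeping. Comparing lengths in $x_1x_2y = zx_3x_4$ forces $|y| = |z|$; call this common value $k$, so that $0 < k < |x|$, and set $d = |x| - k$. Reading the equation off position by position then yields a decomposition into a length-$k$ part and a length-$d$ part: writing $z$ for the length-$k$ prefix and $s$ for the length-$d$ suffix of $x_1$, and $p$ for the length-$k$ prefix and $q$ for the length-$d$ suffix of $x_2$, one checks directly that
\[
x_1 = zs,\qquad x_2 = pq,\qquad x_3 = sp,\qquad x_4 = qy,
\]
with $|z| = |p| = |y| = k$ and $|s| = |q| = d$. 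Since $\te(x)$ is again $\te$-primitive and $\{x,\te(x)\}$ is unchanged by swapping its two members, I may assume without loss of generality that the common value is $x_2 = x_3 = x$. In particular $x = pq = sp$, so $p$ is at once a prefix and a suffix of $x$, i.e. a border of length $k$; equivalently, $x$ has period $d$.

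The argument then splits according to whether one of the outer blocks coincides with $x$. Suppose first that $x_1 = x$ or $x_4 = x$. If $x_1 = x$, then $x = zs$ exhibits $s$ as a suffix of $x$, while $x = sp$ exhibits $s$ as a prefix; hence $s$ is a border of length $d$, giving $x$ the period $k$. Symmetrically, if $x_4 = x$, then $x = qy$ and $x = pq$ make $q$ a border of length $d$, again yielding period $k$. Either way $x$ has both periods $k$ and $d = |x| - k$, and since $|x| = k + d \ge k + d - \gcd(k,d)$, the Fine--Wilf theorem gives that $x$ has period $g = \gcd(k,d)$. As $g$ divides $k + d = |x|$ and $g \le \min(k,d) < |x|$, the word $x$ is a proper power, contradicting its primitivity.

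It remains to treat the case $x_1 = x_4 = \te(x)$, which is where the involution genuinely enters and the plain periodicity argument breaks down; I expect this to be the main obstacle. Here $\te(x) = zs$ gives $x = \te(s)\te(z)$, and comparing this with $x = sp$ (equal length-$d$ prefixes, equal length-$k$ suffixes) forces $s = \te(s)$ and $p = \te(z)$. Likewise $\te(x) = qy$ gives $x = \te(y)\te(q)$, which compared with $x = pq$ forces $q = \te(q)$. Thus $q$ and $s$ are non-empty $\te$-palindromes satisfying $pq = sp$ (from $x_2 = x_3$). This is precisely the hypothesis of Proposition~\ref{prop:pali_conjugate}, applied with central word $p$ and palindromes $q, s$, so $p, q, s \in \{t,\te(t)\}^*$ for some non-empty $t$. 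Since $p$ and $q$ are non-empty, $|t| \le |p| = k < |x|$, and $x = pq$ is then a catenation of at least two factors from $\{t,\te(t)\}$ with $|t| < |x|$, contradicting the $\te$-primitivity of $x$. This exhausts all cases and completes the proof.
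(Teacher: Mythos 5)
Your proof is correct, but note that there is no proof in the paper to compare it against: Lemma~\ref{lem:overlap2} is imported from~\cite{CKS08} and stated here as a black box, so what you have done is reconstruct a proof of a cited result from the paper's own toolkit. The reconstruction is sound: the length bookkeeping giving $x_1 = zs$, $x_2 = pq$, $x_3 = sp$, $x_4 = qy$ is exactly right, the reduction to $x_2 = x_3 = x$ is legitimate, and the case split ($x_1 = x$, or $x_4 = x$, or $x_1 = x_4 = \te(x)$) is exhaustive. Two remarks. First, your Fine--Wilf argument in the outer cases is correct but heavier than necessary: if $x_1 = x$ then $x_3 = x$ occurs inside $x_1x_2 = x^2$ at offset $|z|$ with $0 < |z| < |x|$, and if $x_4 = x$ then $x_2 = x$ occurs inside $x_3x_4 = x^2$ at offset $|x| - |z|$; either way a primitive word would be a proper infix of its own square, which is exactly Property~(\ref{item:overlap}) recalled in the introduction, so no period computation is needed there. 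Second, the genuinely non-classical case $x_1 = x_4 = \te(x)$ is handled correctly: the extraction of $s = \te(s)$, $q = \te(q)$ and the application of Proposition~\ref{prop:pali_conjugate} to $pq = sp$ yield $p, q \in \{t, \te(t)\}^+$, hence $x = pq \in \{t, \te(t)\}^{\ge 2}$, contradicting $\te$-primitivity. One caution is in order: this makes the lemma depend on Proposition~\ref{prop:pali_conjugate}, which is attributed to the later paper~\cite{CCKS09}; within this paper that proposition is derived independently of Lemma~\ref{lem:overlap2} (from Proposition~\ref{prop:th-commute} together with the weak defect result of~\cite{CKS08} on the system $uv = \te(uv)$, $vu = \te(vu)$), so no circularity arises in the paper's logical order, but it does mean your route is presumably not the one taken in the original source, which could not have cited that proposition as such.
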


\begin{theorem}\label{thm:overlap3}
	For a $\te$-primitive word $x \in \Sigma^+$, neither $x\te(x)$ nor $\te(x)x$ can be a proper infix of a word in $\{x, \te(x)\}^3$.
\end{theorem}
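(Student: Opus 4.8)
The plan is to argue by contradiction and to reduce a ``straddling'' occurrence to two overlaps of exactly the shape governed by Lemma~\ref{lem:overlap2}. Write the host word as $x_1x_2x_3$ with $x_1,x_2,x_3\in\{x,\te(x)\}$, and write the factor to be excluded as $s=s_1s_2$ with $s_1,s_2\in\{x,\te(x)\}$, where $s\in\{x\te(x),\te(x)x\}$; in either case $\{s_1,s_2\}=\{x,\te(x)\}$ as a set of words. Mirroring the classical statement ``a primitive word is not a proper infix of its square,'' I would show that $x_1x_2x_3=y\,s_1s_2\,z$ forces $y=\lambda$ or $z=\lambda$; so I assume $y,z$ are both non-empty. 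Since $|s_1s_2|=2|x|$ and the host has length $3|x|$, we get $|y|+|z|=|x|$, hence $0<|y|<|x|$. I set $j=|y|$, so $|z|=|x|-j$, and every connector word that will appear has length either $j$ or $|x|-j$, both strictly between $0$ and $|x|$.

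First I would extract two block-aligned overlap equations from this single equation by comparing a prefix and a suffix of the two sides. Taking the prefix of length $2|x|+j$, the right side reads $y\,s_1s_2$, while the left side is $x_1x_2$ followed by the length-$j$ prefix $q_3$ of $x_3$; hence
\[
x_1x_2\,q_3 = y\,s_1s_2 .
\]
Taking instead the suffix of length $3|x|-j$, the right side reads $s_1s_2\,z$, while the left side is the length-$(|x|-j)$ suffix $p_1$ of $x_1$ followed by $x_2x_3$; hence
\[
s_1s_2\,z = p_1\,x_2x_3 .
\]
Both displays have exactly the form $x_1x_2y=zx_3x_4$ of Lemma~\ref{lem:overlap2}: in the first the connectors are $q_3$ and $y$, each non-empty and shorter than $x$; in the second they are $z$ and $p_1$, again non-empty and shorter than $x$.

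Then I would apply Lemma~\ref{lem:overlap2} to each. Reading the first as $x_1x_2\,q_3=y\,s_1s_2$ (so that $x_2$ and $s_1$ are the two ``inner'' block positions) yields $x_2\neq s_1$; reading the second as $s_1s_2\,z=p_1\,x_2x_3$ yields $s_2\neq x_2$. But $x_2\in\{x,\te(x)\}=\{s_1,s_2\}$, so $x_2$ must equal $s_1$ or $s_2$ as a word, contradicting $x_2\neq s_1$ and $x_2\neq s_2$. This contradiction rules out any occurrence with $y,z$ both non-empty, which is the assertion. The argument is uniform in $s\in\{x\te(x),\te(x)x\}$, since it uses only $\{s_1,s_2\}=\{x,\te(x)\}$, and it also covers the degenerate case $x=\te(x)$, where $x_2\neq s_1$ is already absurd.

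I expect the only real care to lie in the bookkeeping that converts the single straddling occurrence into the two clean two-block overlaps, and in verifying that each of the four connectors $y,z,p_1,q_3$ is non-empty and strictly shorter than $x$ so that Lemma~\ref{lem:overlap2} genuinely applies; this is precisely where the hypothesis $0<|y|<|x|$ of a proper interior occurrence is spent. Once the two displayed equations are in place, the conclusion is immediate, so the main (and fairly modest) obstacle is the correct alignment of the prefix and suffix rather than any deeper combinatorial input.
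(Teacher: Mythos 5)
Your proof is correct and takes essentially the same approach as the paper's: both extract the prefix-aligned and suffix-aligned two-block overlap equations from the straddling occurrence and apply Lemma~\ref{lem:overlap2} to each, forcing the contradictory constraints $x_2 \neq s_1$ and $x_2 \neq s_2$ on the middle block. The only cosmetic difference is that you handle $x\te(x)$ and $\te(x)x$ uniformly via $\{s_1,s_2\}=\{x,\te(x)\}$, whereas the paper treats the first case explicitly and dispatches the second ``similarly.''
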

\begin{proof}
	Let $x_1, x_2, x_3 \in \{x, \te(x)\}$ and suppose that $x\te(x)$ is a proper infix of $x_1x_2x_3$. 
	That is to say, there exist words $y, z, y', z' \in \Sigma^+$, $0 < |y|, |z|, |y'|, |z'| < |x|$ such that $zx\te(x) = x_1x_2y$ and $x\te(x)y' = z'x_2x_3$. 
	By Lemma~\ref{lem:overlap2}, the first equation implies that $x_2 \ne x$ and the second that $x_2 \ne \te(x)$, this is in contradiction with $x_2 \in \{x, \te(x)\}$. 
	We prove similarly that $\te(x)x$ cannot be a proper infix of $x_1x_2x_3$.
\end{proof}

This theorem will lead us to two propositions (Propositions~\ref{prop:pali_split} and~\ref{prop:clean_split}), as well as to several other results. 
The main usage of these propositions in this paper is the following ``splitting strategy,'' which shall prove useful in solving ExLS equations in Section~\ref{sec:ExLS}.  
Given ``complicated'' words in $\{x, \te(x)\}^+$ for a $\te$-primitive word $x$, these propositions make it possible to split such words into ``simple'' component words which are still in $\{x, \te(x)\}^+$.  
Then, Lemmas~\ref{lem:rt_rootshare} and~\ref{lem:rt_th-rootshare} are often applicable to subdivide these simple components into smaller units in $\{x, \te(x)\}^+$.

Recall that a primitive word cannot be a proper infix of its square. 
It is hence evident that for a primitive word $w$, if a word $u$ in $w^+$ contains $w$ as its infix like $u = ywz$ for some $y, z \in \Sigma^*$, then $y, z \in w^*$. 
For such $w$, more generally, $v, yvz \in w^+$ implies $y, z \in w^*$. 
This raises a naturally extended question of whether for a $\te$-primitive word $x$, if $v, yvz \in \{x, \te(x)\}^+$, then $y, z \in \{x, \te(x)\}^*$ holds or not. 
Although this is not always the case, we provide some positive cases based on the following lemma, which is a natural consequence of Theorem~\ref{thm:overlap3}. 

\begin{lemma}\label{lem:xtex_infix1}
	Let $x$ be a $\te$-primitive word, and $v \in \Sigma^+$. 
	For $y, z \in \Sigma^*$, either $yx\te(x)z \in \{x, \te(x)\}^*$ or $y\te(x)xz  \in \{x, \te(x)\}^*$ implies $y, z \in \{x, \te(x)\}^*$. 
\end{lemma}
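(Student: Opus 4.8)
The plan is to reduce the statement to a single alignment argument powered by Theorem~\ref{thm:overlap3}. I would prove the claim for $w = yx\te(x)z$; the case $w = y\te(x)xz$ is entirely symmetric, replacing the factor $x\te(x)$ by $\te(x)x$ and invoking the corresponding half of Theorem~\ref{thm:overlap3}. (The hypothesis $v \in \Sigma^+$ plays no role in the conclusion, so it can be ignored.)

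First I would write $w = x_1 x_2 \cdots x_k$ with each $x_i \in \{x, \te(x)\}$, so that every block has length $|x|$ and the occurrence of the length-$2|x|$ factor $x\te(x)$ begins at position $|y|$ (counting from $0$). The heart of the proof is the claim that $|y|$ is a multiple of $|x|$. Once this is known, $y$ is exactly the catenation $x_1 \cdots x_q$ of the first $q := |y|/|x|$ blocks, whence $y \in \{x,\te(x)\}^*$; the factor $x\te(x)$ then occupies precisely the two blocks $x_{q+1}x_{q+2}$, and the remaining suffix $z = x_{q+3} \cdots x_k$ likewise consists of whole blocks, so $z \in \{x,\te(x)\}^*$.

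To establish the claim, I would argue by contradiction, supposing $|y| = q|x| + r$ with $0 < r < |x|$. Then the factor $x\te(x)$ starts strictly inside block $x_{q+1}$, after a non-empty prefix of length $r$, and, since its length is $2|x|$, it covers the suffix of $x_{q+1}$ of length $|x|-r$, all of $x_{q+2}$, and a prefix of $x_{q+3}$ of length $r$, leaving a non-empty suffix of $x_{q+3}$ of length $|x|-r$. A length count, $k|x| = |w| \ge |y| + 2|x| = (q+2)|x| + r$, guarantees $k \ge q+3$, so these three blocks genuinely exist. Thus $x\te(x)$ is a proper infix of $x_{q+1}x_{q+2}x_{q+3} \in \{x,\te(x)\}^3$, contradicting Theorem~\ref{thm:overlap3}; hence $r = 0$.

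The only delicate point is the bookkeeping in this last step: one must verify that an offset $r$ with $0 < r < |x|$ forces $x\te(x)$ to span exactly three consecutive blocks as a \emph{proper} infix, with both the leading prefix and the trailing suffix non-empty, and that enough blocks are present for $x_{q+3}$ to exist. Everything else follows immediately from the common block length $|x|$.
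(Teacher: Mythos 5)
Your proposal is correct and follows essentially the same route as the paper: both factor the word into blocks $x_1\cdots x_k$ and invoke Theorem~\ref{thm:overlap3} to force the occurrence of $x\te(x)$ to align with block boundaries, from which $y, z \in \{x,\te(x)\}^*$ is immediate. The paper states this alignment in one line (``in light of Theorem~\ref{thm:overlap3}, there must exist such $i$ that $y = x_1\cdots x_{i-1}$, $x\te(x) = x_ix_{i+1}$, $z = x_{i+2}\cdots x_n$''), while you supply the offset bookkeeping that justifies it; the substance is identical.
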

\begin{proof}
	We prove that $yx\te(x)z \in \{x, \te(x)\}^*$ implies $y, z \in \{x, \te(x)\}^*$. 
	Let $yx\te(x)z = x_1 \cdots x_n$ for some $n \ge 2$ and $x_1, \ldots, x_n \in \{x, \te(x)\}$. 
	In light of Theorem \ref{thm:overlap3}, there must exist such $i$ that $y = x_1 \cdots x_{i-1}$, $x\te(x) = x_ix_{i+1}$, and $z = x_{i+2} \cdots x_n$. 
\end{proof}

\begin{lemma}\label{lem:xtex_infix2}
	Let $x$ be a $\te$-primitive word, and $v \in \Sigma^+$. 
	If $v, yvz \in \{x, \te(x)\}^*$ for some $y, z \in \Sigma^*$ and either $x\te(x)$ or $\te(x)x$ is an infix of $v$, then $y, z \in \{x, \te(x)\}^*$. 
\end{lemma}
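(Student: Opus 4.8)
The plan is to reduce Lemma~\ref{lem:xtex_infix2} to the already-proved Lemma~\ref{lem:xtex_infix1}. The hypothesis gives us $v, yvz \in \{x,\te(x)\}^*$ together with the fact that $v$ contains $x\te(x)$ or $\te(x)x$ as an infix. The key observation is that since $v$ itself lies in $\{x,\te(x)\}^*$, the occurrence of the "opposite pair" $x\te(x)$ (or $\te(x)x$) inside $v$ is \emph{aligned} with the block decomposition of $v$: by Lemma~\ref{lem:xtex_infix1} applied to $v$, writing $v = y_0\,x\te(x)\,z_0$ forces $y_0, z_0 \in \{x,\te(x)\}^*$, and moreover the pair $x\te(x)$ sits exactly on two consecutive blocks $x_i x_{i+1}$ of the factorization of $v$. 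Thus $v$ contains a genuine transition from $x$ to $\te(x)$ (or $\te(x)$ to $x$) at a block boundary.

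First I would make this alignment explicit inside the larger word $yvz$. Write $yvz = x_1 x_2 \cdots x_n$ with each $x_j \in \{x,\te(x)\}$; this is possible because $yvz \in \{x,\te(x)\}^*$. The subword $v$ appears as a specific factor of this catenation, and inside $v$ there is the pair $x\te(x)$ (say) appearing at a block boundary of $v$'s own factorization. The goal is to show that this boundary is also a block boundary of the factorization $x_1\cdots x_n$ of $yvz$, so that $v$ is a catenation of complete blocks $x_a x_{a+1}\cdots x_b$, which immediately yields $y = x_1\cdots x_{a-1} \in \{x,\te(x)\}^*$ and $z = x_{b+1}\cdots x_n \in \{x,\te(x)\}^*$.

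The main obstacle—and the heart of the argument—is ruling out a \emph{misalignment}: a priori, the boundaries of the copy of $v$ inside $yvz$ need not coincide with the $x_j$ boundaries, so $v$ could start or end in the middle of some block $x_j$. This is exactly the situation Theorem~\ref{thm:overlap3} forbids. Concretely, the occurrence of $x\te(x)$ in $v$ gives an occurrence of $x\te(x)$ inside $yvz$; if the factorization $x_1\cdots x_n$ were shifted relative to $v$, then $x\te(x)$ would be a \emph{proper} infix of some length-three factor $x_{j}x_{j+1}x_{j+2}\in\{x,\te(x)\}^3$ of $yvz$, contradicting Theorem~\ref{thm:overlap3}. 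Hence the occurrence of $x\te(x)$ must be block-aligned in $yvz$ as well, pinning down the shift and forcing the two factorizations of $v$ (the one from $v \in \{x,\te(x)\}^*$ and the one induced by $yvz$) to agree. The case where $\te(x)x$ is the infix of $v$ is handled symmetrically.

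Once the shift is pinned down, the conclusion is mechanical: the forced alignment means $v$ occupies a contiguous run of complete blocks in $x_1\cdots x_n$, so the prefix $y$ and suffix $z$ are each catenations of complete blocks, i.e.\ $y, z \in \{x,\te(x)\}^*$. I would write the argument for the $x\te(x)$ case in full and remark that $\te(x)x$ follows by the same reasoning, mirroring the structure of the proof of Lemma~\ref{lem:xtex_infix1}.
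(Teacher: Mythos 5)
Your proposal is correct and takes essentially the same route as the paper: both first apply Lemma~\ref{lem:xtex_infix1} to $v$ itself to block-align the occurrence of $x\te(x)$, and then force that alignment inside $yvz$ via Theorem~\ref{thm:overlap3} (the paper does this by simply invoking Lemma~\ref{lem:xtex_infix1} a second time on $yvz = yx'\,x\te(x)\,x''z$, whose proof is exactly your misalignment argument), finishing with a length-cancellation step (the paper cites Lemma~\ref{lem:rt_th-rootshare}, you count blocks). The only difference is that you inline the proofs of these two auxiliary steps rather than citing them.
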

\begin{proof}
	Here we consider only the case when $x\te(x)$ is an infix of $v$. 
	Due to Lemma \ref{lem:xtex_infix1}, we can let $v = x'x\te(x)x''$ for some $x', x'' \in \{x, \te(x)\}^*$. 
	Thus, $yvz = yx' x\te(x) x''z \in \{x, \te(x)\}^{\ge 2}$. 
	From this, the same lemma derives $yx', x''z \in \{x, \te(x)\}^*$. 
	Based on Lemma \ref{lem:rt_th-rootshare}, we obtain $y, z \in \{x, \te(x)\}^*$. 
\end{proof}

Lemma~\ref{lem:xtex_infix2} is a generalization of Lemma~\ref{lem:xtex_infix1}, and makes it possible to prove the following two propositions. 

\begin{proposition}
	Let $x$ be a $\te$-primitive word, and $v \in \Sigma^+$. 
	If $v, yvz \in \{x, \te(x)\}^{\ge 2}$ for some $y, z \in \Sigma^*$ and $v$ is primitive, then $y, z \in \{x, \te(x)\}^*$. 
\end{proposition}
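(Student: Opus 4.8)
The plan is to reduce this statement to Lemma~\ref{lem:xtex_infix2}, which already does the heavy lifting: it guarantees that as soon as $v$ contains either $x\te(x)$ or $\te(x)x$ as an infix, the hypotheses $v, yvz \in \{x, \te(x)\}^*$ force $y, z \in \{x, \te(x)\}^*$. Since the proposition assumes the stronger conditions $v, yvz \in \{x, \te(x)\}^{\ge 2}$, and $\{x, \te(x)\}^{\ge 2} \subseteq \{x, \te(x)\}^*$, the only thing that remains to be shown is that a \emph{primitive} word in $\{x, \te(x)\}^{\ge 2}$ necessarily exhibits one of these two patterns.

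First I would fix a decomposition $v = x_1 x_2 \cdots x_k$ with $k \ge 2$ and each $x_i \in \{x, \te(x)\}$, which exists by the hypothesis $v \in \{x, \te(x)\}^{\ge 2}$. The key observation is a dichotomy on this decomposition. If all factors $x_i$ coincide, then either $v = x^k$ or $v = \te(x)^k$ with $k \ge 2$; in both cases $v$ is a proper power of a word and therefore not primitive, contradicting the assumption on $v$. Hence at least two consecutive factors must differ, i.e. there is an index $i$ with $x_i \ne x_{i+1}$, and then the factor $x_i x_{i+1}$ is exactly $x\te(x)$ or $\te(x)x$. This produces the infix required by Lemma~\ref{lem:xtex_infix2}, and applying that lemma gives $y, z \in \{x, \te(x)\}^*$, as desired.

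I do not expect any serious obstacle here, since the combinatorial core is already encapsulated in Lemma~\ref{lem:xtex_infix2} and, further back, in Theorem~\ref{thm:overlap3}. The only point that genuinely deserves care is the use of the primitivity hypothesis: it is precisely what excludes the degenerate decompositions $v = x^k$ and $v = \te(x)^k$, which are exactly the words in $\{x, \te(x)\}^{\ge 2}$ with no alternation of $x$ and $\te(x)$ and on which the splitting strategy cannot be launched. One should also note that the requirement $k \ge 2$ (rather than $k \ge 1$) is what makes an adjacent pair of factors available in the first place, so both the $\ge 2$ assumption and the primitivity of $v$ are used in an essential way.
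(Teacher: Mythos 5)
Your proposal is correct and follows exactly the paper's own argument: decompose $v = x_1 \cdots x_m$ with $m \ge 2$, use primitivity to rule out the case $x_1 = \cdots = x_m$ (which would make $v$ a proper power), obtain an adjacent pair $x_i x_{i+1} \in \{x\te(x), \te(x)x\}$, and invoke Lemma~\ref{lem:xtex_infix2}. The only difference is that you spell out the dichotomy that the paper compresses into a single sentence.
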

\begin{proof}
	Let $v = x_1 \cdots x_m$ for some $m \ge 2$ and $x_1, \ldots, x_m \in \{x, \te(x)\}$. 
	Since $v$ is primitive, there exists $1 \le i \le m$ such that $x_ix_{i+1} \in \{x\te(x), \te(x)x\}$. 
	Now we can employ Lemma \ref{lem:xtex_infix2} to get this result. 
\end{proof}

\begin{proposition}\label{prop:pali_split}
	Let $x$ be a $\te$-primitive word, and $v \in \Sigma^+$. 
	If $v, yvz \in \{x, \te(x)\}^+$ for some $y, z \in \Sigma^*$ and $v$ is a non-empty $\te$-palindrome, then $y, z \in \{x, \te(x)\}^*$. 
\end{proposition}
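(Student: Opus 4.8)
The plan is to dichotomize on whether $v$ contains $x\te(x)$ or $\te(x)x$ as an infix, because the affirmative case is exactly what Lemma~\ref{lem:xtex_infix2} resolves. First I would note that $v, yvz \in \{x, \te(x)\}^+ \subseteq \{x, \te(x)\}^*$, so if either $x\te(x)$ or $\te(x)x$ occurs as an infix of $v$, Lemma~\ref{lem:xtex_infix2} applies directly and delivers $y, z \in \{x, \te(x)\}^*$, which is the conclusion. Thus all the work is concentrated in the complementary case where neither factor occurs in $v$, and the palindrome hypothesis has to be what rules this configuration out (or collapses it to something trivial).

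In that complementary case I would write $v = x_1 \cdots x_m$ with each $x_i \in \{x, \te(x)\}$ and $m \ge 1$. For every $i < m$ the factor $x_i x_{i+1}$ is an infix of $v$ and, by assumption, equals neither $x\te(x)$ nor $\te(x)x$; the only remaining possibilities are $xx$ and $\te(x)\te(x)$, so $x_i = x_{i+1}$. Hence all blocks coincide and $v = x^m$ or $v = \te(x)^m$. Now I would invoke the $\te$-palindrome hypothesis: since $|x| = |\te(x)|$, comparing the prefixes of length $|x|$ in $v = \te(v)$ — which reads $\te(x)^m$ against $x^m$ (up to swapping the two cases) — forces $x = \te(x)$. (Equivalently, $x^m = \te(x)^m$ together with the uniqueness of the primitive root, both $x$ and $\te(x)$ being primitive, yields $x = \te(x)$.)

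Once $x = \te(x)$ is in hand, $\{x, \te(x)\}^{\pm}$ collapses to $x^{\pm}$, and the no-infix assumption forces $m = 1$: otherwise $x\te(x) = x^2$ would be an infix of $v = x^m$, contradicting the case we are in. Therefore $v = x$ with $x$ primitive, and $yvz = yxz \in x^+$. At this point the statement reduces to the classical fact recalled earlier in the text — that for a primitive word $w$, $w, ywz \in w^+$ implies $y, z \in w^*$ (itself a consequence of Lemma~\ref{lem:rt_rootshare}) — applied with $w = x$, which gives $y, z \in x^* = \{x, \te(x)\}^*$ and completes this branch.

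The hard part will be making the reduction watertight rather than the individual computations: one must argue that the \emph{only} $\te$-palindromes in $\{x, \te(x)\}^+$ that avoid both $x\te(x)$ and $\te(x)x$ are powers of a self-$\te$-palindromic $x$, which is precisely where the palindrome hypothesis earns its keep by sending the ``pure power'' configuration back to the ordinary primitive-word setting. A secondary point of care is the alignment step: the claim that a forbidden factor, if present, must sit on a block boundary rests on Theorem~\ref{thm:overlap3} through Lemma~\ref{lem:xtex_infix1}, so it should be invoked at the level of the block decomposition $x_1\cdots x_m$ rather than on arbitrary occurrences.
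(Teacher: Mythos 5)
Your proof is correct and takes essentially the same route as the paper's: both arguments hinge on Lemma~\ref{lem:xtex_infix2}, and both handle the degenerate case by forcing $x = \te(x)$ so that the classical fact about primitive words ($yxz \in x^+$ implies $y, z \in x^*$) finishes the job. The only difference is bookkeeping: the paper splits on the parity of the number of blocks in $v$ (palindromy places a pair $x\te(x)$ or $\te(x)x$ at the center when that number is even, and makes the middle block a $\te$-palindrome when it is odd), whereas you split on whether such a mixed pair occurs at all and use palindromy only to collapse the pure-power case.
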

\begin{proof}
	Let $v = x_1 \cdots x_n$ for some $n \ge 1$ and $x_1, \ldots, x_n \in \{x, \te(x)\}$. 
	If $n$ is odd, then $v = \te(v)$ implies $x_{(n+1)/2} = \te(x_{(n+1)/2})$ and this means $x = \te(x)$. 
	Thus we have $v, yvz \in x^+$, and hence $y, z \in x^*$. 
	If $n$ is even, then $x_{n/2}x_{n/2+1} \in \{x\te(x), \te(x)x\}$ so that $y, z \in \{x, \te(x)\}^*$ due to Lemma~\ref{lem:xtex_infix2}. 
\end{proof}

>From now on, we address the following question: 
``for a $\te$-primitive word $x$ and two words $u, v \in \Sigma^*$ such that $uv \in \{x, \te(x)\}^+$, under what conditions on $u, v$, we can say $u, v \in \{x, \te(x)\}^*$?''. 
Here we provide several such conditions. 
Among them is Proposition~\ref{prop:clean_split}, which serves for the splitting strategy. 
As its corollary, we will obtain relationships between primitive words and $\te$-primitive words (Corollaries~\ref{cor:prime-te-prime} and \ref{cor:pq2_te-prime}). 

\begin{proposition}\label{prop:pref_suff_split}
	Let $x$ be a $\te$-primitive word, $u \in \Suff(\{x, \te(x)\}^+)$, and $v \in \Pref(\{x, \te(x)\}^+)$. 
	If $uv = x_1 \cdots x_m$ for some integer $m \ge 2$ and $x_1, \ldots, x_m \in \{x, \te(x)\}$, then either $u, v \in \{x, \te(x)\}^+$ or $x_1 = \cdots = x_m$. 
\end{proposition}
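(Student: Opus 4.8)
The plan is to locate the ``seam'' between $u$ and $v$ in the factorization $uv = x_1 \cdots x_m$, that is, to determine where the position $|u|$ falls relative to the block boundaries at the multiples of $|x|$. If $|u|$ is a multiple of $|x|$, then $u = x_1 \cdots x_k$ and $v = x_{k+1}\cdots x_m$ for some $k$, so both lie in $\{x,\te(x)\}^+$ and we are in the first alternative (the degenerate subcases $u=\lambda$ or $v=\lambda$ being immediate, I assume $u,v \in \Sigma^+$). The substance of the proof is therefore the complementary case, where the seam falls strictly inside a block: there are a $k$ and a factorization $x_{k+1} = ps$ with $0 < |p|,|s| < |x|$ such that $u = x_1 \cdots x_k\, p$ and $v = s\, x_{k+2}\cdots x_m$. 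In this case I must establish the second alternative, $x_1 = \cdots = x_m$.

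The first key step is to extract a \emph{second} tiling of $uv$ into blocks, shifted by $|p|$. Since $u \in \Suff(\{x,\te(x)\}^+)$, I write $u$ as a suffix of a block word $C = c_1 \cdots c_N$; because $|u| = k|x| + |p|$ with $0<|p|<|x|$, the last $k$ blocks $c_{N-k+1},\ldots,c_N$ of $C$ occupy exactly the length-$|x|$ windows of $uv$ beginning at positions $|p|,\, |p|+|x|,\, \ldots,\, |p|+(k-1)|x|$, while $\Suff_{|p|}(c_{N-k}) = \Pref_{|p|}(x_1)$, so the block $c_{N-k}$ overlaps $uv$ in its first $|p|$ letters. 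Symmetrically, $v \in \Pref(\{x,\te(x)\}^+)$ supplies blocks filling the windows beginning at $|p|+k|x|,\ldots,|p|+(m-2)|x|$, together with one further block whose prefix of length $|s|$ equals $\Suff_{|s|}(x_m)$. Collecting these, I obtain blocks $w_{-1}, w_0, \ldots, w_{m-2}, w_{m-1} \in \{x,\te(x)\}$ that tile the interval $[\,|p|-|x|,\ |uv|-|s|+|x|\,)$, which strictly contains all of $uv$.

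The second key step applies Theorem~\ref{thm:overlap3}. For each $i$ with $1 \le i \le m-1$, the offset-$0$ factor $x_i x_{i+1}$ occupies positions $[(i-1)|x|,(i+1)|x|)$ and is hence a \emph{proper} infix of the three consecutive blocks $w_{i-2}w_{i-1}w_i \in \{x,\te(x)\}^3$, properly so on both sides since $0<|p|<|x|$. By Theorem~\ref{thm:overlap3}, $x_i x_{i+1}$ can be neither $x\te(x)$ nor $\te(x)x$, so $x_i = x_{i+1}$. Ranging $i$ over $1,\ldots,m-1$ yields $x_1 = \cdots = x_m$, which closes the case and the proposition.

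I expect the main obstacle to be the first key step: making rigorous that the hypotheses $u \in \Suff(\{x,\te(x)\}^+)$ and $v \in \Pref(\{x,\te(x)\}^+)$ really do fill every offset-$|p|$ window with a genuine block, and—crucially—that one may borrow the two \emph{flanking} blocks $w_{-1}$ and $w_{m-1}$ from the witnessing container words. These flanking blocks are precisely what make the enclosing triples $w_{i-2}w_{i-1}w_i$ available at the extreme indices $i=1$ and $i=m-1$, and they are the reason the argument runs uniformly even for small $m$ (for instance $m=2$), where $uv$ alone is too short to exhibit three consecutive shifted blocks. Once this bookkeeping is settled, the appeal to Theorem~\ref{thm:overlap3} is immediate, and no Fine--Wilf type periodicity argument is required.
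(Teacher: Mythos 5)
Your proof is correct and follows essentially the same route as the paper's: both complete the partial blocks at the two ends of $uv$ (using the witnesses of $u \in \Suff(\{x, \te(x)\}^+)$ and $v \in \Pref(\{x, \te(x)\}^+)$) so that $uv$ sits at a proper offset $0 < |p| < |x|$ inside a word of $\{x, \te(x)\}^{m+1}$, and then rule out every unequal adjacent pair $x_i x_{i+1} \in \{x\te(x), \te(x)x\}$ by Theorem~\ref{thm:overlap3}. The paper's write-up is merely terser; your explicit window bookkeeping, including the two flanking blocks $w_{-1}$ and $w_{m-1}$, is the same construction spelled out in detail.
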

\begin{proof}
	Let us prove that when $u, v \not\in \{x, \te(x)\}^+$, $x_1 = \cdots = x_m$ must hold. 
	Let $u = z_s' x_{i-1}' \cdots x_1'$ for some $i \ge 1$, $x_i', \ldots, x_1' \in \{x, \te(x)\}$, and some non-empty words $z_p', z_s' \in \Sigma^+$ such that $z_p'z_s' = x'_i$. 
	We can also let $v = x_1'' \cdots x_{j-1}'' z_p''$ for some $j \ge 1$, $x_1'', \ldots, x_j'' \in \{x, \te(x)\}$, and $z_p'', z_s'' \in \Sigma^+$ such that $z_p''z_s'' = x_j$.
	Now we have $x_i' \cdots x_1' x_1'' \cdots x_j'' = z_p' uv z_s'' = z_p' x_1 \cdots x_m z_s''$. 
	Since $0 < |z_p'| < |x|$, Theorem~\ref{thm:overlap3} implies $x_1 = \cdots = x_m$. 
\end{proof}

\begin{corollary}
	Let $x$ be a $\te$-primitive word, and $u \in \Suff(\{x, \te(x)\}^+)$, $v \in \Pref(\{x, \te(x)\}^+)$. 
	If $uv$ is in $\{x, \te(x)\}^{\ge 2}$ and primitive, then $u, v \in \{x, \te(x)\}^+$. 
\end{corollary}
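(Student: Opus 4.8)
The plan is to read off this corollary directly from Proposition~\ref{prop:pref_suff_split}, which was designed to produce exactly the dichotomy we need. The hypotheses match up perfectly: we are given $u \in \Suff(\{x, \te(x)\}^+)$ and $v \in \Pref(\{x, \te(x)\}^+)$, and the assumption $uv \in \{x, \te(x)\}^{\ge 2}$ means precisely that $uv = x_1 \cdots x_m$ for some $m \ge 2$ and $x_1, \ldots, x_m \in \{x, \te(x)\}$. So the proposition applies verbatim.

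First I would invoke Proposition~\ref{prop:pref_suff_split} to conclude that one of two alternatives holds: either $u, v \in \{x, \te(x)\}^+$, which is exactly the desired conclusion, or else $x_1 = \cdots = x_m$. The entire argument then reduces to ruling out the second alternative using the additional hypothesis that $uv$ is primitive.

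The elimination step is the crux, and it is a one-line observation rather than a genuine obstacle. If $x_1 = \cdots = x_m$, then $uv = x_1^m$ with $x_1 \in \{x, \te(x)\}$ and $m \ge 2$. This exhibits $uv$ as a proper (non-trivial) power of the word $x_1$, which directly contradicts the assumption that $uv$ is primitive (by definition, $uv = x_1^m$ with $uv$ primitive would force $m = 1$). Hence the second alternative cannot occur, and we are left with $u, v \in \{x, \te(x)\}^+$, completing the proof.

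I expect no real difficulty here, since all the combinatorial weight has already been carried by Theorem~\ref{thm:overlap3} and Proposition~\ref{prop:pref_suff_split}; the primitivity hypothesis is used only to discard the ``uniform'' branch $x_1 = \cdots = x_m$. The only point worth stating cleanly is that $m \ge 2$ together with uniformity genuinely yields a proper power, so that the contradiction with primitivity is immediate.
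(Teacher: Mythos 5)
Your proof is correct and is exactly the argument the paper intends (the corollary is stated without proof precisely because it follows immediately from Proposition~\ref{prop:pref_suff_split}): apply the dichotomy, then note that the uniform branch $x_1 = \cdots = x_m$ with $m \ge 2$ would write $uv$ as a proper power, contradicting primitivity. No gaps.
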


Proposition~\ref{prop:pref_suff_split} gives the following two propositions which play an important role in investigating the ExLS equation. 

\begin{proposition}\label{prop:conjugacy}
	Let $x$ be a $\te$-primitive word, and $u, v \in \Sigma^+$. 
	If $uv, vu \in \{x, \te(x)\}^n$ for some $n \ge 2$, then one of the following statements holds:
	\begin{arabiclist}
	\item 	$u, v \in \{x, \te(x)\}^+$; 
	\item	$uv = x^n$ and $vu = \te(x)^n$; 
	\item	$uv = \te(x)^n$ and $vu = x^n$. 
	\end{arabiclist}
\end{proposition}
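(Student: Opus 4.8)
The plan is to reduce the whole statement to two applications of Proposition~\ref{prop:pref_suff_split}, followed by a short commutativity argument. First I would record the positional facts about $u$ and $v$: since $u$ is a prefix of $uv$ and a suffix of $vu$, both of which lie in $\{x, \te(x)\}^+$, we have $u \in \Pref(\{x, \te(x)\}^+) \cap \Suff(\{x, \te(x)\}^+)$, and symmetrically $v$ lies in the same intersection. This is exactly the membership required to feed the hypotheses of Proposition~\ref{prop:pref_suff_split}.

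Next I would apply Proposition~\ref{prop:pref_suff_split} to the factorization $uv = x_1 \cdots x_n$, taking $u \in \Suff(\{x, \te(x)\}^+)$ and $v \in \Pref(\{x, \te(x)\}^+)$. This yields the dichotomy: either $u, v \in \{x, \te(x)\}^+$, which is conclusion~(1), or $x_1 = \cdots = x_n$, i.e. $uv \in \{x^n, \te(x)^n\}$. Applying the same proposition to $vu = y_1 \cdots y_n$, now with the roles of $u$ and $v$ interchanged, gives the identical dichotomy for $vu$. So, unless conclusion~(1) already holds from one of these applications, we are left with $uv \in \{x^n, \te(x)^n\}$ and $vu \in \{x^n, \te(x)^n\}$ simultaneously.

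This leaves four combinations. The two ``mixed'' ones are precisely conclusions~(2) and~(3), so it only remains to exclude the two ``matched'' combinations $uv = vu = x^n$ and $uv = vu = \te(x)^n$. In either of these $u$ and $v$ commute, so by the commutation result recalled in Section~\ref{sec:pre} they share a primitive root $r = \rho(u) = \rho(v)$; hence $u, v \in r^+$ and $uv = r^k$ for some $k \ge 2$. Since a $\te$-primitive word is primitive, $x$ (resp. $\te(x)$) is its own primitive root, so $r = \rho(r^k) = \rho(uv) = x$ (resp. $\te(x)$), whence $u, v \in \{x, \te(x)\}^+$, returning us to conclusion~(1). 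Thus the matched combinations collapse into~(1), and the three listed alternatives exhaust all possibilities.

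I expect no serious obstacle here: the heavy lifting is done by Proposition~\ref{prop:pref_suff_split}, and the rest is bookkeeping. The two points that need care are keeping straight which of $u, v$ plays the prefix role and which the suffix role in each of the two applications, and ensuring the commutation step correctly pins the common primitive root down to $x$ (or $\te(x)$) by invoking the primitivity of $\te$-primitive words.
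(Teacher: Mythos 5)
Your proof is correct, but its second half takes a genuinely different route from the paper. The paper applies Proposition~\ref{prop:pref_suff_split} only once, to $uv$; in the case $uv = x^n$ with $u, v \notin \{x, \te(x)\}^+$ it writes $u = x^i x_p$, $v = x_s x^{n-i-1}$ with $x = x_p x_s$ and $x_p, x_s$ non-empty, observes that $x_p\,vu\,x_s = x^{n+1}$, and then forces every block of the factorization $vu \in \{x, \te(x)\}^n$ to equal $\te(x)$, using Theorem~\ref{thm:overlap3} together with the fact that the primitive word $x$ cannot be a proper infix of $x^2$; this yields $vu = \te(x)^n$ directly. You instead invoke Proposition~\ref{prop:pref_suff_split} a second time, symmetrically on $vu$, which leaves four combinations, and you dispose of the two matched ones ($uv = vu$) via the classical commutation defect $uv = vu \Rightarrow \rho(u) = \rho(v)$, pinning the common root to $x$ (or $\te(x)$) by uniqueness of primitive roots. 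Your variant trades the paper's overlap/infix argument for elementary commutativity, which is arguably cleaner and makes the symmetry between $uv$ and $vu$ explicit; the paper's version gets by with a single application of the proposition and exhibits the structural reason the two powers must have opposite ``sign,'' namely how $vu$ sits inside $x^{n+1}$. One small point you should make explicit: in the matched case $uv = vu = \te(x)^n$ you need $\te(x)$ to be primitive, which does not literally follow from applying ``a $\te$-primitive word is primitive'' to $x$; it is, however, immediate, since $\te(x) = y^k$ would give $x = \te(y)^k$.
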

\begin{proof}
	We have $v \in \Pref(\{x, \te(x)\}^+)$ and $u \in \Suff(\{x, \te(x)\}^+)$ because $vu \in \{x, \te(x)\}^n$. 
	Proposition~\ref{prop:pref_suff_split} implies that either the first property holds or $uv \in \{x^n, \te(x)^n\}$. 
	Here we consider only the case when $uv = x^n$. 
	Then $u = x^i x_p$ and $v = x_s x^{n-i-1}$ for some $1 \le i \le n$ and $x_p, x_s \in \Sigma^+$ with $x = x_p x_s$. 
	Thus, we have $x_p vux_s = x^{n+1}$, from which can deduce $vu = \te(x)^n$ with the aid of Theorem~\ref{thm:overlap3} and the fact that $x$ cannot be a proper infix of its square. 
\end{proof}

\begin{proposition}\label{prop:clean_split}
	Let $x \in \Sigma^+$ be a $\te$-primitive word, and $p, q \in \Sigma^+$ be $\te$-palindromes. 
	If $pq$ is primitive, and $pq = x_1 \cdots x_n$ for some $n \ge 2$ and $x_1, \ldots, x_n \in \{x, \te(x)\}$, 
	then there are integers $k, m \ge 1$ such that $n = 2m$, $p = x_1 \cdots x_{2k}$, and $q = x_{2k+1} \cdots x_{2m}$. 
\end{proposition}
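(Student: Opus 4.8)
The plan is to establish two facts in turn: first, that the boundary between $p$ and $q$ falls exactly on a boundary between consecutive blocks $x_i$, so that $p, q \in \{x, \te(x)\}^+$; and second, that each of $p$ and $q$ is composed of an \emph{even} number of these blocks. Throughout I will use that every block $x_i$ has length $|x|$, so $|pq| = n|x|$. As a preliminary observation, $x \neq \te(x)$: otherwise $\{x, \te(x)\} = \{x\}$ and $pq = x^n$ with $n \ge 2$, contradicting the primitivity of $pq$. In particular, since $x$ and $\te(x)$ are distinct words of equal length, the factorization of any word of $\{x, \te(x)\}^*$ into blocks is unique.

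For the first fact, I would exploit the palindromicity of $p$ and $q$ together with Proposition~\ref{prop:conjugacy}. Since $p = \te(p)$ and $q = \te(q)$, we have $qp = \te(q)\te(p) = \te(pq)$; applying $\te$ to $pq = x_1 \cdots x_n$ block by block gives $qp = \te(x_n) \cdots \te(x_1) \in \{x, \te(x)\}^n$, because $\te$ maps $\{x, \te(x)\}$ to itself. Thus both $pq$ and $qp$ lie in $\{x, \te(x)\}^n$, and Proposition~\ref{prop:conjugacy} (with $u = p$, $v = q$) yields one of three outcomes. The two degenerate outcomes force $pq \in \{x^n, \te(x)^n\}$, which is impossible since $pq$ is primitive and $n \ge 2$. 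Hence the first outcome holds, $p, q \in \{x, \te(x)\}^+$. By the uniqueness of the block factorization, $p$ must consist of the first $j$ blocks and $q$ of the remaining $n - j$, i.e.\ $p = x_1 \cdots x_j$ and $q = x_{j+1} \cdots x_n$ for some $1 \le j < n$.

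For the second fact, I would read off the palindrome condition block by block. From $p = \te(p)$ and $|x_i| = |x|$ we get $x_i = \te(x_{j+1-i})$ for every $i$; if $j$ were odd, the central block $x_{(j+1)/2}$ would satisfy $x_{(j+1)/2} = \te(x_{(j+1)/2})$, forcing $x = \te(x)$ and contradicting the preliminary observation. Hence $j$ is even, say $j = 2k$ with $k \ge 1$. The identical argument applied to $q = \te(q)$ shows that $n - j$ is even; therefore $n$ is even, say $n = 2m$, and $q = x_{2k+1} \cdots x_{2m}$, which is the claim.

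I expect the main obstacle to be the first fact---ruling out a split whose boundary falls strictly inside some block $x_{a+1}$. A length count alone only narrows the overflow to $0$ or $|x|$, so a genuine combinatorial argument is needed; the device of rewriting $qp$ as $\te(pq)$ to place \emph{both} conjugates in $\{x, \te(x)\}^n$ and then invoking Proposition~\ref{prop:conjugacy} is what makes this step short, with the primitivity of $pq$ discarding the two unwanted cases. The parity step is then routine.
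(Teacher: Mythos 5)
Your proof is correct, but it routes through a different key lemma than the paper's. The paper uses the palindromicity of $p$ and $q$ to observe that $p$, being a prefix of $x_1 \cdots x_n$, is also a \emph{suffix} of a word of $\{x, \te(x)\}^+$ (and dually $q$ is also a prefix of one), and then applies Proposition~\ref{prop:pref_suff_split} directly to $pq$: either $p, q \in \{x, \te(x)\}^+$ or $x_1 = \cdots = x_n$, the latter being excluded by the primitivity of $pq$. You instead use palindromicity to write $qp = \te(q)\te(p) = \te(pq) \in \{x, \te(x)\}^n$ and invoke Proposition~\ref{prop:conjugacy}, with primitivity excluding its two degenerate alternatives $pq \in \{x^n, \te(x)^n\}$. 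The two routes are close cousins rather than genuinely independent: in the paper, Proposition~\ref{prop:conjugacy} is itself deduced from Proposition~\ref{prop:pref_suff_split}, and both rest ultimately on Theorem~\ref{thm:overlap3}, so your argument is essentially the paper's with one extra level of indirection. What your version buys is a tidier setup---the single identity $qp = \te(pq)$ replaces the prefix/suffix bookkeeping, and the case analysis is inherited wholesale from Proposition~\ref{prop:conjugacy}; what the paper's version buys is a shorter dependency chain. Your parity argument (an odd number of blocks in a $\te$-palindromic factor forces its central block, hence $x$ itself, to be a $\te$-palindrome, so that $pq = x^n$ contradicts primitivity) is identical to the paper's, and your preliminary observation that $x \neq \te(x)$ is just a harmless relocation of where that same contradiction is extracted.
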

\begin{proof}
	It is clear from $pq = x_1 \cdots x_n$ that $p \in \Pref(\{x, \te(x)\}^+)$ and $q \in \Suff(\{x, \te(x)\}^+)$. 
	Since both $p$ and $q$ are $\te$-palindromes, these mean that $p \in \Suff(\{x, \te(x)\}^+)$ and $q \in \Pref(\{x, \te(x)\}^+)$. 
	Hence we can apply Proposition~\ref{prop:pref_suff_split} to obtain $p = x_1 \cdots x_{i}$ and $q = x_{i+1} \cdots x_{n}$ for some $i$ (since $pq$ is primitive, the case $x_1 = \cdots = x_n$ is impossible).

	The integer $i$ has to be even ($i = 2k$ for some $k \ge 1$). 
	Suppose not, then $p$ being a $\te$-palindrome implies that $x_{(i+1)/2}$ is a $\te$-palindrome, and hence so is $x$. 
	As a result, $pq = x^n$ but this contradicts the assumption that $pq$ is primitive. 
	Similarly, $n-i$ proves to be even, too, and we obtain $n=2m$.
\end{proof}

The next two corollaries follow from Proposition~\ref{prop:clean_split}. 
The first one provides us with a sufficient condition for a primitive word that is a catenation of two non-empty $\te$-palindromes to be $\te$-primitive. 

\begin{corollary}\label{cor:prime-te-prime}
	For non-empty $\te$-palindromes $p, q$, if $pq$ is primitive but there does not exist any $x$ such that $p, q \in \{x, \te(x)\}^+$, then $pq$ is $\te$-primitive. 
\end{corollary}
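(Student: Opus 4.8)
The plan is to argue by contraposition, reducing the statement to a single application of Proposition~\ref{prop:clean_split}. Suppose that $pq$ is \emph{not} $\te$-primitive. Let $x$ denote the $\te$-primitive root of $pq$, which exists and is unique by the result of~\cite{CKS08} recalled in Section~\ref{sec:pre}. Then $x$ is $\te$-primitive and, since $pq$ fails to be $\te$-primitive, the factorization $pq \in x\{x, \te(x)\}^*$ is non-trivial, so that $pq = x_1 \cdots x_n$ for some $n \ge 2$ and $x_1, \ldots, x_n \in \{x, \te(x)\}$.

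Next I would simply check that the hypotheses of Proposition~\ref{prop:clean_split} are satisfied: $p$ and $q$ are non-empty $\te$-palindromes by assumption, $pq$ is primitive by assumption, and we have just produced the required factorization $pq = x_1 \cdots x_n$ with $n \ge 2$ over $\{x, \te(x)\}$ for a $\te$-primitive $x$. Applying that proposition then yields integers $k, m \ge 1$ with $n = 2m$, $p = x_1 \cdots x_{2k}$, and $q = x_{2k+1} \cdots x_{2m}$. In particular both $p$ and $q$ lie in $\{x, \te(x)\}^+$, which directly contradicts the standing assumption that no word $x$ satisfies $p, q \in \{x, \te(x)\}^+$. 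Therefore $pq$ must be $\te$-primitive.

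I do not expect any genuine obstacle, since all of the combinatorial difficulty has already been packaged into Proposition~\ref{prop:clean_split}; the corollary is essentially its contrapositive. The one step meriting a little care is the very first move: the failure of $\te$-primitivity must be turned into a factorization over $\{x, \te(x)\}$ whose base word $x$ is itself $\te$-primitive, not merely over $\{v, \te(v)\}$ for some arbitrary witness $v$. Passing to the $\te$-primitive root is precisely what secures this, and it is indispensable because Proposition~\ref{prop:clean_split} requires its base word to be $\te$-primitive in order to guarantee that $p$ and $q$ split cleanly at an even position.
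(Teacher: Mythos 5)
Your proof is correct and matches the paper's intended argument: the paper states that this corollary ``follows from Proposition~\ref{prop:clean_split}'' without further detail, and the contrapositive you give---passing to the $\te$-primitive root of $pq$ to get a factorization $pq = x_1 \cdots x_n$ with $n \ge 2$ over a $\te$-primitive base, then applying Proposition~\ref{prop:clean_split} to split $p$ and $q$ cleanly---is exactly that implicit derivation. Your observation that one must use the $\te$-primitive root rather than an arbitrary witness of non-$\te$-primitivity is the right point of care, and your handling of it is sound.
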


\begin{corollary}\label{cor:pq2_te-prime}
	Let $p, q$ be non-empty $\te$-palindromes such that $pq$ is primitive. 
	Then some word in $\{p, q\}^{+}$ is $\te$-primitive if and only if $pq$ is $\te$-primitive. 
\end{corollary}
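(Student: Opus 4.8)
The plan is to observe that the backward implication is immediate and that the forward implication is exactly where Corollary~\ref{cor:prime-te-prime} can be brought to bear. Indeed, if $pq$ is $\te$-primitive then, since $pq = p \cdot q$ is itself a word in $\{p, q\}^+$, the word $pq$ already witnesses that some word in $\{p, q\}^+$ is $\te$-primitive; this settles the ``if'' direction with no further work. For the ``only if'' direction I would argue contrapositively, reducing it to the hypothesis of Corollary~\ref{cor:prime-te-prime}: since $pq$ is primitive by assumption, that corollary tells us that $pq$ is $\te$-primitive as soon as we know there is no word $x$ with $p, q \in \{x, \te(x)\}^+$. Thus it suffices to prove that if such an $x$ \emph{does} exist, then no word in $\{p, q\}^+$ is $\te$-primitive.

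So assume $p, q \in \{x, \te(x)\}^+$ for some $x$. Replacing $x$ by its $\te$-primitive root $t$ if necessary (which keeps both $p$ and $q$ in $\{t, \te(t)\}^+$, since $x, \te(x) \in \{t, \te(t)\}^+$), I may assume $x$ is $\te$-primitive. I would first dispose of the degenerate case $x = \te(x)$: here $\{x, \te(x)\}^+ = x^+$, so $pq \in x^+$ forces $pq = x^j$ with $j \ge 2$, contradicting the primitivity of $pq$. Hence $x \ne \te(x)$. Now the key observation is that, in this case, a non-empty $\te$-palindrome cannot occupy a single $\{x, \te(x)\}$-block: if $p = x$ then $p = \te(p) = \te(x) \ne x$, and if $p = \te(x)$ then $p = \te(p) = x \ne \te(x)$, both absurd; the same holds for $q$. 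Consequently $p, q \notin \{x, \te(x)\}$, so each of $p, q$ is a catenation of at least two blocks from $\{x, \te(x)\}$, giving $|p|, |q| \ge 2|x|$.

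Finally, take any $w \in \{p, q\}^+$. Writing $w$ as a catenation of copies of $p$ and $q$, each of which lies in $\{x, \te(x)\}^+$, gives $w \in \{x, \te(x)\}^+$; and since at least one factor $p$ or $q$ occurs, $|w| \ge 2|x|$, so $w \in \{x, \te(x)\}^n$ with $n = |w|/|x| \ge 2$. By the definition of $\te$-primitivity this means $w$ is not $\te$-primitive. This proves the contrapositive, and combining it with Corollary~\ref{cor:prime-te-prime} yields the ``only if'' direction. I expect the main obstacle to be precisely this reduction step — in particular, correctly handling the degenerate case $x = \te(x)$ and pinning down that a $\te$-palindrome cannot occupy a single $\{x, \te(x)\}$-block once $x \ne \te(x)$ — since everything afterwards is a routine length count.
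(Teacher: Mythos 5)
Your proof is correct, but it takes a genuinely different route from the paper's. The paper settles the forward direction in one line: the contrapositive ($pq$ not $\te$-primitive implies no word of $\{p,q\}^+$ is $\te$-primitive) is read off directly from Proposition~\ref{prop:clean_split}, which---applied to the $\te$-primitive root $x$ of $pq$---splits $p$ and $q$ individually into an even number (hence at least two) of whole blocks from $\{x,\te(x)\}$, so that every word of $\{p,q\}^+$ lies in $\{x,\te(x)\}^{\ge 2}$ and is therefore not $\te$-primitive. You instead case-split on whether a common base $x$ with $p, q \in \{x,\te(x)\}^+$ exists: the negative case is settled by Corollary~\ref{cor:prime-te-prime}, and in the positive case you recover the key fact that each of $p$, $q$ occupies at least two blocks not from Proposition~\ref{prop:clean_split} but from an elementary observation---after normalizing $x$ to its $\te$-primitive root, the degenerate case $x=\te(x)$ is excluded by the primitivity of $pq$, and once $x \ne \te(x)$ a non-empty $\te$-palindrome cannot equal a single block $x$ or $\te(x)$. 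Your route buys independence from the full strength of Proposition~\ref{prop:clean_split}: it uses only the existence statement in Corollary~\ref{cor:prime-te-prime} plus self-contained palindrome and length-counting arguments. The paper's route buys brevity, since the corollary is immediate once the proposition is available. Note that within the paper Corollary~\ref{cor:prime-te-prime} is itself deduced from Proposition~\ref{prop:clean_split}, so both arguments ultimately rest on the same foundation; yours shows that the weaker corollary already suffices.
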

\begin{proof}
	The converse implication is trivial because $pq \in \{p, q\}^+$. 
	The direct implication can be proved by considering its contrapositive, which is immediately given by Proposition~\ref{prop:clean_split}. 
\end{proof}

Note that in the statement of Corollary \ref{cor:pq2_te-prime} we cannot replace the quantifier ``some'' with ``all''. 
A trivial example is $(pq)^2 \in \{p, q\}^+$, which is not even primitive. 
We can also provide a non-trivial example as follows: 

\begin{example}
	Let $\te$ be the mirror image over $\{a, b\}^*$, $p = a$, and $q = baaab$. 
	It is clear that $pq = abaaab$ is $\te$-primitive. 
	On the other hand, $qppp = (baaa)^2 \in \{p, q\}^+$ is not even primitive. 
\end{example}

Corollary~\ref{cor:pq2_te-prime} gives a further corollary about the case in which a word obtained from a $\te$-primitive word by cyclic permutation remains $\te$-primitive.

\begin{corollary}\label{cor:conj_pal}
	For two non-empty $\te$-palindromes $p, q$, if $pq$ is $\te$-primitive, then $qp$ is $\te$-primitive. 
\end{corollary}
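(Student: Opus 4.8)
The plan is to derive this directly from Corollary~\ref{cor:pq2_te-prime} by applying it to \emph{both} orderings of the pair $\{p, q\}$. First I would observe that $qp$ is a cyclic permutation (conjugate) of $pq$: since $pq$ is $\te$-primitive it is in particular primitive, and primitivity is preserved under cyclic permutation (Property~(\ref{item:cycle}) recalled in the introduction), so $qp$ is primitive as well. Consequently the hypotheses of Corollary~\ref{cor:pq2_te-prime} are met for the ordered pair $(q, p)$: both $q$ and $p$ are non-empty $\te$-palindromes, and $qp$ is primitive. This primitivity check is the only genuine prerequisite, and it is exactly where closure under conjugation is used.

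The crucial (if elementary) observation is that the generated set is insensitive to the order of the generators, namely $\{p, q\}^+ = \{q, p\}^+$. Hence the clause ``some word in $\{p, q\}^+$ is $\te$-primitive'' denotes one and the same condition whether we regard the pair as $(p, q)$ or as $(q, p)$. This symmetry is what lets the two instances of the corollary talk to each other.

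I would then chain the two instances. Applied to $(p, q)$, Corollary~\ref{cor:pq2_te-prime} states that some word in $\{p, q\}^+$ is $\te$-primitive if and only if $pq$ is $\te$-primitive; since $pq$ itself belongs to $\{p, q\}^+$ and is $\te$-primitive by hypothesis, the left-hand side holds, so some word in $\{p, q\}^+$ is $\te$-primitive. Applied to $(q, p)$, the same corollary states that some word in $\{q, p\}^+$ is $\te$-primitive if and only if $qp$ is $\te$-primitive. Because $\{q, p\}^+ = \{p, q\}^+$, its left-hand side is precisely the condition just established, and therefore $qp$ is $\te$-primitive, as desired.

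I do not expect any computational obstacle: the argument is a short logical composition of Corollary~\ref{cor:pq2_te-prime} with itself. The two points that require attention are the verification that $qp$ is primitive (so that the corollary is even applicable to $(q, p)$) and the explicit use of the order-independence $\{p, q\}^+ = \{q, p\}^+$, which carries the whole symmetry of the statement.
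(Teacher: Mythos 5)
Your proof is correct and follows essentially the same route as the paper: verify that $qp$ is primitive via closure of primitivity under conjugation, then apply Corollary~\ref{cor:pq2_te-prime} to the pair $(q,p)$ using $pq \in \{q,p\}^+$ as the $\te$-primitive witness. The only difference is cosmetic: your first application of Corollary~\ref{cor:pq2_te-prime} to $(p,q)$ is redundant, since the hypothesis already exhibits $pq$ as a $\te$-primitive element of $\{p,q\}^+$.
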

\begin{proof}
	Since $pq$ is $\te$-primitive, it is primitive and hence its conjugate $qp$ is also primitive. 
	Applying Corollary~\ref{cor:pq2_te-prime} to $qp$ gives the result.
\end{proof}

Corollary~\ref{cor:conj_pal} gives a partial answer to one of our questions on the preservation of $\te$-primitivity under cyclic permutation. 

Now let us examine the equation $pq = x_1 \cdots x_n$ from a different perspective to get some results useful in Section \ref{sec:ExLS}. 
Here we see that the assumptions considered in Proposition \ref{prop:clean_split}: $pq$ being primitive and both of $p, q$ being a $\te$-palindrome are critical to obtain $p, q \in \{x, \te(x)\}^+$. 

\begin{lemma}\label{lem:pal_x1-xn}
	For a $\te$-primitive word $x \in \Sigma^+$ and $k \ge 2$, let $x_1, x_2, \ldots, x_k \in \{x, \te(x)\}$. 
	If $pz = x_1x_2 \cdots x_k$ for some $\te$-palindrome $p$ and non-empty word $z \in \Sigma^+$ with $|z| < |x|$, then $x_1 = x_2 = \cdots = x_{k-1}$. 
	Moreover, if $z$ is also a $\te$-palindrome, then $x_k = x_{k-1}$. 
\end{lemma}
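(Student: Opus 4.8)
The plan is to read the first claim directly off the self-overlap that the $\te$-palindromicity of $p$ forces, and to treat the ``moreover'' part separately because the short tail $z$ destroys the left--right symmetry. Since $0<|z|<|x|$, the word $z$ is a proper non-empty suffix of the last block $x_k$; writing $x_k=z'z$ gives $p=x_1\cdots x_{k-1}z'$, where $z'$ is a non-empty proper prefix of $x_k$ with $|z'|=s:=|x|-|z|\in(0,|x|)$, so that $|p|=(k-1)|x|+s$. For $k=2$ the first claim is vacuous, hence I assume $k\ge 3$.

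For $x_1=\cdots=x_{k-1}$ I would argue by contradiction, supposing $x_j\ne x_{j+1}$ for some $1\le j\le k-2$. Then $Q:=x_jx_{j+1}\in\{x\te(x),\te(x)x\}$, and in either case $\te(Q)=Q$. Writing $p=(x_1\cdots x_{j-1})\,Q\,(x_{j+2}\cdots x_{k-1}z')$, applying $\te$, and using $p=\te(p)$ together with $\te(Q)=Q$, I obtain a second occurrence of $Q$ in $p$, now at position $(k-j-2)|x|+s$. This occurrence is not block-aligned: its left margin is $s$ and its right margin is $|x|-s=|z|$, both strictly between $0$ and $|x|$, so it sits strictly inside the three consecutive blocks $x_{k-j-1}x_{k-j}x_{k-j+1}\in\{x,\te(x)\}^3$ (the indices are legitimate as $1\le j\le k-2$). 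Thus $x\te(x)$ or $\te(x)x$ would be a proper infix of a word in $\{x,\te(x)\}^3$, contradicting Theorem~\ref{thm:overlap3}. Hence all of $x_1,\dots,x_{k-1}$ coincide; call the common block $x_*$.

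For the ``moreover'' part I would first mine $p=\te(p)$ for structure (the case $x=\te(x)$ being trivial, so assume $x\ne\te(x)$). Write $x_*=a_1a_2$ with $|a_1|=s$. Comparing the length-$|x|$ prefix of $p$, namely $x_*$, with the $\te$-image of its length-$|x|$ suffix $a_2z'$ gives $x_*=\te(a_2z')=\te(z')\te(a_2)$; matched against $x_*=a_1a_2$ this forces $a_1=\te(z')$ and $a_2=\te(a_2)$, so $a_2$ is a $\te$-palindrome, $x_*=\te(z')a_2$, and $\te(x_*)=a_2z'$. Now $x_k=z'z\in\{x_*,\te(x_*)\}$. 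If $x_k=\te(x_*)=a_2z'$, then $z'z=a_2z'$ is a conjugacy $z'\cdot z=a_2\cdot z'$ in which both $z$ (by hypothesis) and $a_2$ are non-empty $\te$-palindromes, so Proposition~\ref{prop:pali_conjugate} yields $z',z,a_2\in\{t,\te(t)\}^*$ for some $t$, whence $x_*=\te(z')a_2\in\{t,\te(t)\}^*$ as well. As $x_*\in\{x,\te(x)\}$ is $\te$-primitive, this forces $t\in\{x,\te(x)\}$, and then $z'\in\{x,\te(x)\}^*$ --- impossible for a non-empty $z'$ of length $s<|x|$. Therefore $x_k\ne\te(x_*)$, i.e. $x_k=x_*=x_{k-1}$.

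The step I expect to be the main obstacle is exactly this ``moreover'' part: the self-overlap argument that disposes of the first claim never reaches the last block, because $z$ breaks the symmetry of $p$. Overcoming this requires the extra extraction of the $\te$-palindrome $a_2$ from $p=\te(p)$ and the reduction of the offending case $x_k=\te(x_*)$ to a $\te$-palindromic conjugacy, which Proposition~\ref{prop:pali_conjugate} converts into a violation of the $\te$-primitivity of $x$. I would also flag that a naive attack on the first claim via the two-block overlap Lemma~\ref{lem:overlap2} stalls at the endpoints (notably $x_1=x_2$, and the whole $k=3$ case), which is why routing through the three-block Theorem~\ref{thm:overlap3} is the right move.
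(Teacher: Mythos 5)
Your proof is correct and follows essentially the same route as the paper: the first claim is obtained by using $p=\te(p)$ to produce a misaligned occurrence of $x\te(x)$ or $\te(x)x$ inside three consecutive blocks and invoking Theorem~\ref{thm:overlap3}, and the ``moreover'' part extracts the same $\te$-palindromic factor (your $a_2$ is the paper's $q$ in $\te(x)=qy$), reduces the bad case $x_k=\te(x_*)$ to the conjugacy $z'z=a_2z'$, and kills it with Proposition~\ref{prop:pali_conjugate} against the $\te$-primitivity of $x$. The only differences are cosmetic (how $a_2=\te(a_2)$ is derived and how the final length contradiction is phrased).
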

\begin{proof}
	Due to the length condition on $z$, we can let $x_k = yz$ for some non-empty word $y \in \Sigma^+$. 
	Hence we have $p = x_1x_2 \cdots x_{k-1}y$. 
	Since $p$ is a $\te$-palindrome, $p = \te(y) \te(x_{k-1}) \cdots \te(x_1)$. 
	This means that $\te(x_{k-1}) \cdots \te(x_1)$ is a proper infix of $x_1 \cdots x_k$, and we can say that $x_1 = \cdots = x_{k-1}$ using Theorem~\ref{thm:overlap3} (we can assume $k \ge 3$, since if $k = 2$ the consequence is trivial). 

	Now we consider the additional result when $z = \te(z)$. 
	Without loss of generality, we can assume that $x_1 = x$. 
	So we have $p = x^{k-1}y = \te(y)\te(x)^{k-1}$. 
	Since $|y| < |\te(x)|$, this equation gives $\te(x) = qy$ for some non-empty word $q$. 
	Actually $q$ is a $\te$-palindrome. 
	Indeed, we have $qy \in \Suff(p) = \Suff(x^{k-1}y)$, hence as $|q| < |x|$, $q \in \Suff(x)$. 
	Moreover, by definition, $q \in \Pref(\te(x))$, therefore $\te(q) \in \Suff(x)$ and thus $q$ has to be a $\te$-palindrome.

	Thus, if $x_k = \te(x)$, then $\te(x) = qy = yz$ and hence $\te(x)$ could not be $\te$-primitive due to Proposition~\ref{prop:pali_conjugate}, raising a contradiction. 
\end{proof}

For two $\te$-palindromes $p, q$, a $\te$-primitive word $x$, and $x_1, \ldots, x_k \in \{x, \te(x)\}$ ($k \ge 1$), if $|q| < |x|$, then the equation $pq = x_1 \cdots x_k$ turns into $pq = x^k$ due to Lemma \ref{lem:pal_x1-xn} and its solution is $x = p'q$ for some $\te$-palindrome $p'$ such that $p = x^{k-1}p'$. 
If we replace $q$ in this equation with a word $z$, which is not assumed to be a $\te$-palindrome, and if $k \ge 3$, then we can still find an intriguing non-trivial solution to the equation $pz = x^{k-1}\te(x)$. 

\begin{example}
	Let $p$ be a $\te$-palindrome, $x$ be a $\te$-primitive word, and $z \in \Sigma^+$ with $|z| < |x|$. 
	For some $i \ge 0$, $j \ge 1$, $k \ge 3$, and $\te$-palindromes $r, t$ such that $rt$ is primitive, we can see that $x = [r(tr)^i]^2 (tr)^j$, $p = x^{k-1} r(tr)^i$, and $z = (tr)^j r(tr)^i$ satisfy $pz = x^{k-1}\te(x)$. 
\end{example}

Note that $r$ and $t$ in this example are given by Proposition \ref{prop:th-commute}.  
Further research on the properties of words in $\{r(tr)^i, (tr)^j\}^*$ may shed light on the properties of $\te$-primitive words. 
In Section \ref{subsec:non-trivial_ExLS4}, we will provide some results along this line, such as the ones in Propositions~\ref{prop:rt2_prime_present1} and \ref{prop:rt2_prime_present2}.

	\section{Extended Lyndon-Sch\"{u}tzenberger equation}\label{sec:ExLS}

As an application of the results obtained in Section~\ref{sec:combinatorial}, we address some open cases of the extended Lyndon-Sch\"{u}tzenberger equation in this section. 

For $u, v, w \in \Sigma^+$, the ExLS equation under consideration is of the form 
\[
	u_1 \cdots u_\ell = v_1 \cdots v_n w_1 \cdots w_m,
\] 
where $u_1, \ldots, u_\ell \in \{u, \te(u)\}$, $v_1, \ldots, v_n \in \{v, \te(v)\}$, and $w_1, \ldots, w_m \in \{w, \te(w)\}$, for $\ell, n, m \ge 2$. 
The open cases are $\ell \in \{2, 3, 4\}$ and $m,n\geq 3$ (see Table \ref{tbl:exLS_summary}). 
It suffices to consider the case when both $v$ and $w$ are $\te$-primitive; otherwise we simply replace them with their $\te$-primitive roots and increase the parameters $n$ and $m$. 
The words $v_1 \cdots v_n$ and $w_1 \cdots w_m$ being symmetric with respect to their roles in the equation, it is also legitimate to assume that $|v_1 \cdots v_n| \ge |w_1 \cdots w_m|$. 

Throughout Subsections \ref{subsec:ExLS4_setting} to \ref{subsec:ExLS4_01}, we prove that the triple $(4, \ge 3, \ge 3)$ imposes $\te$-periodicity. 
First of all, in Subsection~\ref{subsec:ExLS4_setting}, the problem which we actually work on is formalized as Problem~\ref{prob:main}, and we solve some special instances of ExLS equation to which the application of the generalized Fine and Wilf's theorem (Theorem~\ref{thm:exFWlcm}) immediately proves the existence of a word $t$ satisfying $u, v, w \in \{t, \te(t)\}^+$. 
We call such instances {\it trivial ExLS equations}. 
In Subsection~\ref{subsec:non-trivial_ExLS4}, we provide additional conditions which can be assumed for non-trivial ExLS equations. 
Several lemmas and propositions are also proved there. 
They are interesting in their own and our proof techniques for them probably include various applications beyond the investigation on the non-trivial ExLS equations in Subsection \ref{subsec:ExLS4_00} (the case when $u_2 = u_1$) and Subsection \ref{subsec:ExLS4_01} (the case when $u_2 \neq u_1$). 
In each of these subsections, we analyze four cases depending on the values of $u_3$ and $u_4$ one at a time. 
All of these proofs merely consist of direct applications of the results obtained so far and in Subsection~\ref{subsec:non-trivial_ExLS4}. 

In Subsection~\ref{subsec:ExLS3}, we prove that for $n, m \ge 2$, the triple $(3, n, m)$ does not impose $\te$-periodicity. 
We provide several (parametrized) examples which verify that for some specific values of $n, m$, the triple $(3, n, m)$ does not impose $\te$-periodicity. 
Our survey will expose complex behaviors of $(3, n, m)$ ExLS equations.

	\subsection{Problem setting for the ExLS equation $\ell = 4$}
	\label{subsec:ExLS4_setting}

Taking the assumptions mentioned above into consideration, the problem which we are addressing is described as follows: 

\begin{problem}\label{prob:main}
	Let $u, v, w \in \Sigma^+$ and integers $n, m \ge 3$. 
	Let $u_1, u_2, u_3, u_4 \in \{u, \te(u)\}$, $v_1, \ldots, v_n \in \{v, \te(v)\}$, and $w_1, \ldots, w_m \in \{w, \te(w)\}$. 
	Does the equation $u_1u_2u_3u_4 = v_1 \cdots v_n w_1 \cdots w_m$ imply $u, v, w \in \{t, \te(t)\}^+$ for some $t \in \Sigma^+$ under all of the following conditions? 
	\begin{arabiclist}
	\item\label{cond:te-primitive}	$v$ and $w$ are $\te$-primitive, 
	\item\label{cond:symmetry}	$|v_1 \cdots v_n| \ge |w_1 \cdots w_m|$, 
	\item\label{cond:fixed_word}	$u_1 = u$, $v_1 = v$, and $w_m = w$, 
	\item\label{cond:length}	$|v|, |w| < |u|$.  
	\end{arabiclist}
	The condition~\ref{cond:symmetry} means that $2|u| \le n|v|$. 
	Besides, the condition~\ref{cond:length} follows from the conditions~\ref{cond:te-primitive} and~\ref{cond:symmetry} as shown in the next lemma. 
\end{problem}

\begin{lemma}\label{lem:uv_length}
	Let $u, v, w \in \Sigma^+$ such that $v$, $w$ are $\te$-primitive. 
	If $u_1u_2u_3u_4 = v_1 \cdots v_n w_1 \cdots w_m$ for some $n, m \ge 3$, $u_1, u_2, u_3, u_4 \in \{u, \te(u)\}$, $v_1, \ldots, v_n \in \{v, \te(v)\}$, and $w_1, \ldots, w_m \in \{w, \te(w)\}$, then $|v| < |u|$ and $|w| < |u|$. 
\end{lemma}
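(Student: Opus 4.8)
The plan is to establish $|v|<|u|$ first and then obtain $|w|<|u|$ by a symmetric argument. Comparing lengths in the equation gives $4|u| = n|v| + m|w|$. Suppose, for contradiction, that $|v| \ge |u|$. If $n \ge 4$ then $n|v| \ge 4|v| \ge 4|u|$, forcing $m|w| \le 0$, which is impossible; hence $n = 3$. Since $m|w|>0$ we also have $3|v| < 4|u|$, so the prefix $v_1 v_2 v_3$ of the right-hand side has length $3|v|$ with $3|u| \le 3|v| < 4|u|$, and is therefore a genuine prefix of $u_1u_2u_3u_4$.

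Thus the word $u_1u_2u_3u_4 \in \{u,\te(u)\}^*$ and the word $v_1v_2v_3 \in \{v,\te(v)\}^*$ share a prefix of length $3|v|$, and the key step is to feed this into the generalized Fine and Wilf theorem. Since $|v| \ge |u|$, Theorem~\ref{thm:exFWgcd} (with $v$ in the role of the longer word) only requires a common prefix of length at least $2|v| + |u| - \gcd(|v|,|u|)$; and $3|v| \ge 2|v| + |u| - \gcd(|v|,|u|)$ is equivalent to $|v| + \gcd(|v|,|u|) \ge |u|$, which holds because $|v| \ge |u|$. Hence $u, v \in \{t,\te(t)\}^+$ for some $\te$-primitive word $t$. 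Now the $\te$-primitivity of $v$ forces the exponent to be $1$, so $|t| = |v|$ and $v \in \{t,\te(t)\}$; combined with $u \in \{t,\te(t)\}^+$ and $|t| = |v| \ge |u|$, this yields $|u| = |t| = |v|$ and $u \in \{v,\te(v)\}$.

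It then remains to eliminate this borderline possibility $|u| = |v|$. In that case $\{u,\te(u)\} = \{v,\te(v)\}$ and every block $u_1,u_2,u_3,v_1,v_2,v_3$ has the common length $|u| = |v|$, so matching the first $3|u|$ symbols of the equation gives $u_1 = v_1$, $u_2 = v_2$, $u_3 = v_3$, and hence $u_4 = w_1 \cdots w_m$. But $u_4 \in \{u,\te(u)\} = \{v,\te(v)\}$ is $\te$-primitive (both $v$ and $\te(v)$ are), whereas $u_4 \in \{w,\te(w)\}^m$ with $m \ge 3$, directly contradicting the definition of a $\te$-primitive word. Therefore $|v|<|u|$. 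Finally, applying $\te$ to the whole equation turns the suffix $w_1 \cdots w_m$ into a prefix and interchanges the roles of $(v,n)$ and $(w,m)$, so the identical argument, now using the $\te$-primitive word $w$, yields $|w|<|u|$.

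I expect the main obstacle to be the borderline case $|u| = |v|$: the arithmetic identity $4|u| = n|v| + m|w|$ together with $n,m\ge 3$ only gives $|v| < \frac{4}{3}|u|$, so pure length counting cannot close the gap, and one genuinely needs the combinatorial structure (the block matching and the definition of $\te$-primitivity) to rule out equality. A secondary point of care is the exact threshold of Theorem~\ref{thm:exFWgcd}; the weaker Theorem~\ref{thm:exFWlcm}, whose threshold $\lcm(|u|,|v|)$ can exceed $3|v|$, does not suffice here.
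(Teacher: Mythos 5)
Your proof is correct, and its core---ruling out $|v| \ge |u|$ by applying Theorem~\ref{thm:exFWgcd} to the common prefix of $u_1u_2u_3u_4$ and the $v$-blocks, then killing the borderline case $|u| = |v|$ by block matching (which forces $u_4 = w_1 \cdots w_m$, contradicting $\te$-primitivity)---is exactly the paper's argument for $|v| < |u|$. The genuine difference is in how $|w| < |u|$ is obtained. The paper invokes the standing assumption of Problem~\ref{prob:main} (Condition~2, $|v_1 \cdots v_n| \ge |w_1 \cdots w_m|$), which together with $4|u| = n|v| + m|w|$ gives $m|w| \le 2|u|$ and hence $|w| \le \frac{2}{3}|u|$ in one line; note that this assumption belongs to the surrounding problem setting but is not part of the lemma's literal statement. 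Your $\te$-symmetry argument (apply $\te$ to the equation so the $w$-blocks become the prefix, then rerun the whole argument with $w$ in place of $v$) avoids that assumption entirely, so you prove the lemma exactly as stated, at the cost of repeating the Fine--Wilf step; this makes your version slightly more self-contained and general. Two minor simplifications are available: the deduction $n = 3$ is unnecessary (the shared prefix $v_1 \cdots v_n$ has length $n|v| \ge 3|v|$ regardless of $n$), and the $\gcd$ term is not needed, since $|v| \ge |u|$ already yields $3|v| \ge 2|v| + |u|$, which is the threshold as the paper uses it.
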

\begin{proof}
	Due to Condition~\ref{cond:symmetry}, $|v_1 \cdots v_n| \ge |w_1 \cdots w_m|$. 
	This means that $m|w| \leq 2|u|$, which in turn implies $|w| \le \frac{2}{3}|u|$ because $m \geq 3$. 
	Thus $|w| < |u|$. 

	Now suppose that the ExLS equation held with $|v| \ge |u|$. 
	Then $v_1 \cdots v_n$ is a prefix of $u_1u_2u_3u_4$ of length at least $3|v| \geq 2|v|+|u|$, and hence  $u, v \in \{t, \te(t)\}^+$ for some $\te$-primitive word $t \in \Sigma^+$ due to Theorem~\ref{thm:exFWgcd}. 
	Unless $|v| = |u|$, we reach the contradiction that $v$ would not be $\te$-primitive. 
	Even if $|v| = |u|$, we have $u_4 = w_1 \cdots w_m$. 
	Therefore $v_1 = u_1$ could not be $\te$-primitive. 
\end{proof}

The next lemma reduces the number of steps required to prove a positive answer to Problem~\ref{prob:main}. 

\begin{lemma}\label{lem:two_enough}
	Under the setting of Problem~\ref{prob:main}, if $u, v \in \{t, \te(t)\}^+$ for some $t \in \Sigma^+$, then $w \in \{t, \te(t)\}^+$. 
\end{lemma}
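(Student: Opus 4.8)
The plan is to combine the setup of Problem~\ref{prob:main} with the generalized Fine and Wilf theorem. We are given $u_1u_2u_3u_4 = v_1 \cdots v_n w_1 \cdots w_m$ together with the assumption $u, v \in \{t, \te(t)\}^+$ for some word $t$; without loss of generality we may take $t$ to be $\te$-primitive (replace it by its $\te$-primitive root, using Lemma~\ref{lem:rt_th-rootshare}). The left-hand side $u_1u_2u_3u_4$ therefore lies in $\{t, \te(t)\}^+$, and likewise the prefix $v_1 \cdots v_n$ lies in $\{t, \te(t)\}^+$. The goal is to show that the remaining factor $w_1 \cdots w_m$, and hence $w$ itself, also decomposes over $\{t, \te(t)\}$.

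First I would observe that, since $u \in \{t, \te(t)\}^+$ and each $u_i \in \{u, \te(u)\}$, the whole word $u_1u_2u_3u_4$ is in $\{t, \te(t)\}^+$; the same reasoning gives $v_1 \cdots v_n \in \{t, \te(t)\}^+$. Writing $W = w_1 \cdots w_m$, the equation becomes $v_1 \cdots v_n \, W = u_1u_2u_3u_4$, so $W$ is obtained by deleting a prefix that is itself in $\{t, \te(t)\}^+$ from a word in $\{t, \te(t)\}^+$. By a length argument (both the total word and the deleted prefix are catenations of copies of $t$ and $\te(t)$, all of length $|t|$), the suffix $W$ must align on the $t$-boundaries, so $W \in \{t, \te(t)\}^+$. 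This is exactly the content of the second clause of Lemma~\ref{lem:rt_th-rootshare}: from $v_1 \cdots v_n \cdot W \in \{t, \te(t)\}^+$ and $v_1 \cdots v_n \in \{t, \te(t)\}^+$ we conclude $W \in \{t, \te(t)\}^+$.

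Finally, having $W = w_1 \cdots w_m \in \{t, \te(t)\}^+$, I would recover $w$. Each $w_i \in \{w, \te(w)\}$ and $w_m = w$ by Condition~\ref{cond:fixed_word}, so $w$ is a suffix of $W$ of length $|w|$; more precisely $w_1 \cdots w_{m-1} \cdot w \in \{t, \te(t)\}^+$. One applies Lemma~\ref{lem:rt_th-rootshare}(2) once more, after noting that $w_1 \cdots w_{m-1}$ is a prefix-word in $\{w,\te(w)\}^+$ which must itself align over $\{t,\te(t)\}$, to peel off the trailing occurrence and obtain $w \in \{t, \te(t)\}^+$. Since $\te(w) \in \{t, \te(t)\}^+$ iff $w$ is, this handles every case for $w_m$.

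The step I expect to carry the real weight is the boundary-alignment argument that turns ``$W$ is a suffix of a $\{t,\te(t)\}$-word after removing a $\{t,\te(t)\}$-prefix'' into ``$W \in \{t,\te(t)\}^+$.'' The subtlety is that a decomposition of a word over $\{t, \te(t)\}$ need not be unique as a choice of which factor is $t$ versus $\te(t)$, but the \emph{positions} of the cut points are forced because all factors share the common length $|t|$; this is precisely why Lemma~\ref{lem:rt_th-rootshare} is stated in terms of the $\te$-primitive root, and invoking it directly sidesteps any need to track which of $t$ or $\te(t)$ occurs where. The rest is bookkeeping with lengths and the defining conditions of Problem~\ref{prob:main}.
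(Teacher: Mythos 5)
The paper states Lemma~\ref{lem:two_enough} without giving any proof, so your attempt can only be judged on its own merits, and it contains a genuine gap in its final step. Your first two steps are sound: since $u, v \in \{t, \te(t)\}^+$, both $u_1u_2u_3u_4$ and $v_1 \cdots v_n$ lie in $\{t, \te(t)\}^+$, and because $|v_1 \cdots v_n|$ is then a multiple of $|t|$, the suffix $W = w_1 \cdots w_m$ begins on a block boundary of the $\{t,\te(t)\}$-decomposition of the left-hand side, so $W \in \{t, \te(t)\}^+$ (this alignment argument needs no primitivity of $t$ at all). The gap is in peeling $w$ off $W$. To apply Lemma~\ref{lem:rt_th-rootshare}(2) to the factorization $W = (w_1 \cdots w_{m-1}) \cdot w$, you must already know that one of the two factors lies in $\{t,\te(t)\}^+$, and your justification --- that $w_1 \cdots w_{m-1}$ ``must itself align over $\{t,\te(t)\}$'' --- is exactly the point in question. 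Cut points are forced only when the factor lengths involved are multiples of $|t|$; the blocks of $W$ over $\{w, \te(w)\}$ have length $|w|$, about which nothing is known modulo $|t|$ at this stage. The argument is circular: $w_1 \cdots w_{m-1}$ aligns on $t$-boundaries if and only if $|w|$ is a multiple of $|t|$, which is essentially what you are trying to prove.

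The missing ingredients are the generalized Fine and Wilf theorem and the $\te$-primitivity hypotheses of Problem~\ref{prob:main}, neither of which your proof ever uses. One way to close the gap: $W$ lies in $\{w,\te(w)\}^m \cap \{t,\te(t)\}^+$, so its length $m|w|$ is a common multiple of $|w|$ and $|t|$, hence at least $\lcm(|w|,|t|)$; Theorem~\ref{thm:exFWlcm} then yields a $\te$-primitive word $s$ with $w, t \in \{s, \te(s)\}^+$. Moreover, condition~\ref{cond:te-primitive} says $v$ is $\te$-primitive, and $v \in \{t,\te(t)\}^+$ therefore forces $v \in \{t,\te(t)\}$ by the very definition of $\te$-primitivity; hence $t$ is itself $\te$-primitive, so $t \in \{s,\te(s)\}$, i.e. $\{s,\te(s)\} = \{t,\te(t)\}$, and $w \in \{t,\te(t)\}^+$ (indeed $w \in \{t,\te(t)\}$, since $w$ is $\te$-primitive as well). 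This last observation also repairs a smaller flaw at the start of your proof: replacing $t$ by its $\te$-primitive root changes the statement to be proved, since membership in $\{s,\te(s)\}^+$ for the root $s$ is in general weaker than membership in $\{t,\te(t)\}^+$; that reduction is legitimate here only because, as just shown, condition~\ref{cond:te-primitive} forces $t$ to be $\te$-primitive already, making the replacement vacuous --- a fact that needs to be stated, not assumed.
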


In fact, we can say more strongly that if two of $u, v, w$ are proved to be in $\{t, \te(t)\}^+$ for some $t$, then the other one is also in this set. 

First of all, we distinguish the case in which the existence of such $t$ that $u, v, w \in \{t, \te(t)\}^+$ is trivial due to the generalized Fine and Wilf theorem (Theorem \ref{thm:exFWlcm}). 

\begin{theorem}\label{thm:trivial}
	Under the setting of Problem~\ref{prob:main}, if there exists an index $i$, $1 \le i \le n$, such that $u_1u_2 = v_1 \cdots v_i$, then $u, v, w \in \{t, \te(t)\}^+$ for some word $t \in \Sigma^+$. 
\end{theorem}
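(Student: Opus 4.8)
The plan is to read the hypothesis $u_1u_2 = v_1 \cdots v_i$ as an equality between a word $u_1u_2 \in \{u, \te(u)\}^2 \subseteq \{u, \te(u)\}^*$ and a word $v_1 \cdots v_i \in \{v, \te(v)\}^i \subseteq \{v, \te(v)\}^*$, and then to feed this into the generalized Fine and Wilf theorem (Theorem~\ref{thm:exFWlcm}). First I would note that, being equal, these two words trivially share a common prefix, namely the whole word itself, of length $|u_1u_2| = 2|u| = i|v|$.

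The only point requiring care is the length threshold demanded by Theorem~\ref{thm:exFWlcm}: the shared prefix must have length at least $\lcm(|u|, |v|)$. Here I would observe that the common length $2|u| = i|v|$ is simultaneously a multiple of $|u|$ (it equals $2|u|$) and of $|v|$ (it equals $i|v|$), hence a common multiple of the two, which forces $\lcm(|u|, |v|) \le 2|u|$. Consequently the shared prefix of length $2|u|$ is long enough, and Theorem~\ref{thm:exFWlcm} yields a $\te$-primitive word $t$ with $u, v \in \{t, \te(t)\}^+$.

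To finish, I would invoke Lemma~\ref{lem:two_enough}: once $u, v \in \{t, \te(t)\}^+$ is established under the setting of Problem~\ref{prob:main}, the remaining word $w$ must also lie in $\{t, \te(t)\}^+$, giving the desired common $t$ for all three of $u, v, w$.

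I do not expect a genuine obstacle here — this is precisely the case the authors label as ``trivial'' because the generalized Fine and Wilf theorem applies directly. The only substantive bookkeeping is the elementary divisibility argument $\lcm(|u|, |v|) \le 2|u| = i|v|$ that certifies the hypothesis of Theorem~\ref{thm:exFWlcm}; everything else is a direct citation of results already in hand.
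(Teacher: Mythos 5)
Your proposal is correct and follows essentially the same route as the paper: apply the generalized Fine and Wilf theorem (Theorem~\ref{thm:exFWlcm}) to the equality $u_1u_2 = v_1 \cdots v_i$, then invoke Lemma~\ref{lem:two_enough} to bring in $w$. The only cosmetic difference is that the paper uses the $\te$-primitivity of $v$ to identify the common root $t$ as $v$ itself (concluding $u \in \{v, \te(v)\}^+$), whereas you leave $t$ abstract; your explicit check that $2|u| = i|v|$ is a common multiple of $|u|$ and $|v|$, hence at least $\lcm(|u|,|v|)$, is exactly the bookkeeping the paper leaves implicit.
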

\begin{proof}
	Since $v$ is assumed to be $\te$-primitive, Theorem~\ref{thm:exFWlcm} implies $u \in \{v, \te(v)\}^+$. 
	Then $w \in \{v, \te(v)\}^+$ due to Lemma~\ref{lem:two_enough} (in fact, $w \in \{v, \te(v)\}$ because $w$ is also assumed to be $\te$-primitive). 
\end{proof}

If a given $(4, n, m)$ ExLS equation satisfies the condition in Theorem~\ref{thm:trivial}, then we say that this equation is {\it trivial}. 
Before initiating our study on non-trivial ExLS equations, we provide one important condition which makes the equation trivial according to the generalized Fine and Wilf theorem (Theorem \ref{thm:exFWgcd}). 

\begin{proposition}\label{prop:longenough}
	Under the setting of Problem~\ref{prob:main}, if $n|v| \ge 2|u|+|v|$, then the equation is trivial. 
\end{proposition}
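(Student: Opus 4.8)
The plan is to apply the generalized Fine--Wilf theorem (Theorem~\ref{thm:exFWgcd}) to the overlap of the $u$-factors with the $v$-factors at the left end of the equation, and then to exhibit the index required by the definition of triviality in Theorem~\ref{thm:trivial}. First I would observe that, since $u_1u_2u_3u_4 = v_1 \cdots v_n w_1 \cdots w_m$ and $|v_1 \cdots v_n| = n|v| \le 4|u|$, the word $v_1 \cdots v_n \in \{v, \te(v)\}^n$ is a prefix of $u_1u_2u_3u_4 \in \{u, \te(u)\}^4$; hence these two words share a prefix of length exactly $n|v|$. By Condition~\ref{cond:length} we have $|u| > |v|$, so Theorem~\ref{thm:exFWgcd} applies as soon as $n|v| \ge 2|u|+|v|-\gcd(|u|,|v|)$, and the hypothesis $n|v| \ge 2|u|+|v|$ delivers this immediately (since $\gcd(|u|,|v|) \ge 1 > 0$). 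We thus obtain $u, v \in \{t, \te(t)\}^+$ for some $\te$-primitive word $t$.

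Next I would upgrade this conclusion to $u \in \{v, \te(v)\}^+$. Because $v$ is $\te$-primitive by Condition~\ref{cond:te-primitive} and $v \in \{t, \te(t)\}^+$, the uniqueness of the $\te$-primitive root forces $v \in \{t, \te(t)\}$, whence $\{v, \te(v)\} = \{t, \te(t)\}$ and therefore $u \in \{v, \te(v)\}^+$. Writing $|u| = k|v|$, Condition~\ref{cond:length} gives $k \ge 2$. Since $\{v, \te(v)\}$ is closed under $\te$, both $u$ and $\te(u)$ lie in $\{v, \te(v)\}^k$, so regardless of whether $u_2 = u$ or $u_2 = \te(u)$ we have $u_1u_2 \in \{v, \te(v)\}^{2k}$, a prefix of the common word of length $2|u| = 2k|v|$.

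Finally I would match lengths to produce the index. The hypothesis gives $n|v| \ge 2|u|+|v| = (2k+1)|v|$, so $2k < n$, and hence the prefix of $u_1u_2u_3u_4 = v_1 \cdots v_n w_1 \cdots w_m$ of length $2k|v|$ is on one hand $u_1u_2$ and on the other hand $v_1 \cdots v_{2k}$ (as $2k \le n$). Two prefixes of the same length of a common word coincide, so $u_1u_2 = v_1 \cdots v_{2k}$ with $1 \le 2k \le n$. This is precisely the hypothesis of Theorem~\ref{thm:trivial}, so the equation is trivial. The only mildly delicate step is the passage from ``$u$ and $v$ share a common $\te$-primitive root $t$'' to ``$u$ factors over $\{v, \te(v)\}$''; this rests on combining the assumed $\te$-primitivity of $v$ with the uniqueness of the $\te$-primitive root, and everything else is routine length bookkeeping.
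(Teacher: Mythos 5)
Your proof is correct and follows essentially the same route as the paper: apply Theorem~\ref{thm:exFWgcd} to the shared prefix $v_1\cdots v_n$, use the $\te$-primitivity of $v$ to conclude $t \in \{v,\te(v)\}$, and then locate the index required by Theorem~\ref{thm:trivial}. The only difference is that you spell out the length bookkeeping ($u_1u_2 = v_1\cdots v_{2k}$ with $2k < n$) that the paper leaves implicit.
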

\begin{proof}
	We can employ Theorem~\ref{thm:exFWgcd} to obtain $u, v \in \{t, \te(t)\}^+$ for some $t \in \Sigma^+$. 
	In fact, $t$ is either $v$ or $\te(v)$ because $v$ is assumed to be $\te$-primitive. 
	Hence we can find such $i$ stated in Theorem~\ref{thm:trivial}, and by definition this equation is trivial. 
\end{proof}

	\subsection{Non-trivial $(4, \ge 3, \ge 3)$ ExLS equations and related combinatorial results}
	\label{subsec:non-trivial_ExLS4}

Now we shift our attention to the non-trivial $(4, \ge 3, \ge 3)$ ExLS equation. 
What we will actually prove here is that under the setting of Problem~\ref{prob:main}, any {\it non-trivial} equation cannot hold. 
Along with Theorem~\ref{thm:trivial}, this implies that $(4, \ge 3, \ge 3)$ imposes $\te$-periodicity. 

>From this theorem and Proposition~\ref{prop:longenough}, the equation is {\it non-trivial} if and only if $(n-1)|v| < 2|u| < n|v|$. 
Thus, the next proposition, which was proposed in~\cite{CCKS09} to decrease the amount of case analyses for the $(5, \ge 3, \ge 3)$ ExLS equation, is still available for the investigation of non-trivial $(4, \ge 3, \ge 3)$ ExLS equations. 

\begin{proposition}[\cite{CCKS09}]\label{prop:CCKS09-a}
	Let $u, v \in \Sigma^+$ such that $v$ is $\te$-primitive, $u_2, u_3 \in \{u, \te(u)\}$, and $v_2, \ldots, v_n \in \{v, \te(v)\}$ for some integer $n \ge 3$. 
	If $vv_2 \cdots v_n \in \Pref(uu_2u_3)$ and $(n-1)|v| < 2|u| < n|v|$, then there are only two possible cases. 
	\begin{arabiclist}
	\item	$u_2 = \te(u)$: and $v_2 = \cdots = v_n = v$ with $u\te(u) = (pq)^{n-1}p$ and $v = pq$ for some non-empty $\te$-palindromes $p, q$.  
	\item	$u_2 = u$: $n$ is even, $v_2 = \cdots = v_{n/2} = v$, and $v_{n/2+1} = \cdots = v_n = \te(v)$ with $v = r(tr)^i (rt)^{i+j}r$ and $u = v^{n/2-1} r(tr)^i (rt)^j$ for some $i \ge 0$, $j \ge 1$, and non-empty $\te$-palindromes $r, t$ such that $rt$ is primitive. 
	\end{arabiclist}
\end{proposition}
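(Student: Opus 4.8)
The plan is to read off the combinatorial picture from the two length inequalities and then split on the value of $u_2$. Since $vv_2\cdots v_n\in\Pref(uu_2u_3)$ and $(n-1)|v|<2|u|<n|v|$, the length-$2|u|$ prefix $uu_2$ of $uu_2u_3$ ends strictly inside the block $v_n$: writing $v_n=ss'$ with $s$ the prefix reaching position $2|u|$, we get $uu_2=vv_2\cdots v_{n-1}s$ with $0<|s|=2|u|-(n-1)|v|<|v|$, while $s'$ is a prefix of $u_3$. Everything then hinges on whether $u_2=\te(u)$ or $u_2=u$.

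In the case $u_2=\te(u)$, the word $uu_2=u\te(u)$ is a $\te$-palindrome, so $vv_2\cdots v_n=(u\te(u))s'$ is a $\te$-palindrome followed by a word $s'$ shorter than $v$. Because $v$ is $\te$-primitive, Lemma~\ref{lem:pal_x1-xn} (with $p=u\te(u)$, $z=s'$, $x=v$) immediately forces $v_2=\cdots=v_{n-1}=v$, hence $u\te(u)=v^{n-1}s$. Applying $\te$ to this palindrome gives $v^{n-1}s=\te(s)\te(v)^{n-1}$; comparing prefixes of length $|s|$ yields $v=pq$ with $p=\te(s)$, after which cancelling the leading $p$ and comparing a short prefix forces $q=\te(q)$, and cancelling once more forces $p=\te(p)$ (this is exactly where $n\ge 3$ is used). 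Thus $p,q$ are non-empty $\te$-palindromes, $s=p$, and $u\te(u)=(pq)^{n-1}p$; the final identification $v_n=v$ then follows by matching the suffix $s'$ of $v_n$ against the beginning of $u_3$ and reusing the ``moreover'' clause of Lemma~\ref{lem:pal_x1-xn} once $s'$ is seen to be the $\te$-palindrome $q$.

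In the case $u_2=u$, the relation becomes $u^2=vv_2\cdots v_{n-1}s$, a square, and I would exploit its period $|u|$. The inequality $\frac{n-1}{2}|v|<|u|<\frac{n}{2}|v|$ places the start of the second copy of $u$ inside the central block $v_{\lceil n/2\rceil}$, and Theorem~\ref{thm:overlap3} (together with Lemma~\ref{lem:overlap2}) controls how a $v$-block may overlap a $\te(v)$-block, ruling out any mixed pattern away from the centre. The aim is to show that the block sequence is constant $v$ up to the midpoint and constant $\te(v)$ afterwards, which forces $n$ to be even and $v_2=\cdots=v_{n/2}=v$, $v_{n/2+1}=\cdots=v_n=\te(v)$. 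At the single transition the self-overlap produces a $\te$-commutation equation of the form $xy=\te(y)x$, whose solutions are $x=r(tr)^i$, $y=(tr)^j$ with $\te$-palindromes $r,t$ and $rt$ primitive by Proposition~\ref{prop:th-commute}; substituting these back reconstructs $v=r(tr)^i(rt)^{i+j}r$ and $u=v^{n/2-1}r(tr)^i(rt)^j$.

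The main obstacle is this second case: unlike the palindrome case there is no single clean word equation to appeal to, and the delicate point is to prove that the transition from $v$ to $\te(v)$ occurs \emph{exactly once} and \emph{exactly at the centre} of the square, so that in particular $n$ must be even. Theorem~\ref{thm:overlap3} is the essential tool that excludes mixed overlaps and pins down the transition point; once it is located, extracting the $\te$-commutation equation and invoking Proposition~\ref{prop:th-commute} to obtain the $r,t,i,j$ parametrization is comparatively routine.
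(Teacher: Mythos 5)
First, a point of reference: the paper itself does \emph{not} prove Proposition~\ref{prop:CCKS09-a}; it is imported as a citation from \cite{CCKS09}, so there is no in-paper proof to compare your attempt against. Judged on its own merits, your argument for the case $u_2=\te(u)$ is essentially sound up to its last step: applying Lemma~\ref{lem:pal_x1-xn} to $u\te(u)\,s' = vv_2\cdots v_n$ to get $v_2=\cdots=v_{n-1}=v$, and then the palindromic bootstrap $(pq)^{n-1}\te(p)=p\,(\te(q)\te(p))^{n-1}$ forcing first $q=\te(q)$ and then, using $n\ge 3$, $p=\te(p)$, is a correct and clean derivation of $u\te(u)=(pq)^{n-1}p$, $v=pq$. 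However, the final identification $v_n=v$ is circular as written: you say it follows ``once $s'$ is seen to be the $\te$-palindrome $q$,'' but $s'=q$ is literally equivalent to $v_n=ps'=pq=v$, which is the claim to be proved. What must actually be excluded is $v_n=\te(v)$, i.e.\ the conjugacy relation $ps'=qp$ with $s'\in\Pref(u_3)$; nothing in your text addresses this possibility, and the ``moreover'' clause of Lemma~\ref{lem:pal_x1-xn} cannot be invoked until $s'$ is independently known to be a $\te$-palindrome.

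The case $u_2=u$ is the real gap: what you offer there is a plan, not a proof, as you yourself concede (``the main obstacle,'' ``the aim is to show''). The entire content of item 2 of the proposition is precisely what you leave open: that in $u^2=vv_2\cdots v_{n-1}s$ the block sequence has exactly one $v\to\te(v)$ transition, that it sits at the midpoint (whence $n$ is even), and that no other pattern occurs --- including the all-$v$ pattern, which Theorem~\ref{thm:overlap3} says nothing about and which must be killed separately by a Fine-and-Wilf argument (shared prefix of length $2|u|\ge|u|+|v|$ would force $\rho(u)=\rho(v)$, contradicting $(n-1)|v|<2|u|<n|v|$). Theorem~\ref{thm:overlap3} and Lemma~\ref{lem:overlap2} alone do not deliver the block structure; the proof in \cite{CCKS09} rests on a dedicated analysis of non-trivial overlaps of the form $v_1\cdots v_m x = y\,v_{m+1}\cdots v_{2m}$ (mentioned in this paper's introduction), which tracks how the shift by $|u|$ realigns the $v$-blocks against themselves. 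Finally, extracting the $\te$-commutation equation $xy=\te(y)x$ at the transition and verifying that Proposition~\ref{prop:th-commute} reproduces exactly $v=r(tr)^i(rt)^{i+j}r$ and $u=v^{n/2-1}r(tr)^i(rt)^j$ is asserted rather than carried out. As it stands, the proposal does not constitute a proof of the proposition.
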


This proposition helps in proving that non-trivial $(4, \ge 3, \ge 3)$ ExLS equations verify the one more condition that $|v| \neq |w|$ as shown in the next proposition. 

\begin{proposition}\label{prop:different_length}
	Non-trivial ExLS equations under the setting of Problem~\ref{prob:main} imply $|v| \neq |w|$. 
\end{proposition}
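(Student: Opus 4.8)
The plan is to argue by contradiction: suppose a non-trivial equation holds with $|v| = |w| =: d$, and derive an impossibility. First I would pin down the parameters. Non-triviality gives $(n-1)|v| < 2|u| < n|v|$ (the characterisation obtained from Theorem~\ref{thm:trivial} and Proposition~\ref{prop:longenough}), while counting lengths yields $4|u| = n|v| + m|w| = (n+m)d$. Substituting $d$ into the two strict inequalities forces $n - 2 < m < n$, hence $m = n-1$ and $4|u| = (2n-1)d$. Two features of this will be used repeatedly: $|u| = (2n-1)d/4$ is \emph{not} a multiple of $d$, and $2|u| = (2n-1)d/2$, so $u_1u_2$ ends exactly at the midpoint of the last $v$-block $v_n$. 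Writing $v_n = s's$ with $|s'| = |s| = d/2$, the tail of the equation therefore reads $u_3u_4 = s\,w_1 \cdots w_m$; since $|s| < |u|$, the block $u_3 \in \{u, \te(u)\}$ must begin with $s$.

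Before splitting into cases I would dispose of one choice of $u_4$ uniformly. Because $u_1 = u$ begins with $v$ and $u_4$ ends with $w_m = w$, taking $u_4 = \te(u)$ would make $u$ begin with $\te(w)$, forcing $v = \te(w)$, i.e.\ $w = \te(v)$. Then every block lies in $\{v, \te(v)\}$, so $u_1u_2u_3u_4$ lies in $\{v,\te(v)\}^{2n-1}$ while belonging to $\{u,\te(u)\}^4$; these two representations share the full prefix of length $4|u| \ge 2|u| + d - \gcd(|u|,d)$, so Theorem~\ref{thm:exFWgcd} gives $u \in \{v, \te(v)\}^+$, whence $d \mid |u|$ --- impossible. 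Hence $u_4 = u$.

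Next I would invoke Proposition~\ref{prop:CCKS09-a} on the $v$-side (legitimate since $v_1\cdots v_n \in \Pref(u_1u_2u_3)$ and $(n-1)|v| < 2|u| < n|v|$), obtaining its two structural cases. In Case 1, $v_n = v = pq$ for $\te$-palindromes $p,q$; the forced length relation gives $|p| = |q| = d/2$, so $s = q$. Matching the length-$d/2$ prefix of $u_3$ against $q$: if $u_3 = u$ this prefix is $p$ (as $u$ starts with $v = pq$), forcing $p = q$ and $v = p^2$; if $u_3 = \te(u)$, expanding $\te(u)$ through the decompositions $p = \beta\te(\beta)$ and $q = \gamma\te(\gamma)$ forces $\gamma = \te(\gamma)$ and $\beta = \gamma$, so $p = q = \gamma^2$ and $v = \gamma^4$. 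In either subcase $v$ is a proper power of a $\te$-palindrome, contradicting its $\te$-primitivity.

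The main obstacle is Case 2, where $u_1 = u_2 = u$, $v_n = \te(v)$, and $u = v^{n/2-1}r(tr)^i(rt)^j$ for $\te$-palindromes $r,t$ with $rt$ primitive. Here $s$ (the second half of $\te(v)$) is no longer a palindrome, so the prefix/suffix matching of $s$ against the appropriate half of $u$ or $\te(u)$ must be carried out entirely in terms of $r$ and $t$, using the forced value $|r(tr)^i(rt)^j| = 3d/4$. The crux is to show that the word equalities produced by these alignments are incompatible with $rt$ being primitive or, through Proposition~\ref{prop:pali_conjugate}, with the $\te$-primitivity of $v$; the non-integrality of $|u|/d$ recorded at the outset is what ultimately prevents the configuration from closing up.
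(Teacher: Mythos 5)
Your first half is sound and essentially reproduces the paper's opening moves: the length count $4|u| = (2n-1)d$ with $m = n-1$, the observation that the border between $u_2$ and $u_3$ bisects $v_n$, and the elimination of $u_4 = \te(u)$ via $w = \te(v)$ and the non-integrality of $|u|/d$. Your Case 1 ($u_2 = \te(u)$) is also correct, handled by a legitimate variant: you match the length-$d/2$ prefix of $u_3$ against $q$, whereas the paper derives $|p| = |q|$ and shows that $w_m$, being a suffix of $u_4 = u$ of length $d$, is a $\te$-palindrome of even length, hence of the form $y\te(y)$ and not $\te$-primitive.

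The genuine gap is your Case 2 ($u_2 = u$): you set it up and then only describe what ``the crux'' would be; announcing that the non-integrality of $|u|/d$ ``ultimately prevents the configuration from closing up'' is not an argument, and it is in fact not how this case closes. Two tools already available to you would finish it. First, for the subcase $u_3 = \te(u_2)$ (this works uniformly, whatever $u_2$ is, and is how the paper disposes of $u_3 \neq u_2$ before any case split): $u_2\te(u_2)$ is a $\te$-palindrome whose centre coincides with the midpoint of $v_n$, and $v_n$ lies inside $u_2u_3$ symmetrically about that centre, so $v_n = \te(v_n)$, i.e. $v_n = x\te(x)$ for its first half $x$, contradicting the $\te$-primitivity of $v$. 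Second, for the remaining subcase $u_2 = u_3 = u_4 = u$, i.e. $u^4 = v^{n/2}\te(v)^{n/2}w_1\cdots w_m$: substituting the expressions for $u$ and $v$ from Proposition~\ref{prop:CCKS09-a} shows that $w_1\cdots w_m$ equals an explicit $\te$-palindrome of even length, so Lemma~\ref{lem:pali_even} forces $m$ to be even; but Proposition~\ref{prop:CCKS09-a} also forces $n$ to be even, so $m = n-1$ is odd --- a parity contradiction, with no need to manipulate $r$ and $t$ at all. Without one of these (or an equivalent) completions, your argument establishes the proposition only when $u_2 = \te(u)$.
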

\begin{proof}
	Suppose that the equation were non-trivial with $|v| = |w|$. 
	Combining $|v|=|w|$ and the non-trivial length condition together implies $m = n-1$ and furthermore the border between $u_2$ and $u_3$ splits $v_{n}$ into exactly halves. 
	Hence if $u_3 = \te(u_2)$, then $v_n = x\te(x)$ for some $x \in \Sigma^+$, contradicting the $\te$-primitivity of $v$. 
	Besides, due to the condition~\ref{cond:length} of Problem~\ref{prob:main}, if $u_4 = \te(u_1)$, then $w = \te(v)$, and hence $u_1u_2u_3u_4 \in \{v, \te(v)\}^+$. 
	Taking $(n-1)|v| < 2|u| < n|v|$ into account, this implies that $v$ is not $\te$-primitive, raising a contradiction. 
	Therefore, the only possible solutions verify $u_3 = u_2$ and $u_4 = u_1 = u$. 

	If $u_2 = u_3 = u$, then according to Proposition~\ref{prop:CCKS09-a}, $n$ is even, and by substituting the representations of $u$ and $v$ given there into $u^4 = v^{n/2}\te(v)^{n/2} w_1 \cdots w_m$, we obtain that $w_1 \cdots w_m = (tr)^j [r(tr)^i r(tr)^{i+j}]^{n/2-1} [r(tr)^{i+j}r(tr)^i]^{n/2-1} (rt)^j$, which is a $\te$-palindrome of even length. 
	Since $w$ is $\te$-primitive, $m$ has to be even (Lemma~\ref{lem:pali_even}).
	It is however impossible because $m = n-1$ and $n$ is even. 
 
	If $u_2 = u_3 = \te(u)$, then Proposition~\ref{prop:CCKS09-a} gives $v = pq$ and $u_1u_2 = u\te(u) = (pq)^{n-1}p$ for some $\te$-palindromes $p, q \in \Sigma^+$. 
	Note that the left side of the ExLS equation is as long as its right side ($4|u| = n|v| + m|w| = (2n-1)|pq|$). 
	Substituting $2|u| = (n-1)|pq| + |p|$ into this yields $|p| = |q|$ and it in turn implies that both $p$ and $q$ are of even length. 
	Let $p = p'\te(p')$ and $q = q'\te(q')$ for some $p', q' \in \Sigma^+$ of the same length. 
	Then $u_1 = u$ ends with either $\te(p')qp'$ or $\te(q')pq'$, and so $w_m$ is either of them. 
	However, neither is $\te$-primitive. 
	This contradiction proves that the equation is trivial. 
\end{proof}

Supposing that some non-trivial $(4, \ge 3, \ge 3)$ ExLS equation held, the next claim would follow from this proposition. 
Although our conclusion in this section will prove that this claim cannot hold, the equation proposed there, $u_3u_4 = q w_1 \cdots w_m$, or more generally the relation $q w_1 \cdots w_m \in \{u, \te(u)\}^{\ge 2}$ provides in its own right challenging themes. 

\begin{claim}\label{claim:non-trivial}
	Under the setting of Problem \ref{prob:main}, if the ExLS equation were non-trivial, then we would have $u_3u_4 = q w_1 \cdots w_m$ for some non-empty $\te$-palindrome $q$. 
\end{claim}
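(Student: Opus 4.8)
The plan is to locate, under non-triviality, the exact point where the block $v_1 \cdots v_n$ ends inside $u_1u_2u_3u_4$, and then to read off the claimed palindrome $q$ directly from the two representations supplied by Proposition~\ref{prop:CCKS09-a}. Recall that non-triviality means $(n-1)|v| < 2|u| < n|v|$.

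First I would record that this double inequality forces the factor $v_n$ to straddle the border between $u_2$ and $u_3$. Writing $v_n = q_1 q_2$, with $q_1$ the part lying inside $u_1u_2$ and $q_2$ the part lying inside $u_3u_4$, the lower bound $(n-1)|v| < 2|u|$ makes $v_1 \cdots v_{n-1}q_1 = u_1u_2$ meaningful, while the upper bound $2|u| < n|v|$ forces $q_2 \neq \lambda$. Cancelling the common prefix $v_1 \cdots v_{n-1}q_1 = u_1u_2$ from both sides of the ExLS equation then yields $u_3u_4 = q_2 w_1 \cdots w_m$, so it only remains to prove that $q_2$ is a $\te$-palindrome, and we may take $q = q_2$.

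Next, since $v_1 = v$, $u_1 = u$, and $n|v| < 2|u| + |v| < 3|u|$ (using $(n-1)|v| < 2|u|$ together with $|v| < |u|$ from Condition~\ref{cond:length}), the factor $v_1 \cdots v_n$ is a prefix of $u_1u_2u_3 = uu_2u_3$; hence Proposition~\ref{prop:CCKS09-a} applies and splits the analysis into its two cases. In the case $u_2 = \te(u)$ we have $v = pq$ for non-empty $\te$-palindromes $p, q$ with $u\te(u) = (pq)^{n-1}p$, so a length count gives $|q_2| = n|v| - 2|u| = |q|$, whence $q_2$ is the length-$|q|$ suffix of $v_n = pq$, that is $q_2 = q$, already a $\te$-palindrome. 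In the case $u_2 = u$ we substitute $v = r(tr)^i(rt)^{i+j}r$ and the given expression for $u$, compute $\te(v) = r(tr)^{i+j}(rt)^i r$, and extract from $v_n = \te(v)$ its suffix of length $n|v| - 2|u| = 2|r| + 2ic$, where $c = |rt| = |tr|$; this gives $q_2 = r(tr)^i(rt)^i r$, and applying $\te$ while using $\te(r) = r$, $\te(t) = t$ shows $\te(q_2) = q_2$.

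The main obstacle is this second case: one must track the suffix of $\te(v)$ of the prescribed length through the $r,t$-parametrisation, and confirm both that it lands strictly inside $v_n$ (which uses $j \ge 1$ to guarantee $2|r| + 2ic < |v|$) and that it is genuinely a $\te$-palindrome. The first case and the length bookkeeping are routine once the straddling picture is set up.
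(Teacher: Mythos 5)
Your proof is correct and follows essentially the same route as the paper: both apply Proposition~\ref{prop:CCKS09-a} to the non-trivial equation, split on $u_2 = \te(u)$ versus $u_2 = u$, and identify the leftover word as $q$ (respectively $[r(tr)^i]^2$, which equals your $r(tr)^i(rt)^ir$), a $\te$-palindrome. The paper merely verifies the identities $u\te(u)q = v^n$ and $uu[r(tr)^i]^2 = v^{n/2}\te(v)^{n/2}$ directly instead of your length bookkeeping, but the content is the same.
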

\begin{proof}
	According to the presentations of $u$ and $v$ given in Proposition~\ref{prop:CCKS09-a}, 
	if $u_2 = \te(u)$, then $u\te(u)q = v^n$ and hence $u_3u_4 = q w_1 \cdots w_m$; 
	otherwise, $uu [r(tr)^i]^2 = v^{n/2} \te(v)^{n/2}$ so that $u_3u_4 = [r(tr)^i]^2 w_1 \cdots w_m$. 
	Since $q, r, t$ are $\te$-palindromes, this claim holds. 
\end{proof}

As we shall see soon in Claim~\ref{claim:u3nequ4}, the next lemma is of use when considering non-trivial ExLS equations with $u_3 \neq u_4$, that is, $u_3u_4$ being a $\te$-palindrome. 

\begin{lemma}\label{lem:pali_pref_pali}
	Let $p, q$ be non-empty $\te$-palindromes and let $w$ be a $\te$-primitive word. 
	For some $k \ge 1$ and words $w_1, \ldots, w_k \in \{w, \te(w)\}$, if $p = q w_1 \cdots w_k$ holds, then either $p, q \in \{w, \te(w)\}^+$ or $w_1 = \cdots = w_k$.
\end{lemma}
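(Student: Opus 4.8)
The plan is to convert the palindromic hypothesis into a $\te$-commutation equation and then pull the resulting algebraic description back into the world of $\{w, \te(w)\}$-blocks. First I would dispose of $k = 1$, where the alternative ``$w_1 = \cdots = w_k$'' holds vacuously, and henceforth assume $k \ge 2$. Writing $W = w_1 \cdots w_k$, the two facts that $p$ and $q$ are $\te$-palindromes and $p = qW$ give, upon applying $\te$ to $p = qW$, the single relation $qW = \te(W)q$ (using $\te(q) = q$). This is exactly an instance of the $\te$-commutation equation $xy = \te(y)x$ with $x = q$ and $y = W$, both non-empty, so Proposition~\ref{prop:th-commute} applies and yields $\te$-palindromes $r, t$ with $rt$ primitive, together with integers $i \ge 0$, $j \ge 1$, such that $q = r(tr)^i$ and $W = (tr)^j$; consequently $p = qW = r(tr)^{i+j}$.

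Next I would transfer the factor $tr$ into $\{w, \te(w)\}^+$. Since $W = (tr)^j \in \{w, \te(w)\}^+$ is a genuine power of $tr$, the first item of Lemma~\ref{lem:rt_th-rootshare}, applied with $u = tr$, $n = j$, and $x = w$, immediately gives $tr \in \{w, \te(w)\}^+$. I would then write $tr = x'_1 \cdots x'_s$ with each $x'_\ell \in \{w, \te(w)\}$ and $s \ge 1$, noting that $tr$, being a conjugate of the primitive word $rt$, is itself primitive.

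The case split on $s$ then produces precisely the two alternatives of the statement. If $s = 1$, then $tr$ is a single block in $\{w, \te(w)\}$, so $W = (tr)^j$ is a power of one and the same block; comparing lengths forces $k = j$ and $w_1 = \cdots = w_k$, which is the second conclusion. If $s \ge 2$, I would recover $r \in \{w, \te(w)\}^+$ from the factorisation $tr = t \cdot r$: when both $t, r$ are non-empty this is exactly Proposition~\ref{prop:clean_split} (with $p \mapsto t$, $q \mapsto r$, since $t, r$ are $\te$-palindromes and $tr$ is primitive), which splits $tr$ at block boundaries and hence places $r$ in $\{w, \te(w)\}^+$; the degenerate cases $t = \lambda$ (whence $r = tr \in \{w, \te(w)\}^+$ outright) and $r = \lambda$ (whence $q = t^{\,i}$ with $i \ge 1$) are immediate. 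In every sub-case $q = r(tr)^i$ is a catenation of words of $\{w, \te(w)\}^+$, so $q \in \{w, \te(w)\}^+$, and then $p = qW \in \{w, \te(w)\}^+$, giving the first conclusion.

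The main obstacle is the passage from the abstract conjugation data $(r, t, i, j)$ back to the block structure: one must recognise that the \emph{only} feature separating the two outcomes is whether $tr$ spans a single $\{w, \te(w)\}$-block or several, and then invoke exactly the right earlier result in each regime --- Lemma~\ref{lem:rt_th-rootshare}(1) to push $tr$ into $\{w, \te(w)\}^+$, and Proposition~\ref{prop:clean_split} to split $tr = t \cdot r$ so that $r$, and therefore $q$ and $p$, become block-aligned. The bookkeeping for the $s = 1$ and the empty-$r$/empty-$t$ boundary cases is the only delicate point; everything else reduces to direct applications of the cited results.
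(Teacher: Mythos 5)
Your proof is correct, but it takes a genuinely different route from the paper's. The paper argues directly on overlaps: since $p = \te(p)$, the suffix $w_1 \cdots w_k$ of $p$ reflects to the prefix $\te(w_1 \cdots w_k)$, and iterating this reflection shows $q \in \Suff\bigl((w_1 \cdots w_k)^i\bigr)$ for some $i \ge 1$; then either $q$ falls on block boundaries, giving $p, q \in \{w, \te(w)\}^+$, or $p$ and $\te(p) = p$ overlap non-trivially as products of $\{w,\te(w)\}$-blocks, and Theorem~\ref{thm:overlap3} immediately forces $w_1 = \cdots = w_k$. You instead convert the hypothesis into the $\te$-commutation equation $qW = \te(W)q$ (with $W = w_1 \cdots w_k$), invoke Proposition~\ref{prop:th-commute} to obtain $q = r(tr)^i$, $W = (tr)^j$, pull $tr$ into $\{w, \te(w)\}^+$ via Lemma~\ref{lem:rt_th-rootshare}(1) (ultimately Theorem~\ref{thm:exFWlcm}), and then split on whether $tr$ spans one block or several, applying Proposition~\ref{prop:clean_split} to $tr = t \cdot r$ in the latter case. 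Both routes rest on Theorem~\ref{thm:overlap3} --- yours indirectly, since Proposition~\ref{prop:clean_split} is itself derived from it --- but your argument is more algebraic and modular, and it yields extra structural information as a by-product (explicit $r,t$-representations of $q$ and $W$, and the fact that in the second alternative the common block is exactly $tr$), at the price of heavier machinery; the paper's argument is leaner, needs no structure theorem, and exhibits the forbidden overlap explicitly. Your handling of the degenerate cases $t = \lambda$ and $r = \lambda$ in Proposition~\ref{prop:th-commute} (which its statement does not exclude) is necessary, and you handle them correctly, as you do the length comparison forcing $j = k$ and $w_1 = \cdots = w_k = tr$ when $tr$ is a single block.
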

\begin{proof}
	First we prove that $q \in \Suff((w_1 \cdots w_k)^+)$. 
	Since $w_1 \cdots w_k \in \Suff(p)$, $p$ being a $\te$-palindrome implies $\te(w_1 \cdots w_k) \in \Pref(p)$. 
	Thus if $|q| \le k|w|$, then $q \in \Pref(\te(w_1 \cdots w_k))$, that is, $q \in \Suff(w_1 \cdots w_k)$ and we are done. 
	Otherwise, $w_1 \cdots w_k \in \Suff(q)$ so that $(w_1 \cdots w_k)^2 \in \Suff(p)$. 
	By repeating this process, eventually we will find some integer $i \ge 1$ such that $q \in \Suff((w_1 \cdots w_k)^i)$. 

	If $q \in \{w, \te(w)\}^+$, then obviously $p \in \{w, \te(w)\}^+$. 
	Otherwise, let $q = w'w_{j+1}\cdots w_k (w_1 \cdots w_k)^i$ for some $1 \leq j \leq k$ and $i \geq 0$, where $w'$ is a non-empty proper suffix of $w_{j}$. 
	Then, $p = w'w_{j+1}\cdots w_k (w_1 \cdots w_k)^{i+1}$ overlaps in a non-trivial way with $p = \te(p) = (\te(w_k)\cdots \te(w_1))^{i+1}\te(w_k)\cdots \te(w_{j+1})\te(w')$, and Theorem~\ref{thm:overlap3} implies that $w_1 = \cdots = w_k$. 
\end{proof}

\begin{claim}\label{claim:u3nequ4}
	Under the setting of Problem \ref{prob:main}, if the ExLS equation were non-trivial and $u_3 \neq u_4$, then $w_1 = \cdots = w_m = w$ and $u_3u_4 \in \Suff(w^+)$. 
\end{claim}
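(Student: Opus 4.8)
The plan is to start from Claim~\ref{claim:non-trivial}: since the equation is non-trivial, $u_3u_4 = q\,w_1 \cdots w_m$ for some non-empty $\te$-palindrome $q$. The extra hypothesis $u_3 \neq u_4$ means $u_3u_4 \in \{u\te(u), \te(u)u\}$, and each of these words is a $\te$-palindrome (e.g.\ $\te(u\te(u)) = \te(\te(u))\te(u) = u\te(u)$). Hence both $p := u_3u_4$ and $q$ are non-empty $\te$-palindromes, $w$ is $\te$-primitive, and $p = q\,w_1\cdots w_m$, so Lemma~\ref{lem:pali_pref_pali} applies and yields two cases: (A) $u_3u_4, q \in \{w, \te(w)\}^+$, or (B) $w_1 = \cdots = w_m$. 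Since $w_m = w$ by Condition~\ref{cond:fixed_word}, case (B) already gives the first half of the conclusion, $w_1 = \cdots = w_m = w$; the work is to exclude (A), and then in (B) to prove $u_3u_4 \in \Suff(w^+)$.

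To rule out (A), I would first extract $u \in \{w, \te(w)\}^+$ from $u_3u_4 \in \{w, \te(w)\}^+$. Indeed $u_3u_4$ is a $\te$-palindrome of even length $2|u|$, so Lemma~\ref{lem:pali_even} forces the number of factors in its $\{w, \te(w)\}$-decomposition to be even, say $2t$; then $|u| = t|w|$ (with $t \ge 1$ by Condition~\ref{cond:length}), and reading off the prefix of length $|u|$ shows $u \in \{w, \te(w)\}^+$ (in the case $u_3u_4 = \te(u)u$ one first gets $\te(u)$, hence $u$, in $\{w, \te(w)\}^+$). Now $u$ and $w$ both lie in $\{w, \te(w)\}^+$, so by the strengthened form of Lemma~\ref{lem:two_enough} so does $v$; since $v$ and $w$ are $\te$-primitive this forces $v \in \{w, \te(w)\}$, i.e.\ $|v| = |w|$, contradicting Proposition~\ref{prop:different_length}. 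Thus (A) cannot occur.

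For (B), with $w_1 = \cdots = w_m = w$ we have $u_3u_4 = q w^m$; applying $\te$ and using that $u_3u_4$ and $q$ are $\te$-palindromes gives $u_3u_4 = \te(w)^m q$, whence $q\,w^m = \te(w)^m q$. This is precisely the $\te$-commutativity equation $xy = \te(y)x$ with $x = q$ and $y = w^m$, so Proposition~\ref{prop:th-commute} provides $\te$-palindromes $r, t$ with $rt$ primitive, together with $i \ge 0$ and $j \ge 1$, such that $q = r(tr)^i$ and $w^m = (tr)^j$. Since $w$ is primitive and $tr$ (a conjugate of the primitive word $rt$) is primitive, uniqueness of the primitive root yields $w = tr$ and $m = j$; then $q = r w^i$ with $r \in \Suff(tr) = \Suff(w)$, so $q \in \Suff(w^+)$ and therefore $u_3u_4 = q w^m \in \Suff(w^+)$, completing the proof. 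I expect the exclusion of case (A) to be the main obstacle: the $\te$-palindrome bookkeeping needed to pull $u$ itself out of $u_3u_4 \in \{w, \te(w)\}^+$ via Lemma~\ref{lem:pali_even}, followed by the appeal to Proposition~\ref{prop:different_length}, is where the argument is least mechanical, whereas the crux of (B) is the one clean observation that the palindrome condition turns $u_3u_4 = q w^m$ into a $\te$-commutativity equation.
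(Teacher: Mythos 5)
Your proof is correct, and its skeleton is exactly the paper's: obtain $u_3u_4 = q\,w_1\cdots w_m$ with $q$ a non-empty $\te$-palindrome (Claim~\ref{claim:non-trivial}), note that $u_3 \neq u_4$ makes $u_3u_4$ a $\te$-palindrome, apply the dichotomy of Lemma~\ref{lem:pali_pref_pali}, and kill the case $u_3u_4, q \in \{w, \te(w)\}^+$ by showing it forces the equation to be trivial. The differences are in the details. For the excluded case, the paper pulls $u$ out of $u_3u_4 \in \{w, \te(w)\}^+$ via Theorem~\ref{thm:exFWlcm} and then notes directly that the equation would be trivial; you instead use Lemma~\ref{lem:pali_even} plus block-counting to get $u \in \{w, \te(w)\}^+$, and reach the contradiction through Proposition~\ref{prop:different_length} ($|v| = |w|$ against $|v| \neq |w|$). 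Both are sound and non-circular; yours is slightly more elementary, the paper's is shorter. The genuine divergence is the second half of the statement, $u_3u_4 \in \Suff(w^+)$: the paper's proof never addresses it explicitly --- it is a byproduct of the first step inside the proof of Lemma~\ref{lem:pali_pref_pali}, which establishes $q \in \Suff((w_1\cdots w_m)^+)$ but is not exposed in that lemma's statement --- whereas you derive it self-containedly by observing that palindromicity of $u_3u_4$ and $q$ turns $u_3u_4 = q w^m$ into the $\te$-commutativity equation $q\,w^m = \te(w)^m q$, and then applying Proposition~\ref{prop:th-commute} together with uniqueness of the primitive root to get $w = tr$, $q = r w^i \in \Suff(w^+)$. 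So your argument buys a complete, quotable proof of the suffix claim at the cost of invoking the $\te$-commutativity machinery, while the paper's version is terser but leans on an unadvertised detail of another proof.
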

\begin{proof}
	We have $u_3u_4 = x w_1 \cdots w_m$ for some non-empty $\te$-palindrome $x \in \Sigma^+$ due to Proposition~\ref{prop:CCKS09-a}.
	As suggested before, we can employ Lemma~\ref{lem:pali_pref_pali} to get either $x, u_3u_4 \in \{w, \te(w)\}^+$ or $w_1 = \cdots = w_m$.  
	In the first case, Theorem~\ref{thm:exFWlcm} implies $u \in \{w, \te(w)\}^+$ because $w$ is assumed to be $\te$-primitive. 
	Then the ExLS equation in turn implies that $v_1 \cdots v_n \in \{w, \te(w)\}^+$ and hence $v \in \{w, \te(w)\}$ for the same reason. 
	As a result the equation would be trivial. 
	Consequently $w_1 = \cdots = w_m$.
\end{proof}

The main strategy used in the analyses of non-trivial ExLS equations is to split $w_1 \cdots w_m$ into smaller components which are still in $\{w, \te(w)\}^+$, until we reach a contradiction. 
The split is mainly achieved by Propositions~\ref{prop:pali_split} and \ref{prop:clean_split}. 
Note that the word to which Proposition~\ref{prop:clean_split} is applied must be primitive.  
The next two lemmas work for this purpose in Subsection~\ref{subsec:ExLS4_00}, but we provide them in more general form. 
An interesting point is that Lyndon and Sch\"{u}tzenberger's original result (Theorem~\ref{thm:original}) plays an essential role in their proofs; hence for the ExLS equation.

\begin{proposition}\label{prop:rt2_prime_present1}
	Let $r, t \in \Sigma^+$ such that $rt$ is primitive. 
	For any $i \ge 0$, $j, k \ge 1$, and $n \ge 2$, $(tr)^j [(r(tr)^i)^n (tr)^j]^k$ is primitive.
\end{proposition}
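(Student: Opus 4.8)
The plan is to set $p = tr$. Since $rt$ is primitive and $tr$ is a conjugate of $rt$, the word $p$ is primitive. Writing $A = r(tr)^i = rp^i$ and $B = (tr)^j = p^j$, the word in question is $W = B(A^nB)^k = p^j(A^np^j)^k$. The single fact that drives the whole argument is that $A \notin p^+$: indeed $A = rp^i$ with $0 < |r| < |p|$, so $A \in p^+$ would force $r \in p^+$ by Lemma~\ref{lem:rt_rootshare}, which is impossible on length grounds. By Lemma~\ref{lem:rt_rootshare} this also yields $A^s \notin p^+$ for every $s \ge 1$. Thus the goal becomes: assume $W = z^e$ for a primitive $z$ and some $e \ge 2$, and derive the forbidden conclusion $A \in p^+$.

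First I would dispose of the configurations in which conjugating $W$ exposes two genuine powers, so that Lyndon and Sch\"utzenberger's theorem (Theorem~\ref{thm:original}) applies directly. For $k = 1$ the cyclic conjugate of $W = p^jA^np^j$ obtained by moving the leading $p^j$ to the end is $A^np^{2j}$, which has the form $v^n w^{2j}$ with $v = A$, $w = p$ and both exponents at least $2$ (this is where $n \ge 2$ is used). Since $W$ is an $e$-th power, so is every conjugate, and Theorem~\ref{thm:original} gives $\rho(A) = \rho(p) = p$, i.e.\ $A \in p^+$, a contradiction. For $k \ge 2$ and $j \ge 2$ one applies the same theorem, for every $e$, to the factorisation $W = (p^j)\,\big((A^np^j)^k\big)$, whose two factors are powers of exponents $j \ge 2$ and $k \ge 2$; this forces $\rho(p) = \rho(A^np^j)$, hence $A^np^j \in p^+$, and stripping the trailing $p^j$ via Lemma~\ref{lem:rt_rootshare} again produces $A \in p^+$.

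The remaining case is $j = 1$, $k \ge 2$, where one exponent collapses to $1$ and Theorem~\ref{thm:original} no longer applies verbatim; this is the main obstacle. Here I would pass to Fine and Wilf's theorem. After conjugating $W$ to $(A^np^j)^kp^j$, its prefix $(A^np^j)^k$ is a $k$-th power of length $k\,|A^np^j|$ and simultaneously carries the period $|z| = |W|/e$ inherited from $W = z^e$. A length count (using $n \ge 2$, so $|A^np^j| > j|p|$) shows this prefix is long enough for Fine and Wilf's theorem to give $\rho(z) = \rho(A^np^j)$ whenever $k \ge 3$, or when $k = 2$ and $e \ge 3$. Unwinding $\rho(z) = \rho(A^np^j)$ exactly as in the previous paragraph then forces $A \in p^+$.

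The genuinely tight case is $e = k = 2$ (with $j = 1$), that is $W = z^2 = pA^npA^np$. A direct length computation places the midpoint of $W$ strictly inside the central occurrence of $p$, so writing $p = p'p''$ with $p', p''$ non-empty one obtains $pA^np' = p''A^np$, which rearranges to a conjugacy equation $XC = CY$ with $X = p'$, $Y = p''$ and $C = p''A^np'$. Invoking the known solution form of $xy = zx$ recalled before Proposition~\ref{prop:th-commute}, together with the fact that the long internal power $A^n$ sits inside $C$ while $\rho(C)$ must coincide with the primitive word $p$, pins down $A^n$ as a power of $p$, hence once more $A \in p^+$. In every branch the conclusion is the impossible relation $A \in p^+$, so no $e \ge 2$ can exist and $W$ is primitive; I expect this last conjugacy analysis, resting on the primitivity of $p = tr$, to be where the real work lies.
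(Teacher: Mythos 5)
Your reduction to the single forbidden conclusion $A \in p^+$ is sound, and three of your four cases are correct as written: $k=1$ by conjugating to $A^np^{2j}$ and applying Theorem~\ref{thm:original}; $k\ge 2$, $j\ge 2$ by applying it directly to $z^e = p^j(A^np^j)^k$; and $j=1$ with $k\ge 3$, or $k=2$ and $e\ge 3$, by Fine and Wilf (your length counts check out, though the ``unwinding'' there needs one extra step: Fine and Wilf gives $\rho(z')=\rho(A^np)$, and you must first conclude $p = z'$ from $p = (z')^{e-ks}$ and the primitivity of $p$ before stripping). The genuine gap is in the tight case $e=k=2$, $j=1$. There your setup is right: the midpoint of $W$ splits the central $p$ as $p = p'p''$ with $|p'|=|p''|$, giving $z = pA^np' = p''A^np$, i.e.\ $p'C = Cp''$ with $C = p''A^np'$. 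But the next two assertions --- that $\rho(C)$ ``must coincide with the primitive word $p$'', and that this ``pins down $A^n$ as a power of $p$'' --- are non-sequiturs. The solution form of $xy=zx$ gives $C = s(us)^\alpha$, $p'' = (us)^\beta$, $p' = (su)^\beta$ with $su$ primitive, and nothing in it implies $\rho(C)=p$; and even granting $C \in p^+$, the factors $p''$ and $p'$ flanking $A^n$ are not themselves powers of $p$, so $A^n \in p^+$ does not follow without further argument. Since you yourself flag this case as ``where the real work lies'', as it stands the proof is incomplete. The fix, however, is one line from your own equation: in $p'C = Cp''$ the left side begins with $p'$, while the right side begins with the first $|p''|=|p'|$ letters of $C = p''A^np'$, namely $p''$; hence $p' = p''$, so $p = (p')^2$, contradicting the primitivity of $p = tr$. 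No appeal to Proposition~\ref{prop:th-commute} is needed at all.

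For comparison, the paper dispenses with your entire case analysis by one catenation trick: prepend $(r(tr)^i)^n$ (your $A^n$) to both sides of $(tr)^j[(r(tr)^i)^n(tr)^j]^k = x^\ell$, obtaining $[(r(tr)^i)^n(tr)^j]^{k+1} = (r(tr)^i)^n x^\ell$, which is a classical LS equation with all three exponents at least $2$ (namely $k+1\ge 2$, $n\ge 2$, $\ell\ge 2$); Theorem~\ref{thm:original} then gives $\rho(A^np^j)=\rho(A)=x$, and Lemma~\ref{lem:rt_rootshare} strips this down to $\rho(r)=\rho(t)$, contradicting the primitivity of $rt$, uniformly in $j$, $k$, and $\ell$. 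Your $k=1$ conjugation is this idea in miniature --- exposing two genuine powers next to the unknown power $x^\ell$ --- but conjugation only works when there is a single block to move, whereas left-catenation manufactures the $(k+1)$-st power for every $k$, which is precisely what eliminates your hard case $e=k=2$, $j=1$.
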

\begin{proof}
	Suppose that the given word were not primitive; namely, for some $\ell \ge 2$ and a primitive word $x$, let $(tr)^j [(r(tr)^i)^n (tr)^j]^k = x^\ell$. 
	Catenating $(r(tr)^i)^n$ to the left to the both sides of this equation gives $[(r(tr)^i)^n (tr)^j]^{k+1} = (r(tr)^i)^n x^\ell$.
	As $k \geq 1$ and $n, \ell \geq 2$, we can apply Theorem \ref{thm:original} to this equation to obtain $\rho((r(tr)^i)^n (tr)^j) = \rho(r(tr)^i) = x$.
	Using Lemma~\ref{lem:rt_rootshare}, one can obtain $\rho((tr)^j) = x$, and furthermore, $\rho(tr) = x$. 
	Combining this with $\rho(r(tr)^i) = x$ gives us $\rho(r) = \rho(t)$ and hence $rt$ would not be primitive, which contradicts the hypotheses.
\end{proof}

\begin{proposition}\label{prop:rt2_prime_present2}
	Let $r, t \in \Sigma^+$ such that $rt$ is primitive. 
	For any $i \ge 0$, $j, k, m \ge 1$, $(tr)^j [(r(tr)^i)^m (tr)^j]^{k-1} (r(tr)^i)^{m-1} (rt)^j$ is primitive.
\end{proposition}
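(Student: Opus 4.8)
The plan is to follow the pattern of the proof of Proposition~\ref{prop:rt2_prime_present1}: assume the word is a nontrivial power, reorganize it into a clean power sitting inside a Lyndon--Sch\"utzenberger equation, and then use Theorem~\ref{thm:original} to force a forbidden coincidence of primitive roots. Write $P = r(tr)^i$ and $Q = (tr)^j$, both non-empty, so that the word in question is
\[
  W = Q\,(P^m Q)^{k-1}\,P^{m-1}\,(rt)^j .
\]
Suppose for contradiction that $W$ is not primitive, say $W = x^{\ell}$ with $\ell \ge 2$ and $x$ primitive.

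The key step is the elementary identity $(rt)^j P = PQ$, coming from the conjugacy $(rt)^j r = r(tr)^j$, since then $(rt)^j\,r(tr)^i = r(tr)^{i+j} = r(tr)^i (tr)^j = PQ$. The trailing factor $(rt)^j$ of $W$ is thus exactly ``one $P$ short'' of completing a $P^mQ$ block, and can be repaired by appending copies of $P$: appending one $P$ gives $WP = Q(P^mQ)^{k-1}P^m Q = Q(P^mQ)^{k} = (QP^m)^{k}Q$, and appending a further $P^m$ yields
\[
  W\,P^{m+1} = (QP^m)^{k+1}.
\]
I expect this reorganization to be the main obstacle: one has to notice that $(rt)^j$ is a truncated block and identify the exact catenation ($P^{m+1}$ on the right) that both turns $W$ into a power and keeps the other side in Lyndon--Sch\"utzenberger form.

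Substituting $W = x^{\ell}$ into this identity gives $(QP^m)^{k+1} = x^{\ell} P^{m+1}$, which is of the form $A^{a} = B^{b} C^{c}$ with $a = k+1 \ge 2$, $b = \ell \ge 2$, and $c = m+1 \ge 2$ (here is exactly where the hypotheses $k, m \ge 1$ are used, the exponent $m+1$ being the reason the statement holds down to $m = 1$). Theorem~\ref{thm:original} then forces $\rho(QP^m) = \rho(x) = \rho(P)$. Since also $\rho(P^m) = \rho(P)$, Lemma~\ref{lem:rt_rootshare} lets me strip off the factor $P^m$ and conclude $\rho(Q) = \rho(P)$, that is, $\rho(tr) = \rho(r(tr)^i)$.

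Finally I would reach the contradiction as in Proposition~\ref{prop:rt2_prime_present1}. Being a conjugate of the primitive word $rt$, the word $tr$ is itself primitive, so $\rho(tr) = tr$ and the relation above becomes $r(tr)^i \in (tr)^{+}$; this is impossible by a length count, since $0 < |r| < |tr|$. (Equivalently, one may combine $\rho(tr) = \rho(r(tr)^i)$ with Lemma~\ref{lem:rt_rootshare} to obtain $\rho(r) = \rho(t)$, which makes $rt$ non-primitive.) Either way the hypothesis is contradicted, so $W$ must be primitive.
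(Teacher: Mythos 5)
Your proposal is correct and follows essentially the same route as the paper: the paper likewise catenates $(r(tr)^i)^{m+1}$ on the right to obtain $[(tr)^j(r(tr)^i)^m]^{k+1} = x^\ell (r(tr)^i)^{m+1}$ and then invokes Theorem~\ref{thm:original} and Lemma~\ref{lem:rt_rootshare} to contradict the primitivity of $rt$, exactly as you do (your two-step appending of $P$ and then $P^m$, and your explicit identity $(rt)^jP = PQ$, merely spell out what the paper leaves implicit).
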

\begin{proof}
	Suppose that we had $(tr)^j [(r(tr)^i)^m (tr)^j]^{k-1} (r(tr)^i)^{m-1} (rt)^j = x^\ell$ for some primitive word $x$ and $\ell \ge 2$. 
	Catenating $(r(tr)^i)^{m+1}$ to the right to the both sides of this equation gives $[(tr)^j (r(tr)^i)^m]^{k+1} = x^\ell (r(tr)^i)^{m+1}$.
	Now as in the proof of Proposition~\ref{prop:rt2_prime_present1}, we reach the contradicting conclusion that $rt$ is not primitive.
\end{proof}

There are some results which can be used for the splitting strategy, once we apply Proposition~\ref{prop:CCKS09-a} to non-trivial ExLS equations with $u_1 \neq u_2$, which will be considered in Subsection \ref{subsec:ExLS4_01}. 
As before, they are provided in more general form than required for the purpose.

\begin{lemma}\label{lem:zp-nonprime}
	Let $z, w \in \Sigma^+$ with $|z| < |w|$ and let $p$ be a $\te$-palindrome. 
	If $zp = w^n$ for some $n \ge 2$, then $z = \te(z)$. 
\end{lemma}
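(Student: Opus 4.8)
The plan is to reduce the statement to a short cancellation argument in the free monoid, using nothing heavier than the antimorphism property of $\te$ and a standard conjugacy identity. First I would exploit the hypothesis $|z| < |w|$: since $zp = w^n$ and $z$ is shorter than $w$, the word $z$ is a \emph{proper} prefix of $w$, so I can write $w = zz'$ with $z' \in \Sigma^+$. Left-cancelling $z$ from $zp = (zz')^n = z\,z'(zz')^{n-1}$ then gives the explicit form $p = z'(zz')^{n-1}$. Using the conjugacy identity $a(ba)^{k} = (ab)^{k}a$ (with $a = z'$, $b = z$), I rewrite this as $p = (z'z)^{n-1}z'$, which is the shape best suited to the palindrome condition.

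Next I would bring in $p = \te(p)$. Applying the antimorphic involution to $p = (z'z)^{n-1}z'$ and tracking the order reversal yields $\te(p) = (\te(z')\te(z))^{n-1}\te(z')$, so the palindrome condition becomes the word equation
\[
	(z'z)^{n-1}z' = (\te(z')\te(z))^{n-1}\te(z').
\]
Both sides have the same length, and since $n \ge 2$ the left-hand side genuinely begins with the block $z'$. Comparing the first $|z'| = |\te(z')|$ letters of the two sides forces $z' = \te(z')$; that is, $z'$ is itself a $\te$-palindrome. I would then substitute $\te(z') = z'$ back into the equation, cancel the common trailing factor $z'$, and obtain $(z'z)^{n-1} = (z'\te(z))^{n-1}$. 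A final comparison of the first $|z'z|$ letters (again using $n-1 \ge 1$) gives $z'z = z'\te(z)$, and cancelling $z'$ yields $z = \te(z)$, as required.

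There is no genuinely hard step here; the whole argument is elementary word combinatorics and does not need the overlap machinery (Theorem~\ref{thm:overlap3}) or the Fine--Wilf extensions. The only points demanding care are purely clerical: getting the reversal of factors right when pushing $\te$ through a product of the non-commuting blocks $z$ and $z'$, keeping the conjugacy identity $z'(zz')^{n-1} = (z'z)^{n-1}z'$ straight, and making sure the hypothesis $n \ge 2$ is actually invoked (it is what guarantees a full $z'z$ block is present, so that the prefix comparisons extract $z' = \te(z')$ and then $z = \te(z)$ rather than degenerating). As a sanity check I would verify the case $n=2$ by hand, where the equation collapses to $z'zz' = z'\te(z)z'$ and the conclusion is immediate. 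I would also note in passing the symmetric relation $w^n\te(z) = z\,\te(w)^n$, obtained by applying $\te$ to $zp=w^n$ and eliminating $p$, which gives the same result via a prefix comparison and could serve as an alternative route if a more symmetric presentation were preferred.
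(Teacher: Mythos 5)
Your proof is correct and follows essentially the same route as the paper's: the paper likewise writes $w = zy$ (your $z'$), derives $p = y(zy)^{n-1}$, and extracts first $y = \te(y)$ and then $z = \te(z)$ from the palindrome condition with $n-1 \ge 1$. You have merely spelled out the prefix-comparison and cancellation steps that the paper leaves implicit.
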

\begin{proof}
	Let $w = zy$ for some $y \in \Sigma^+$. 
	Then $p = y(zy)^{n-1}$, from which we can obtain $y = \te(y)$ and $z = \te(z)$ because $p = \te(p)$ and $n-1 \ge 1$. 
\end{proof}

\begin{proposition}\label{prop:uteuqnu}
	Let $x$ be a $\te$-primitive word, $u \in \Sigma^+$, and $q$ be a non-empty $\te$-palindrome. 
	If for some $n \ge 2$ and $\ell \ge 1$, $u[\te(u)q^n u]^\ell \in \{x, \te(x)\}^{\ge 2}$, then $u, q \in \{x, \te(x)\}^+$. 
\end{proposition}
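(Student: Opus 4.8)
The plan is to expose the $\te$-palindromic structure hidden in the left-hand side and then peel it apart with the splitting machinery of Section~\ref{sec:combinatorial}. First I would rewrite the word: since $q$ is a $\te$-palindrome so is $q^n$, hence $E := \te(u)q^nu$ is a $\te$-palindrome, and therefore $W := u[\te(u)q^nu]^\ell = uE^\ell$, where $E^\ell$ is again a $\te$-palindrome. The target $q\in\{x,\te(x)\}^+$ will follow from $q^n\in\{x,\te(x)\}^+$ by Lemma~\ref{lem:rt_th-rootshare}(1), so it suffices to recover $u$ and $q^n$ from a factorisation of $W$ over $\{x,\te(x)\}$.

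The core step is a single application of Proposition~\ref{prop:pref_suff_split} to the factorisation $W=u\cdot E^\ell$. Because $W$ ends with $u$ we have $u\in\Suff(\{x,\te(x)\}^+)$, and because $E^\ell$ is a $\te$-palindromic suffix of $W$ it also lies in $\Pref(\{x,\te(x)\}^+)$. Proposition~\ref{prop:pref_suff_split} then leaves two outcomes: either (i) $u,E^\ell\in\{x,\te(x)\}^+$, or (ii) the factorisation is constant, i.e. $W=x^m$ or $W=\te(x)^m$. In case (i) the argument finishes cleanly: Lemma~\ref{lem:rt_th-rootshare}(1) upgrades $E^\ell\in\{x,\te(x)\}^+$ to $E\in\{x,\te(x)\}^+$, and then, since $u$ and hence $\te(u)$ lie in $\{x,\te(x)\}^+$, two applications of Lemma~\ref{lem:rt_th-rootshare}(2) strip the prefix $\te(u)$ and the suffix $u$ off $E=\te(u)q^nu$ to leave $q^n\in\{x,\te(x)\}^+$; Lemma~\ref{lem:rt_th-rootshare}(1) then gives $q\in\{x,\te(x)\}^+$. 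Note that this whole line works already for $n\ge1$.

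The delicate part, and the place where the hypothesis $n\ge2$ is genuinely needed, is the degenerate case (ii); say $W=x^m$ with $m\ge2$ (the case $\te(x)^m$ is symmetric). Here $u$ is simultaneously a prefix and a suffix of $x^m$. If $|x|$ divides $|u|$ then $u=x^{|u|/|x|}\in\{x\}^+$ and one concludes as in case (i). Otherwise write $u=x^{a}u_0$ with $0<|u_0|<|x|$; since $u$ is also a suffix of $x^m$, it equals $s\,x^{a}$ where $s$ is the length-$|u_0|$ suffix of $x$, so $x^{a}u_0=s\,x^{a}$. When $a\ge1$, a length-$|u_0|$ prefix comparison gives $s=u_0$, whence $u_0$ commutes with $x$ and $\rho(u_0)=\rho(x)=x$, contradicting $0<|u_0|<|x|$. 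Hence $a=0$ and $u$ is a short border of $x$.

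It remains to rule out this last sub-case, and this is where I expect the real work. Two further reductions are available. Observing that $W\te(u)=(u\te(u)q^n)^\ell u\te(u)$ is a $\te$-palindrome yields $u\,\te(x)^m=x^m\te(u)$, and a length-$|x|$ prefix comparison produces $x=uy$ for a $\te$-palindrome $y$. Moreover, writing $E^\ell=(yu)^{m-1}y$ and using that $E^\ell$ is a $\te$-palindrome together with $m\ge2$ forces $u=\te(u)$, so $u$ is itself a $\te$-palindrome and $x=uy$ is a product of two $\te$-palindromes. One is thus reduced to the word equation $(uq^nu)^\ell=y\,x^{m-1}$ with $u,y,q$ all $\te$-palindromes and $x=uy$ $\te$-primitive. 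The hard point is to derive a contradiction from this configuration: the square factor $q^n$ (here $n\ge2$ is essential, since for $n=1$ the statement fails) is pinned inside the $|x|$-periodic word $x^m$, and I would analyse its position by means of Theorem~\ref{thm:overlap3} and the $\te$-primitivity of $x$ to show that no short non-trivial border $u$ can coexist with a genuine square $q^n$. Settling this impossibility is the main obstacle; once it is handled, case (ii) contributes nothing and the proposition follows.
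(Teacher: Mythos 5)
Your reduction is sound as far as it goes, and in places cleaner than the paper's own argument: invoking Proposition~\ref{prop:pref_suff_split} on the factorisation $W = u\cdot E^\ell$ (after observing that the $\te$-palindromic suffix $E^\ell$ also lies in $\Pref(\{x,\te(x)\}^+)$) gets you in one step what the paper re-derives by hand via Lemma~\ref{lem:pal_x1-xn} and Theorem~\ref{thm:overlap3}. Your case (i), the elimination of $a\ge 1$ (though the comparison that yields $s=u_0$ is on suffixes, not prefixes), and the derivation of $x=uy$ with $y=\te(y)$ and $u=\te(u)$ are all correct, as is your remark that $n\ge 2$ is indispensable. But the proof is not complete: the degenerate case $W=x^m$ with $u$ a short border of $x$ --- which you yourself call the main obstacle --- is exactly where the whole difficulty of the proposition sits, and you leave it unresolved. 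Worse, the direction you indicate (Theorem~\ref{thm:overlap3} plus $\te$-primitivity of $x$) is unlikely to close it: in this configuration every factor of the word equals $x$ (and the paper even notes $x\neq\te(x)$ here), so there are no mixed occurrences of $x$ and $\te(x)$ whose overlap could be forbidden; the obstruction is not a $\te$-phenomenon at all, but a classical periodicity one.

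What the paper does at this point is switch to classical tools. Writing the configuration as $u(uq^nu)^\ell = x^m$ with $u=\te(u)$ and $0<|u|<|x|$, it splits on $\ell$. For $\ell\ge 2$, note $(uq^nu)^{\ell+1} = uq^n\cdot u(uq^nu)^\ell = uq^n x^m$, so $x^m$ is a common suffix of $x^m$ and $(uq^nu)^{\ell+1}$; since $\ell|uq^nu| = m|x|-|u|$ gives $|uq^nu| < m|x|/2 \le (m-1)|x|$, the shared length $m|x|$ exceeds $|x|+|uq^nu|$, and the Fine and Wilf theorem forces $\rho(uq^nu)=x$, whence $u\in x^+$, contradicting $|u|<|x|$. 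For $\ell=1$, a cyclic permutation turns $u^2q^nu = x^m$ into the classical Lyndon--Sch\"{u}tzenberger equation $u^3q^n = x'^m$, where $x'$ is a conjugate of $x$, and Theorem~\ref{thm:original} --- applicable precisely because $n\ge 2$ --- yields $\rho(u)=\rho(q)=x'$, again contradicting $|u|<|x'|$. This two-pronged use of Fine--Wilf and the original LS theorem is the missing idea; without it (or something equivalent), your case (ii) remains an open hole in the argument.
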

\begin{proof}
	Let $u[\te(u)q^n u]^\ell = x_1 \cdots x_m$ for some $m \ge 2$ and $x_1, \ldots, x_m \in \{x, \te(x)\}$. 
	Let $u = x_1 \cdots x_{k-1}z_1$ and $[\te(u)q^n u]^\ell = z_2 x_{k+1} \cdots x_m$ for some $1 \le k \le m$ with $x_k = z_1 z_2$ and $z_1 \neq \lambda$, i.e. $|z_2| < |x|$. 
	If $z_2 = \lambda$, then $u, [\te(u)q^n u]^\ell \in \{x, \te(x)\}^+$. 
	Lemma~\ref{lem:rt_th-rootshare} implies $\te(u)q^n u \in \{x, \te(x)\}^+$ and the same lemma further gives $q^n \in \{x, \te(x)\}^+$, that is, $q \in \{x, \te(x)\}^+$. 

	Now we prove that $z_2$ cannot be non-empty. 
	Without loss of generality, we assume $x_m = x$. 
	So suppose $z_2 \neq \lambda$ ($0 < |z_1| < |x|$).
	We can apply Lemma~\ref{lem:pal_x1-xn} to $z_1[\te(u)q^nu]^\ell = x_k \cdots x_m$ to get $x_{k+1} = \cdots = x_m = x$ because $[\te(u)q^n u]^\ell$ is a $\te$-palindrome and $|z_1| < |x|$. 
	Thus if $|x| \le |u|$, then $|z_2| < |u|$ and so $[\te(u)q^n u]^\ell = z_2 x^{k-1}$ gives $x \in \Suff(u)$ and hence $\te(x) \in \Pref(\te(u))$. 
	These further imply that $x \in \Suff(x_1 \cdots x_{k-1}z_1)$ and $\te(x) \in \Pref(z_2 x_{k+1} \cdots x_m)$. 
	Thus $x\te(x)$ is a proper infix of $x_{k+1}x_kx_{k-1}$, which is in contradiction with the $\te$-primitivity of $x$ by Theorem~\ref{thm:overlap3}. 

	Therefore, $|x| > |u|$, which means $k = 1$, that is, we have $x_2 = \cdots = x_m = x$. 
	Note that $x \neq \te(x)$ must hold because of $z_2 x^{m-1}$ being a $\te$-palindrome, $0 < |z_2| < |x|$ and $x$ is primitive (and cannot be a proper infix of its square). 
	If $x_1 = \te(x)$, then $u \in \Pref(\te(x)) \cap \Suff(x)$ holds and so $u = \te(u)$. 
	Now Lemma~\ref{lem:pal_x1-xn} would imply $x_1 = x$, which contradicts $x \neq \te(x)$. 
	Otherwise ($x_1 = x$), $u[\te(u)q^n u]^\ell = x^m$ and from this Lemma~\ref{lem:zp-nonprime} derives $u = \te(u)$. 
	Then we have $u(uq^nu)^\ell = x^m$; in other words, $(uq^nu)^{\ell+1}$ and $x^m$ share a suffix of length at least $\eta = \max(m|x|,\ell|uq^nu|)$.  
	If $\ell \ge 2$, then $\eta \geq |x| + |uq^nu|$, and the Fine and Wilf theorem implies $\rho(uq^nu) = x$. 
	With $u(uq^nu)^\ell = x^m$, this implies $\rho(u) = x$.
	However, this contradicts $|u| < |x|$. 
	If $\ell = 1$, then $uuq^nu = x^m$. 
	Using cyclic permutation, we obtain $u^3q^n = x'^m$, where $x'$ is a conjugate of $x$. 
	This is of the form of LS equation, and Theorem~\ref{thm:original} concludes $\rho(u) = \rho(q) = x'$. 
	Now we reached the same contradiction because $|x'| = |x|$. 
\end{proof}

\begin{lemma}\label{lem:u2qwm2}
	Let $w$ be a $\te$-primitive word, and $w_1, \ldots, w_m \in \{w, \te(w)\}$ for some $m \ge 2$. 
	Let $u, q \in \Sigma^+$ such that $q$ is a $\te$-palindrome with $|q| < |u|$. 
	If $u^2 = q w_1 \cdots w_m$, then either $u, q \in \{w, \te(w)\}^+$ or $u = qr$ for some non-empty $\te$-palindrome $r$. 
\end{lemma}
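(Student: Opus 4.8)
The plan is to read off the shape of $u$ from the length hypothesis before doing any real work. Since $|q| < |u|$ and $q$ is a prefix of $u^2 = qw_1\cdots w_m$, the palindrome $q$ is a proper prefix of the first copy of $u$, so $u = qr$ for some non-empty $r \in \Sigma^+$. Substituting into $u^2 = qw_1\cdots w_m$ and cancelling the leading $q$ gives $rqr = w_1\cdots w_m \in \{w,\te(w)\}^+$. The two alternatives of the statement will correspond to whether or not $r$ is a catenation of copies of $w$ and $\te(w)$: I will show that either $r \in \{w,\te(w)\}^+$, in which case the first alternative holds, or $r = \te(r)$, which is the second alternative.

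First I would dispose of the branch $r \in \{w,\te(w)\}^+$. Applying the second item of Lemma~\ref{lem:rt_th-rootshare} to $r\cdot(qr) = rqr \in \{w,\te(w)\}^+$ with $r \in \{w,\te(w)\}^+$ yields $qr \in \{w,\te(w)\}^+$, and a second application, now peeling the trailing $r$, yields $q \in \{w,\te(w)\}^+$; hence $u = qr \in \{w,\te(w)\}^+$ and the first alternative holds. Writing $rqr = w_1\cdots w_m$, this branch is exactly the case in which the boundary between the prefix $r$ and the central factor $q$ falls on a block boundary, i.e. $|r|$ is a multiple of $|w|$ (and symmetrically, if $|r|+|q|$ were a multiple of $|w|$ the trailing copy of $r$ would be block-aligned, putting us in the same branch).

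It remains to treat the case where $|r|$ is not a multiple of $|w|$, so that the split between $r$ and $q$ lies strictly inside one of the blocks $w_i$; here I must prove $r = \te(r)$. The idea is a reflection argument. Applying $\te$ to $rqr = w_1\cdots w_m$ and using $q = \te(q)$ gives $\te(r)\,q\,\te(r) = \te(w_m)\cdots\te(w_1) \in \{w,\te(w)\}^+$, so both $W := rqr$ and $\te(W) = \te(r)\,q\,\te(r)$ are products of copies of $w$ and $\te(w)$ laid out on the same length-$|w|$ grid, and they coincide on the central factor $q$. Because $|r|$ is not a multiple of $|w|$, this shared $q$ straddles a block boundary, which is precisely the overlap configuration controlled by Theorem~\ref{thm:overlap3} and Lemma~\ref{lem:overlap2}; combining this with the palindromicity of $q$ and with Lemma~\ref{lem:pal_x1-xn}, the forced block matchings propagate symmetrically outward through both copies of $r$, giving $w_i = \te(w_{m+1-i})$ for every $i$, that is $W = \te(W)$, and therefore $r = \te(r)$.

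The hard part will be exactly this last step: turning the ``same grid, shared off-grid central palindrome'' picture into a rigorous propagation. The delicate point is that a single block of $\te(W)$ may straddle the boundary between two blocks of $W$, so one must track how the misalignment $s = |r| \bmod |w|$ on the left forces a mirror misalignment on the right and how the $\te$-palindrome $q$ glues the two halves together; I expect this to require a short, possibly multi-case, bookkeeping built on Theorem~\ref{thm:overlap3} and Lemma~\ref{lem:pal_x1-xn} (including the degenerate subcase in which $q$ fits inside a single block). Everything else, namely the extraction $u = qr$ and the peeling argument for the first branch, is routine.
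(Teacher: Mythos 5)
Your setup ($u=qr$ from $|q|<|u|$, reduction to $rqr=w_1\cdots w_m$) and your handling of the block-aligned branch via two applications of Lemma~\ref{lem:rt_th-rootshare} are correct, and they coincide with the opening of the paper's proof. But the rest of your argument is not a proof: you set up the reflection picture and then explicitly defer ``the hard part'' --- the propagation that is supposed to yield $w_i=\te(w_{m+1-i})$ for all $i$ --- to unspecified multi-case bookkeeping. Since all of the lemma's difficulty lives in exactly that step, this is a genuine gap, not a routine omission.

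Moreover, the deferred step cannot be completed with the tools you cite. The information ``$W=rqr$ and $\te(W)=\te(r)q\te(r)$ coincide on the central factor $q$'' is far weaker than you assume. First, your claim that $|r|\not\equiv 0\pmod{|w|}$ forces $q$ to straddle a block boundary is false (e.g.\ $rqr=w_1w_2w_3$ with $|w|<|r|<2|w|$ can place $q$ strictly inside $w_2$). Worse, when $q$ straddles only the central boundary --- which happens, for instance, whenever $m=2$: take $\te$ the mirror image, $w=ababa$, $r=aba$, $q=baab$, so that $rqr=w^2$ --- the coincidence condition reduces exactly to the statement that this central factor is a $\te$-palindrome, i.e.\ to the hypothesis $q=\te(q)$ itself; it forces no identification of blocks whatsoever, and Theorem~\ref{thm:overlap3}, Lemma~\ref{lem:overlap2} and Lemma~\ref{lem:pal_x1-xn} have nothing to act on, because no forbidden overlap configuration arises. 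Concretely, in the case $|r|<|w|$ with $w_1=w_m$, everything your toolkit yields terminates in the equation $w=rq_p=\te(q_p)r$ for a prefix $q_p$ of $q$, i.e.\ the $\te$-commutation equation $rq_p=\te(q_p)r$, which has nontrivial solutions; deducing $r=\te(r)$ from it requires the structure theorem for $\te$-commutativity, Proposition~\ref{prop:th-commute}, which is absent from your plan. This is precisely how the paper closes that case. For the remaining cases (where $r$ contains at least one full block) the paper's propagation is driven not by the shared central $q$ but by equating the two occurrences of $r$ against the block grid, $w_1\cdots w_{k-1}z_1=z_2w_{m-k+2}\cdots w_m$, a configuration to which Lemma~\ref{lem:overlap2} and Theorem~\ref{thm:overlap3} genuinely apply; any completion of your sketch would have to fall back on this comparison plus Proposition~\ref{prop:th-commute}, i.e.\ on the paper's argument.
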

\begin{proof}
	It is trivial that the case $u, q \in \{w, \te(w)\}^+$ is possible. 
	Hence assume that $u, q \not\in \{w, \te(w)\}^+$.
	Without loss of generality, we can also assume that $w_m = w$.
	Let $u = qr$ for some $r \in \Sigma^+$. 
	Then $rqr = w_1 \cdots w_m$. 
	We prove that $r$ is a $\te$-palindrome.
	Let $r = w_1 \cdots w_{k-1} z_1 = z_2 w_{m-k+2} \cdots w_m$ for some $k \ge 1$, where $z_1 \in \Pref(w_{k})$ and $z_2 \in \Suff(w_{m-k+1})$ with $|z_1| = |z_2| < |w|$. 
	If $z_1 = \lambda$, then $r \in \{w, \te(w)\}^+$ and then $rqr = w_m \cdots w_1$ implies $q \in \{w, \te(w)\}^+$ by Lemma~\ref{lem:rt_th-rootshare}, but this contradicts the assumption. 
	Thus $z_1 \neq \lambda$.
	Then we have two cases, $k \ge 2$ and $k = 1$. 
	Lemma~\ref{lem:overlap2} (for $k = 2$) or Theorem~\ref{thm:overlap3} (for $k \ge 3$) works to give $w_1 = \cdots = w_{k-1} = \te(w)$ and $w_{m-k+2} = \cdots = w_m = w$. 
	Thus, $z_2 = \te(z_1)$ and hence $r = \te(r)$. 
	Even for $k = 1$, if $w_1 \neq w_m$, then $r \in \Pref(\te(w)) \cap \Suff(w)$ so that $r = \te(r)$. 
	Otherwise $w = r q_p = q_s r$ for some $q_p \in \Pref(q)$ and $q_s \in \Suff(q)$. 
	Since $q = \te(q)$, $q_s = \te(q_p)$ so that we have $r q_p = \te(q_p) r$. 
	According to Proposition~\ref{prop:th-commute}, $r = \te(r)$. 
\end{proof}

\begin{proposition}\label{prop:u2qwm3odd}
	Let $w$ be a $\te$-primitive word, and $w_1, \ldots, w_m \in \{w, \te(w)\}$ for some odd integer $m \ge 3$. 
	Let $u, q \in \Sigma^+$ such that $q$ is a $\te$-palindrome with $|q| < |u|$. 
	If $u^2 = q w_1 \cdots w_m$, then $w = \te(w)$. 
	If additionally $|u| \ge 2|q|$ holds, then $\rho(u) = \rho(q) = w$. 
\end{proposition}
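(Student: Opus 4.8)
The plan is to treat the two conclusions separately, and in both cases to route the equation $u^2 = q w_1 \cdots w_m$ through Lemma~\ref{lem:u2qwm2}, which already performs the basic case split for exactly this kind of equation. Since $m \ge 3 \ge 2$, that lemma is applicable and yields two alternatives: either $u, q \in \{w, \te(w)\}^+$, or $u = qr$ for some non-empty $\te$-palindrome $r$. Recall also that a $\te$-primitive word is primitive, so $w$ is primitive throughout.

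For the first claim ($w = \te(w)$), my observation is that each alternative exhibits a $\te$-palindrome that is a catenation of an \emph{odd} number of factors from $\{w, \te(w)\}$. In the alternative $u, q \in \{w, \te(w)\}^+$, writing $|u| = a|w|$ and $|q| = b|w|$ and comparing lengths in $2|u| = |q| + m|w|$ gives $2a = b + m$, so $b$ is odd because $m$ is; the relevant $\te$-palindrome is $q$ itself, a product of $b$ factors. In the alternative $u = qr$, expanding $u^2 = qrqr = q w_1 \cdots w_m$ yields $w_1 \cdots w_m = rqr$, which is a $\te$-palindrome (as $r$ and $q$ are) and is a product of $m$ factors, $m$ odd. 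I would then invoke the elementary fact that if a $\te$-palindrome equals $s_1 \cdots s_k$ with every $s_i \in \{w, \te(w)\}$ and $k$ odd, then aligning the factorisation against its $\te$-image into equal-length blocks forces $s_i = \te(s_{k+1-i})$; in particular the central block $s_{(k+1)/2}$ is a $\te$-palindrome, and since it lies in $\{w, \te(w)\}$ this forces $w = \te(w)$. (This is the same mechanism underlying Lemma~\ref{lem:pali_even}.)

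For the second claim I would now use $w = \te(w)$, so that each $w_i$ equals $w$ and the equation reads $u^2 = q w^m$ with $w$ primitive. Applying Lemma~\ref{lem:u2qwm2} again, either $u, q \in \{w, \te(w)\}^+ = w^+$, giving $\rho(u) = \rho(q) = w$ at once, or $u = qr$ with $r$ a non-empty $\te$-palindrome and $rqr = w^m$. In the remaining case the extra hypothesis $|u| \ge 2|q|$ gives $|r| = |u| - |q| \ge |q|$, whence $m|w| = 2|r| + |q| \le 3|r|$ and therefore $|r| \ge |w|$ (this is where $m \ge 3$ enters). Since $r$ is simultaneously a prefix and a suffix of $w^m$, it is a border inducing the period $|w^m| - |r| = |r| + |q| = |u|$; together with the evident period $|w|$ and the length bound $m|w| \ge |w| + |u| - \gcd(|w|,|u|)$ — valid precisely because $|r| \ge |w|$ — the theorem of Fine and Wilf forces $w^m$ to have period $\gcd(|w|,|u|)$. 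Primitivity of $w$ upgrades this to $|w| \mid |u|$, so the length-$|u|$ suffix $u$ of $w^m$ is literally $w^{|u|/|w|}$; consequently $r = w^{m-|u|/|w|}$ and $q = w^{2|u|/|w|-m}$ also lie in $w^+$, giving $\rho(u) = \rho(q) = w$.

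I expect the second claim to be the main obstacle: the first is a short parity argument, whereas the second needs exactly the right length inequality to make Fine and Wilf applicable. The delicate point is securing $|r| \ge |w|$, which is the combined effect of $m \ge 3$ and $|u| \ge 2|q|$; without the hypothesis $|u| \ge 2|q|$ the border $r$ could be too short for the two periods to collapse to $|w|$. I would finally verify the boundary arithmetic, namely that the exponents $m - |u|/|w|$ and $2|u|/|w| - m$ are positive, so that the standing assumption $q \in \Sigma^+$ rules out the degenerate possibility $q = \lambda$.
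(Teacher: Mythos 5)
Your proof is correct, and its skeleton coincides with the paper's: both arguments start from the case split of Lemma~\ref{lem:u2qwm2}, both obtain $w = \te(w)$ by the odd-central-factor parity argument (in the first alternative the paper likewise notes $q \in \{w, \te(w)\}^{2k-m}$ with $2k-m$ odd, and in the second it deduces $w_{(m+1)/2} = \te(w_{(m+1)/2})$ from $rqr$ being a $\te$-palindrome), and both extract the same key inequality $|r| \ge \frac{1}{3}m|w| \ge |w|$ from $|u| \ge 2|q|$ and $m \ge 3$. The one genuine divergence is the endgame of the second claim. The paper stays with overlap reasoning: since $|r| \ge |w|$, the border $r$ of $w^m$ can be written $r = w^k w_p' = w_s' w^k$ with $k \ge 1$, $w_p' \in \Pref(w)$, $w_s' \in \Suff(w)$, and the fact that a primitive word cannot be a proper infix of its own square forces $w_p' = w_s' = \lambda$, after which $r = w^k$, $q = w^{m-2k}$, $u = w^{m-k}$ follow immediately. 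You instead observe that the border $r$ endows $w^m$ with the period $m|w| - |r| = |u|$ in addition to $|w|$, check that $|r| \ge |w|$ is exactly the Fine--Wilf threshold $m|w| \ge |w| + |u| - \gcd(|w|,|u|)$, and conclude $|w| \mid |u|$ from the two-period Fine--Wilf theorem together with primitivity of $w$, after which $u$, $r$, $q$ are read off as powers of $w$. Both closings are sound and consume the same length inequality; the paper's uses lighter machinery (only the non-overlap property of primitive words), whereas yours invokes the classical periodicity theorem, which packages the bookkeeping into a standard statement and makes the precise role of the hypothesis $|u| \ge 2|q|$ transparent. Your final verification that the exponents $m - |u|/|w|$ and $2|u|/|w| - m$ are positive is a point the paper handles only implicitly (through $r, q \in \Sigma^+$).
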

\begin{proof}
	Lemma~\ref{lem:u2qwm2} implies that either $q, u \in \{w, \te(w)\}^+$ or $u = qr$ for some non-empty $\te$-palindrome $r$. 
	In the former case, let $u \in \{w, \te(w)\}^k$ for some $k \ge 1$ and we can see $q \in \{w, \te(w)\}^{2k-m}$ and $2k-m$ is odd because $m$ is odd. 
	Then $q = \te(q)$ implies $w = \te(w)$, and hence $u, q \in w^+$. 
	In the latter case, we have $rqr = w_1 \cdots w_m$. 
	This implies $w_{(m+1)/2} = \te(w_{(m+1)/2})$ (i.e, $w = \te(w)$) because $rqr$ is a $\te$-palindrome and $m$ is odd. 

	Now we consider the additional hypothesis $|u| \ge 2|q|$. 
	Since $2|u| = |q| + m|w|$, $|u| = (|q|+m|w|)/2 \geq 2|q|$, which leads to $|q| \leq \frac{1}{3}m|w|$.
	As seen above, $rqr = w^m$, hence $|r| = (m|w|-|q|)/2 \geq \frac{1}{3}m|w| \geq |w|$ as $m \ge 3$. 
	With this, the equation $rqr = w^m$ gives $r = w^k w_p' = w_s' w^k$ for some $k \ge 1$, $w_p' \in \Pref(w)$, and $w_s' \in \Suff(w)$. 
	Since $w$ is primitive, $w_p'$ and $w_s'$ have to be empty. 
	Consequently $\rho(r) = \rho(q) = w$ and hence $\rho(u) = w$ by using Lemma~\ref{lem:rt_th-rootshare}. 
\end{proof}


	\subsection{ExLS equation of the form $u^2u_3u_4 = v_1 \cdots v_n w_1 \cdots w_m$}
	\label{subsec:ExLS4_00}

In this subsection, we prove that an ExLS equation of the form $u^2u_3u_4 = v_1 \cdots v_n w_1 \cdots w_m$ implies that $u, v, w \in \{t, \te(t)\}^+$ for some $t \in \Sigma^+$. 
We have already seen that for this purpose it suffices to show that any non-trivial equation of this form cannot hold. 
Recall that we assumed $u_1 = u$, $v_1 = v$, and $w_m = w$, and that Proposition~\ref{prop:different_length} allows us to assume $|v| \neq |w|$.

We can apply Proposition~\ref{prop:CCKS09-a} to the non-trivial equation to obtain that $n$ is an even integer except 2, $v_1 = \cdots = v_{n/2} = v$ and $v_{n/2+1} = \cdots = v_n = \te(v)$ (i.e., $v_1 \cdots v_n$ is a $\te$-palindrome), $u = [r(tr)^i r(tr)^{i+j}]^{n/2-1} r(tr)^i (rt)^j$, and $v = r(tr)^i r(tr)^{i+j}$ for some $i \ge 0$, $j \ge 1$, and non-empty $\te$-palindromes $r, t$ such that $rt$ is primitive. 
Actually $rt$ has to be $\te$-primitive due to Corollary~\ref{cor:pq2_te-prime} because $v \in \{r, t\}^+$ is assumed to be $\te$-primitive. 
Let us now study all possible values of $u_3u_4$.

\begin{proposition}
	Under the setting of Problem \ref{prob:main}, if $u_1u_2u_3u_4 = u^4$, then $u, v, w \in \{t, \te(t)\}^+$ for some $t \in \Sigma^+$. 
\end{proposition}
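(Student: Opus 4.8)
The plan is to show that under the non-triviality assumption the equation is forced to yield $u\in\{w,\te(w)\}^+$, which contradicts Proposition~\ref{prop:different_length}; hence no non-trivial equation of the form $u^4=v_1\cdots v_n w_1\cdots w_m$ can hold, and Theorem~\ref{thm:trivial} then gives $u,v,w\in\{t,\te(t)\}^+$. This parallels the sub-case $u_2=u_3=u$ treated inside the proof of Proposition~\ref{prop:different_length}, but since here the relation $m=n-1$ is no longer available, an extra splitting step will be needed.

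First I would invoke the representations already established at the start of this subsection: a non-trivial equation with $u_1=u_2=u$ forces, via Proposition~\ref{prop:CCKS09-a}, $n=2N$ with $N\ge 2$, $v_1\cdots v_n=v^N\te(v)^N$, $v=r(tr)^i r(tr)^{i+j}$, and $u=v^{N-1}r(tr)^i(rt)^j$ for some $i\ge 0$, $j\ge 1$ and $\te$-palindromes $r,t$ with $rt$ primitive. Writing $s=r(tr)^i$ and $s'=r(tr)^{i+j}$ (both $\te$-palindromes), Claim~\ref{claim:non-trivial} gives $u^2=s^2 w_1\cdots w_m$, and substituting $u_1u_2u_3u_4=u^4$ exactly as in Proposition~\ref{prop:different_length} yields $W:=w_1\cdots w_m=(tr)^j(ss')^{N-1}(s's)^{N-1}(rt)^j$, a $\te$-palindrome of even length. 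By Lemma~\ref{lem:pali_even}, $m$ is even.

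The key step will be to exploit a double factorization of the prefix $L:=(tr)^j(ss')^{N-1}=(tr)^j[r(tr)^i r(tr)^{i+j}]^{N-1}$, for which $W=L\,\te(L)$. On one hand, since $m$ is even, $L$ is the prefix of $W$ of length $(m/2)|w|$, so $L\in\{w,\te(w)\}^{m/2}$. On the other hand, $L$ is primitive by Proposition~\ref{prop:rt2_prime_present1} (with exponent $2$ and $k=N-1\ge 1$), while simultaneously $L=r's$ is a catenation of the two $\te$-palindromes $s=r(tr)^i$ and $r'=(tr)^j[r(tr)^i r(tr)^{i+j}]^{N-2}r(tr)^i(rt)^j$. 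Verifying both $L=r's$ and that $r'=\te(r')$ is a short computation resting on the identity $(rt)^j r=r(tr)^j$. With these facts, Proposition~\ref{prop:clean_split} applies to $L$ and delivers $s,r'\in\{w,\te(w)\}^+$.

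To finish, from $L=r's$ one gets $\te(L)=s r'$, hence $W=r's^2r'$; together with $u^2=s^2W$ this gives $u^2=(s^2r')^2$, so $u=s^2r'\in\{w,\te(w)\}^+$. Then, exactly as in the proofs of Theorem~\ref{thm:trivial} and Claim~\ref{claim:u3nequ4}, the ExLS equation forces $v_1\cdots v_n\in\{w,\te(w)\}^+$, and Theorem~\ref{thm:exFWlcm} together with the $\te$-primitivity of $v$ and $w$ yields $v\in\{w,\te(w)\}$, i.e.\ $|v|=|w|$, contradicting Proposition~\ref{prop:different_length}. I expect the main obstacle to be precisely the recognition and verification of the double nature of $L$: that it is primitive (so Proposition~\ref{prop:clean_split} is even applicable) and at the same time a product of two $\te$-palindromes $r's$, combined with the block-alignment of $L$ as a prefix of $W$ that the parity of $m$ guarantees. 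The hypothesis $N\ge 2$ is essential here, both for the primitivity of $L$ and for the word $r'$ to be well defined.
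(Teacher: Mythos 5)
Your proof is correct, and its core coincides with the paper's: both invoke Proposition~\ref{prop:CCKS09-a}, deduce that $m$ is even via Lemma~\ref{lem:pali_even} from the fact that $w_1\cdots w_m$ is a $\te$-palindrome of even length, and then exploit exactly the ``double nature'' you describe of the half $L = w_1\cdots w_{m/2} = (tr)^j[(r(tr)^i)^2(tr)^j]^{n/2-1}$ --- primitive by Proposition~\ref{prop:rt2_prime_present1}, and a catenation of the two $\te$-palindromes $r'$ (the paper's $p_1$) and $s = r(tr)^i$ (the paper's $p_2$) --- so that Proposition~\ref{prop:clean_split} yields $r', s \in \{w,\te(w)\}^+$. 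The only divergence is the closing contradiction: the paper keeps splitting, applying Proposition~\ref{prop:pali_split} to extract $(tr)^j \in \{w,\te(w)\}^+$ and then Lemma~\ref{lem:rt_th-rootshare} to get $r, t \in \{w,\te(w)\}^+$, contradicting the $\te$-primitivity of $rt$; you instead reassemble $u = s^2 r' \in \{w,\te(w)\}^+$ (a correct identity, since $W = r's^2r'$ and $u^2 = s^2W$) and contradict Proposition~\ref{prop:different_length} via Theorem~\ref{thm:exFWlcm}. Both endings are sound; the paper's stays entirely inside its splitting machinery and pinpoints the failure at the level of $r$ and $t$, while yours is slightly shorter but leans on the earlier Proposition~\ref{prop:different_length} and the generalized Fine and Wilf theorem.
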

\begin{proof}
	According to the representations of $u$ and $v$ in terms of $r$ and $t$, we obtain
	\[
		w_1 \cdots w_m = (tr)^j [(r(tr)^i)^2 (tr)^j]^{n/2-1} [(rt)^j (r(tr)^i)^2]^{n/2-1} (rt)^j\enspace.
	\] 
	This expression is a $\te$-palindrome of even length and hence $m$ has to be even (Lemma~\ref{lem:pali_even}). 
	Therefore, $w_1 \cdots w_{m/2} = [(tr)^j (r(tr)^i)^2]^{n/2-1} (tr)^j$, and this was proved to be primitive in Proposition~\ref{prop:rt2_prime_present1}.
	Moreover, its right hand side is the catenation of two $\te$-palindromes $p_1 = (tr)^j [r(tr)^i r(tr)^{i+j}]^{n/2-2} r(tr)^i (rt)^j$ and $p_2 = r(tr)^i$. 
	Proposition~\ref{prop:clean_split} gives $p_2 = r(tr)^i \in \{w, \te(w)\}^+$. 
	Furthermore, applying Proposition~\ref{prop:pali_split} to $p_1 p_2 = (tr)^j [r(tr)^i r(tr)^{i+j}]^{n/2-2} r(tr)^i \cdot p_2 \cdot (tr)^j$ gives $(tr)^j \in \{w, \te(w)\}^+$. 
	Finally Lemma~\ref{lem:rt_th-rootshare} derives $r, t \in \{w, \te(w)\}^+$ from $r(tr)^i, (tr)^j \in \{w, \te(w)\}^+$, but this contradicts the $\te$-primitivity of $rt$. 
	As a result, there are no solutions to the non-trivial equation.
\end{proof}

\begin{proposition}\label{prop:LS4_0001}
	Under the setting of Problem \ref{prob:main}, if $u_1u_2u_3u_4 = u^3\te(u)$, then $u, v, w \in \{t, \te(t)\}^+$ for some $t \in \Sigma^+$. 
\end{proposition}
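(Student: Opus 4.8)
The plan is to show that, under the setting of Problem~\ref{prob:main}, no \emph{non-trivial} equation of the form $u^2u_3u_4 = u^3\te(u)$ can hold; together with Theorem~\ref{thm:trivial} this yields the claim. First I would dispose of the degenerate case $u=\te(u)$: there $u^3\te(u)=u^4$, so the preceding proposition (the case $u_1u_2u_3u_4=u^4$) already gives $u,v,w\in\{t,\te(t)\}^+$. Hence I may assume $u\neq\te(u)$, i.e.\ $u_3\neq u_4$. As recalled at the start of this subsection, applying Proposition~\ref{prop:CCKS09-a} to the non-trivial equation gives $n$ even, $v=r(tr)^i r(tr)^{i+j}$ and $u=v^{n/2-1}r(tr)^i(rt)^j$ for some $i\ge 0$, $j\ge 1$ and $\te$-palindromes $r,t$ with $rt$ primitive (indeed $\te$-primitive). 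Since $u_3\neq u_4$, Claim~\ref{claim:u3nequ4} applies and forces $w_1=\cdots=w_m=w$, so that $w_1\cdots w_m=w^m$ with $m\ge 3$.

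Next I would extract a clean word equation. By Claim~\ref{claim:non-trivial} (in the branch $u_2=u$), $u_3u_4=u\te(u)=q\,w_1\cdots w_m$ with $q=(r(tr)^i)^2$; writing $P=r(tr)^i$ this reads $u\te(u)=P^2w^m$. Substituting the presentations of $u$ and $v$ and using $v=P^2(tr)^j$, the factor $u\te(u)=v^{n/2-1}P(rt)^j(tr)^jP\,\te(v)^{n/2-1}$ begins with $P^2$; cancelling this common prefix against $P^2w^m$ leaves
\[
 w^m=(tr)^j\,(P^2(tr)^j)^{n/2-2}\,P\,(rt)^j(tr)^j\,P\,((rt)^jP^2)^{n/2-1}=:R,
\]
an explicit word over $\{r,t\}$ (for $n=4$ the block $(P^2(tr)^j)^{n/2-2}$ is empty). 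Note that the usual splitting strategy is of no help here: since \emph{all} factors $w_i$ coincide with $w$, every sub-block of $R$ is already a power of $w$, so Propositions~\ref{prop:clean_split} and~\ref{prop:pali_split} yield nothing. The only available leverage is the primitivity of $R$ viewed as a word in $r,t$.

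The contradiction I am aiming for is therefore purely about primitivity: by construction $R=w^m$ is a proper power ($m\ge 3$) and hence not primitive, whereas I expect $R$ to be primitive whenever $rt$ is primitive. Proving the primitivity of $R$ is the main obstacle, and I would attack it exactly as in Propositions~\ref{prop:rt2_prime_present1} and~\ref{prop:rt2_prime_present2}: assume $R=x^\ell$ for some primitive $x$ and $\ell\ge 2$, catenate a suitable power of $r(tr)^i$ or of $(tr)^j$ to both sides so as to create a Lyndon--Sch\"utzenberger configuration, and then invoke Theorem~\ref{thm:original} together with Lemma~\ref{lem:rt_rootshare} to force $\rho(r)=\rho(t)$, contradicting the primitivity of $rt$. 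The real difficulty is that, unlike the words treated in Propositions~\ref{prop:rt2_prime_present1}--\ref{prop:rt2_prime_present2}, $R$ interleaves the blocks $(tr)^j$ and $(rt)^j$, so it does not match their shape verbatim; the catenation that manufactures the clean power must be chosen with more care, very likely with a short case split on the position of the overlap, and this is where the bulk of the work lies. Once $R$ is shown primitive, the equality $R=w^m$ with $m\ge 3$ is impossible, so no non-trivial equation $u^3\te(u)=v_1\cdots v_n w_1\cdots w_m$ exists; by Theorem~\ref{thm:trivial} every such equation is then trivial and yields $u,v,w\in\{t,\te(t)\}^+$.
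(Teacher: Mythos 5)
Your setup coincides with the paper's: Proposition~\ref{prop:CCKS09-a} gives $u = [P^2(tr)^j]^{n/2-1}P(rt)^j$ and $v = P^2(tr)^j$ with $P = r(tr)^i$, Claim~\ref{claim:u3nequ4} forces $w_1 = \cdots = w_m = w$, and your expression for $R$ (defined by $u\te(u) = P^2R = P^2w^m$) is correct; dismissing $u = \te(u)$ first is sensible extra care, since Claim~\ref{claim:u3nequ4} needs $u_3 \neq u_4$. However, the step that actually produces the contradiction is missing. You reduce everything to proving that $R$ is primitive, explicitly defer that proof (``this is where the bulk of the work lies''), and the attack you sketch --- catenating a power of $r(tr)^i$ or of $(tr)^j$ so as to recreate the situation of Propositions~\ref{prop:rt2_prime_present1} and~\ref{prop:rt2_prime_present2} --- is exactly what you concede does not apply verbatim, because $R$ interleaves $(tr)^j$- and $(rt)^j$-blocks. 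As it stands, the proposal is a correct reduction plus an unproven central claim.

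The missing idea is that no new primitivity argument is needed: $R$ is a product of two squares. Since $(rt)^j r = r(tr)^j$ holds for arbitrary words, one has $(rt)^jP = P(tr)^j$ and hence $(rt)^jP^2 = P(tr)^jP$; applying this repeatedly normalizes your $R$ to $R = x^2P^2$ with $x = (tr)^j[P^2(tr)^j]^{n/2-2}P(rt)^j$. Then $w^m = x^2P^2$ is \emph{already} a classical Lyndon--Sch\"{u}tzenberger equation with exponents $m \ge 3$, $2$, $2$, so Theorem~\ref{thm:original} gives $\rho(x) = \rho(P)$; on the other hand, Proposition~\ref{prop:rt2_prime_present2} (instantiated with its parameters $m = 2$ and $k = n/2-1 \ge 1$) says precisely that $x$ is primitive, and $|x| > |P|$ then makes $\rho(x) = \rho(P)$ impossible. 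This is the paper's proof (its printed formula for $x$ contains an apparent exponent typo, but this is the intended argument). Your intuition that $R$ is primitive is in fact correct, but the only clean route to it is this very factorization --- if $R = z^\ell$ with $\ell \ge 2$, then $z^\ell = x^2P^2$ is the same LS configuration --- so the factorization is not an optional shortcut; it is the key step your proposal lacks.
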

\begin{proof}
	Since $u_4$ is $\te(u)$ instead of $u$, we have $w_1 \cdots w_m = x^2 (r(tr)^i)^2$, where $x = (tr)^j [(r(tr)^i)^2 (rt)^j]^{n/2-1} r(tr)^i (rt)^j$. 
	Claim~\ref{claim:u3nequ4} gives that $w_1 = \cdots = w_m = w$, and hence $w^m = x^2 (r(tr)^i)^2$.
	This is a classical LS equation; thus Theorem~\ref{thm:original} is applicable to conclude that $\rho(x) = \rho(r(tr)^i)$. 
	However, this contradicts the primitivity of $x$ obtained in Proposition~\ref{prop:rt2_prime_present2} because $|x| > |r(tr)^i|$.
\end{proof}

\begin{proposition}
	Under the setting of Problem \ref{prob:main}, if $u_1u_2u_3u_4 = u^2\te(u)u$, then $u, v, w \in \{t, \te(t)\}^+$ for some $t \in \Sigma^+$. 
\end{proposition}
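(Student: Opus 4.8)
The plan is to show that no \emph{non-trivial} equation of this form can hold; together with Theorem~\ref{thm:trivial} and Lemma~\ref{lem:two_enough} this yields $u,v,w\in\{t,\te(t)\}^+$. So I would assume a non-trivial equation $u^2\te(u)u=v_1\cdots v_n w_1\cdots w_m$ and keep the representations recalled at the start of this subsection: $n$ is even, $v_1\cdots v_n=v^{n/2}\te(v)^{n/2}$, $v=r(tr)^i r(tr)^{i+j}$ and $u=[r(tr)^i r(tr)^{i+j}]^{n/2-1}r(tr)^i(rt)^j$ for $\te$-palindromes $r,t$ with $rt$ $\te$-primitive. Write $a=r(tr)^i$, which is a $\te$-palindrome. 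Since here $u_3=\te(u)\neq u=u_4$, the block $u_3u_4=\te(u)u$ is a $\te$-palindrome, so Claim~\ref{claim:u3nequ4} applies and gives $w_1=\cdots=w_m=w$ together with $\te(u)u\in\Suff(w^+)$, while Claim~\ref{claim:non-trivial} (with $q=a^2$) gives $\te(u)u=a^2w^m$.

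Next I would exploit that $\te(u)u$ is a $\te$-palindrome. Applying $\te$ to $\te(u)u=a^2w^m$ and using $\te(a^2)=a^2$ and $\te(w^m)=\te(w)^m$ produces $a^2w^m=\te(w)^m a^2$, which is exactly the $\te$-commutativity equation $xy=\te(y)x$ with $x=a^2$ and $y=w^m$. Proposition~\ref{prop:th-commute} then yields $\te$-palindromes $\sigma,\tau$ with $\sigma\tau$ primitive such that $a^2=\sigma(\tau\sigma)^p$ and $w^m=(\tau\sigma)^q$ for some $p\ge 0$ and $q\ge 1$. Because $w$ is $\te$-primitive whereas $\tau\sigma$ is primitive, $\rho(w)=\rho(w^m)=\tau\sigma$, so $w=(\tau\sigma)^s$ with $s\ge 1$; $\te$-primitivity forces $s=1$, giving $w=\tau\sigma$ and $q=m$, and hence $a^2=\sigma w^p$ with $\sigma\in\Suff(w)$.

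The contradiction, and what I expect to be the main obstacle, then lies in reconciling the two palindromic factorizations $[r(tr)^i]^2=a^2=\sigma(\tau\sigma)^p$ with the primitivity of $rt$. The cleanest route seems to be the observation that $\te(u)u=a^2w^m$ with $|a^2|=2|a|<|u|=|\te(u)|$ forces $a^2\in\Pref(\te(u))$; but $\te(u)$ begins with the $(tr)$-periodic block $(tr)^j$ while $a^2$ begins with $r(tr)^i$, so matching these prefixes makes $r(tr)^i$ agree with a prefix of a power of $tr$. I expect this to collapse — via Theorem~\ref{thm:overlap3}, or after a suitable catenation and cyclic permutation via the classical Lyndon--Sch\"{u}tzenberger theorem (Theorem~\ref{thm:original}), exactly in the spirit of Propositions~\ref{prop:rt2_prime_present1} and~\ref{prop:rt2_prime_present2} — to $\rho(r)=\rho(t)$, contradicting the primitivity of $rt$.

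An alternative finish is to note that $u$ is a prefix of $v^{n/2}$ and simultaneously, through $\te(u)u=a^2w^m$, a suffix of $w^m$; feeding these two periodicities into the generalized Fine--Wilf theorem (Theorem~\ref{thm:exFWlcm} or Theorem~\ref{thm:exFWgcd}) should force a common $\te$-primitive root of $v$ and $w$, whence $v\in\{w,\te(w)\}$ since both are $\te$-primitive, contradicting $|v|\neq|w|$ (Proposition~\ref{prop:different_length}). The delicate point in either approach is verifying that the overlapping region is long enough to trigger the relevant periodicity lemma, which is tight precisely in the extremal case $n=4$; this length bookkeeping is where I would concentrate the effort.
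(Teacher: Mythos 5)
Your setup coincides exactly with the paper's: Proposition~\ref{prop:CCKS09-a} gives $u=[a^2b]^{n/2-1}a\te(b)$ and $v=a^2b$ with $a=r(tr)^i$, $b=(tr)^j$, and Claims~\ref{claim:non-trivial} and~\ref{claim:u3nequ4} give $\te(u)u=a^2w^m$ with $w_1=\cdots=w_m=w$. Your $\te$-commutativity step is also sound ($\te(u)u$ being a $\te$-palindrome yields $a^2w^m=\te(w)^ma^2$, whence $w=\tau\sigma$ and $a^2=\sigma w^p$ by Proposition~\ref{prop:th-commute}), but it merely restates the structure and is never used afterwards. The genuine gap is that no contradiction is ever derived: both finishes you sketch stall exactly where you say you would ``concentrate the effort,'' and each has a concrete obstruction. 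The prefix-matching route does close when $i\ge 1$, since then $a=r(tr)^i\in r\{r,t\}^*$ and $\te(u)\in t\{r,t\}^*$ share a prefix of length $|a|\ge|r|+|t|$, so Theorem~\ref{th:uv-expr} gives $\rho(r)=\rho(t)$; but when $i=0$ the constraint is only $r^2\in\Pref\bigl((tr)^jr\cdots\bigr)$, which is harmless whenever $|t|\ge 2|r|$ (take the mirror image with $r=a$, $t=aabaa$: then $r^2\in\Pref(t)$, $rt$ primitive, $\rho(r)\neq\rho(t)$), so neither Theorem~\ref{thm:overlap3} nor Theorem~\ref{th:uv-expr} has anything to act on. The Fine--Wilf alternative fails in the extremal case you yourself flag: for $n=4$, $m=3$ the relation $2|u|=2|a|+m|w|$ gives $|w|=\frac{4}{3}(|a|+|b|)$, while $|u|=3|a|+2|b|$ and $|v|=2|a|+|b|$, so (e.g.\ with $|a|=|b|=3$) $|u|=15<16=|v|+|w|-\gcd(|v|,|w|)$, and the two periodicities of $u$ cannot be combined.

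The missing idea is the specific catenation the paper performs, which your phrase ``a suitable catenation \dots via the classical Lyndon--Sch\"utzenberger theorem'' gestures at but never exhibits: prepend $a^4$ to both sides of $\te(u)u=a^2w^m$. Using the identity $\te(b)a=ab$ (that is, $(rt)^jr(tr)^i=r(tr)^{i+j}$), one verifies that the left-hand side becomes a perfect square, $a^4\,\te(u)u=x^2$ with $x=a^2[a^2b]^{n/2-1}a\te(b)$, so that $x^2=a^6w^m$ is a classical LS equation. Theorem~\ref{thm:original} then yields $\rho(x)=\rho(a)=w$; since $x$ is built from the blocks $a$ and $(rt)^j$, root-sharing via Lemma~\ref{lem:rt_rootshare} gives $\rho(r(tr)^i)=\rho((rt)^j)$ and finally $\rho(r)=\rho(t)$, contradicting the primitivity of $rt$ uniformly in $i$, $n$ and $m$. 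Exhibiting this square and verifying the word identity behind it---not length bookkeeping---is the actual content of the proof, and it is absent from your proposal.
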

\begin{proof}
	Since $u_3 \neq u_4$, $w_1 = \cdots = w_m = w$ due to Claim~\ref{claim:u3nequ4}. 
	Using the representations of $u$ and $v$ by $r$ and $t$, we can see that $u_3u_4 = \te(u)u$ is equal to both sides of the following equation: 
	\[
		(tr)^j r(tr)^i [(rt)^j (r(tr)^i)^2]^{n/2-1} [(r(tr)^i)^2 (tr)^j]^{n/2-1} r(tr)^i (rt)^j = (r(tr)^i)^2 w^m\enspace.  
	\]
	By catenating $(r(tr)^i)^4$ to the left of both sides, we get $(r(tr)^i)^6 w^m = x^2$, where $x = (r(tr)^i)^2 [(r(tr)^i)^2 (tr)^j]^{n/2-1} r(tr)^i (rt)^j$. 
	Then, Theorem~\ref{thm:original} implies that $\rho(x) = \rho(r(tr)^i) = w$. 
	Since $x$ contains $r(tr)^i$ as its infix, the share of primitive root between $x$ and $r(tr)^i$ gives $\rho(r(tr)^i) = \rho((rt)^j)$. 
	We deduce from this using Lemma~\ref{lem:rt_rootshare} that $rt$ would not be primitive, which contradicts our hypothesis.
\end{proof}

\begin{proposition}
	Under the setting of Problem \ref{prob:main}, if $u_1u_2u_3u_4 = u^2\te(u)^2$, then $u, v, w \in \{t, \te(t)\}^+$ for some $t \in \Sigma^+$. 
\end{proposition}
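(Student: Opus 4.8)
The plan is to show that no \emph{non-trivial} equation of the form $u^2\te(u)^2 = v_1\cdots v_n w_1\cdots w_m$ can hold; combined with Theorem~\ref{thm:trivial} this establishes the statement. As recorded at the start of this subsection, non-triviality together with Proposition~\ref{prop:CCKS09-a} forces $n$ to be even and supplies non-empty $\te$-palindromes $r,t$ with $rt$ primitive such that $v = r(tr)^i r(tr)^{i+j}$ and $u = v^{n/2-1} r(tr)^i (rt)^j$ for some $i \ge 0$ and $j \ge 1$. Writing $q := (r(tr)^i)^2$, which is a non-empty $\te$-palindrome with $|q| < |v| < |u|$ (indeed $v = q\,(tr)^j$), Claim~\ref{claim:non-trivial} yields the key identity $\te(u)^2 = u_3u_4 = q\, w_1\cdots w_m$.

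I would then apply Lemma~\ref{lem:u2qwm2}, with $\te(u)$ in the role of $u$ and this same $q$, to split into two cases: either \emph{(a)} $\te(u), q \in \{w,\te(w)\}^+$, or \emph{(b)} $\te(u) = q r'$ for some non-empty $\te$-palindrome $r'$. In case~(a), closure of $\{w,\te(w)\}^+$ under $\te$ gives $u \in \{w,\te(w)\}^+$; peeling $w_1\cdots w_m$ off the ExLS equation with Lemma~\ref{lem:rt_th-rootshare} leaves $v^{n/2}\te(v)^{n/2} \in \{w,\te(w)\}^+$. Since this word also lies in $\{v,\te(v)\}^*$ and its length $n|v|$ is a common multiple of $|v|$ and $|w|$, hence at least $\lcm(|v|,|w|)$, Theorem~\ref{thm:exFWlcm} forces $v \in \{w,\te(w)\}$ (both $v$ and $w$ being $\te$-primitive). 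This makes the equation trivial, a contradiction.

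Case~(b) is the crux. Applying $\te$ and using that $q$ and $r'$ are $\te$-palindromes gives $u = r' q$, so $u$ has $q = (r(tr)^i)^2$ as a suffix; but $u = v^{n/2-1} r(tr)^i (rt)^j$ also ends with $r(tr)^i (rt)^j$. Thus $u$ carries the two suffixes $(r(tr)^i)^2$ and $r(tr)^i(rt)^j$, one of which must be a suffix of the other. The hard part will be to extract a contradiction from the coexistence of these two suffixes: comparing them letter by letter (separating the regimes according to whether $j|rt|$ is smaller or larger than $|r(tr)^i|$, and treating the degenerate case $i=0$ on its own) and invoking the Fine and Wilf theorem, or Theorem~\ref{thm:overlap3}, on the resulting overlap, I expect to obtain a relation of the form $(rt)^a = (tr)^a$ with $a \ge 1$, whence $rt = tr$ and $rt$ would fail to be primitive. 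Since that contradicts our hypothesis, case~(b) is impossible, so no non-trivial equation exists and the statement follows.
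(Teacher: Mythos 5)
Your setup and case~(a) are sound: the identity $\te(u)^2 = q\,w_1\cdots w_m$ with $q=(r(tr)^i)^2$ is exactly what Claim~\ref{claim:non-trivial} provides, Lemma~\ref{lem:u2qwm2} applies since $|q|<|u|$, and in case~(a) the lcm-form theorem (Theorem~\ref{thm:exFWlcm}) does force $v\in\{w,\te(w)\}$, contradicting $|v|\neq|w|$ (Proposition~\ref{prop:different_length}). The problem is case~(b), which you yourself identify as the crux and then do not prove: ``comparing them letter by letter \dots I expect to obtain a relation of the form $(rt)^a=(tr)^a$'' is a conjecture, not an argument, and the work it defers is substantial. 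Concretely, write $R=r(tr)^i$, $T'=(rt)^j$. When $|T'|\le|R|$, the coexistence of the suffixes $R^2$ and $RT'$ yields an equation $SR=RT'$ where $S$ is the suffix of $R$ of length $|T'|$; to kill this you must combine it with the identity $T'R=RT$ (where $T=(tr)^j=\te(T')$), use $R=\te(R)$ to show $S=\te(S)$, derive $SR^3=R^3S$, conclude $\rho(S)=\rho(R)$ and then $S=T'$, and finally rule out $r(tr)^i\in(rt)^+$ --- a genuine chain of steps, and note that what falls out is $\rho(r(tr)^i)=\rho((rt)^j)$, not the relation $(rt)^a=(tr)^a$ you predicted. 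When $|T'|>|R|$ the overlap sits inside $(rt)^j$ and the natural tool is Fine and Wilf with periods $|R|$ and $|rt|$, whose length hypothesis fails precisely in the degenerate case $i=0$ that you set aside; that case needs its own separate argument. So as written the proposal establishes the proposition only modulo an unproven combinatorial core.

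It is worth knowing that the paper dispatches this proposition with no case analysis at all, by exploiting the symmetry you did not use: $u^2\te(u)^2$ is a $\te$-palindrome, and so is $v_1\cdots v_n=v^{n/2}\te(v)^{n/2}$. Hence $\te(w_1\cdots w_m)$ is a prefix of $u^2\te(u)^2$, and since $|w_1\cdots w_m|<|v_1\cdots v_n|$ it is a prefix of $v_1\cdots v_n$; reflecting back, $w_1\cdots w_m\in\Suff(v_1\cdots v_n)$, so $(w_1\cdots w_m)^2\in\Suff(u^2\te(u)^2)$. A short length count from non-triviality ($(n-1)|v|<2|u|$, $n\ge 4$, $m\ge 3$) shows $2m|w|\ge 2|u|+|w|$, so Theorem~\ref{thm:exFWgcd} gives $u\in\{w,\te(w)\}^+$, whence $v\in\{w,\te(w)\}$, again contradicting $|v|\neq|w|$. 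I recommend you either adopt this reflection argument or carry out case~(b) in full, subcases included.
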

\begin{proof}
	Recall that $v_1 \cdots v_n$ is a $\te$-palindrome. 
	Since $u^2\te(u)^2$ is a $\te$-palindrome, $\te(w_1 \cdots w_m)$ is one of its prefixes and the assumption $|w_1 \cdots w_m| < |v_1 \cdots v_n|$ implies that $\te(w_1 \cdots w_m) \in \Pref(v_1 \cdots v_n)$.
	Hence $w_1 \cdots w_m \in \Suff(v_1 \cdots v_n)$ and now we have $(w_1 \cdots w_m)^2 \in \Suff(u^2\te(u)^2)$. 

	We prove that this suffix is long enough to apply the extended Fine and Wilf theorem. 
	Since $(n-1)|v| < 2|u|$ and $n \geq 4$, we have $|v| < \frac{2}{3} |u|$ and, in turn, $n|v| < 2|u| + \frac{2}{3} |u| = \frac{8}{3} |u|$.
	From this we obtain $m|w| > \frac{4}{3}|u|$. 
	Then, $2m|w| - (|w|+2|u|) > (2m-1)|w| - \frac{3}{2}m|w| = (\frac{1}{2}m-1)|w| > 0$ since $m \ge 3$. 
	Thus, $u^2 \te(u)^2$ and $(w_m \cdots w_1)^2$ share a suffix of length at least $2|u|+|w|$ and Theorem~\ref{thm:exFWgcd} concludes that $u \in \{w, \te(w)\}^+$ because $w$ is $\te$-primitive. 
	Now it is clear that also $v \in \{w, \te(w)\}^+$, but in fact $v \in \{w, \te(w)\}$ must hold because $v$ is also $\te$-primitive. 
	However this contradicts the assumption that $|v| \neq |w|$.
\end{proof}

	\subsection{ExLS equation of the form $u\te(u)u_3u_4 = v_1 \cdots v_n w_1 \cdots w_m$}
	\label{subsec:ExLS4_01}

Note that in the following propositions, we consider only the non-trivial equations; hence Proposition~\ref{prop:different_length} allows to assume $|v| \neq |w|$.

Using Proposition~\ref{prop:CCKS09-a}, $u\te(u) = (pq)^{n-1}p$ and $v_1 = \cdots = v_n = v = pq$ for some non-empty $\te$-palindromes $p, q$. 
Unlike the case considered before, in the current case $n$ can be odd. 
In fact, if $n$ is odd, then $u = (pq)^{(n-1)/2}y$, where $p = y\te(y)$ for some $y \in \Sigma^+$; while if $n$ is even, then $u = (pq)^{n/2-1}px$, where $q = x\te(x)$ for some $x \in \Sigma^+$. 
Again, we consider the four cases associated with the four possible values of $u_3u_4$. 
The last two, $u_3 = u_4 = u$ and $u_3 = u_4 = \te(u)$, are merged and studied in two separate propositions depending on the parity of $m$ instead. 

\begin{proposition}
	Under the setting of Problem \ref{prob:main}, if $u_1u_2u_3u_4 = u\te(u)u\te(u)$, then $u, v, w \in \{t, \te(t)\}^+$ for some $t \in \Sigma^+$. 
\end{proposition}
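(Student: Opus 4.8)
The plan is to assume the equation is non-trivial and derive a contradiction; combined with Theorem~\ref{thm:trivial}, this yields $u,v,w \in \{t,\te(t)\}^+$. First I would read off structural information from the two factorizations of $u\te(u)u\te(u)$. Since $u_1=u$, $u_2=\te(u)\neq u$, and non-triviality forces $(n-1)|v| < 2|u| < n|v|$, the word $v_1\cdots v_n$ is a prefix of $u_1u_2u_3$, so Proposition~\ref{prop:CCKS09-a} applies in its first case and gives $u\te(u)=(pq)^{n-1}p$ and $v=pq$ for some non-empty $\te$-palindromes $p,q$. On the other side, $u_3=u\neq\te(u)=u_4$, so Claim~\ref{claim:u3nequ4} yields $w_1=\cdots=w_m=w$ together with $u\te(u)=u_3u_4 \in \Suff(w^+)$.

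The key step is to convert these two one-sided facts into a single overlap of periodic words, exploiting that $u\te(u)$ is a $\te$-palindrome. From $v=pq$ we have $p\in\Pref(v)$, hence $u\te(u)=v^{n-1}p\in\Pref(v^+)$. From $u\te(u)\in\Suff(w^+)$, writing $w^N=z\,u\te(u)$ and applying $\te$ gives $\te(w)^N=\te(u\te(u))\,\te(z)=u\te(u)\,\te(z)$, so $u\te(u)\in\Pref(\te(w)^+)$. Thus $u\te(u)$ is a common prefix of a power of $v$ and of a power of $\te(w)$, of length $2|u|$. By Lemma~\ref{lem:uv_length} we have $|v|,|w|<|u|$, whence $2|u| > |v|+|w| \ge |v|+|\te(w)|-\gcd(|v|,|\te(w)|)$.

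Therefore the theorem of Fine and Wilf applies and yields $\rho(v)=\rho(\te(w))$. Both $v$ and $w$ are $\te$-primitive, hence primitive, and primitivity is preserved under $\te$ (if $\te(w)=z^k$ then $w=\te(z)^k$), so $v$ and $\te(w)$ are primitive; consequently $v=\te(w)$, and in particular $|v|=|w|$. This contradicts Proposition~\ref{prop:different_length}, which guarantees $|v|\neq|w|$ for non-trivial equations. Hence no non-trivial equation of this form can hold.

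I expect the only delicate points to be bookkeeping ones: verifying the prefix hypothesis of Proposition~\ref{prop:CCKS09-a}, which follows from $n|v| < 2|u|+|v| < 3|u|$ (using $|v|<|u|$), and checking the length inequality needed for Fine and Wilf, which is exactly where $|v|,|w|<|u|$ is used. The conceptual crux is the symmetrization through the $\te$-palindrome $u\te(u)$: it converts the suffix-of-$w^+$ information into a prefix-of-$\te(w)^+$ statement and thereby lets the classical Fine--Wilf theorem, rather than its generalized form, finish the argument.
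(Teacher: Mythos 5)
Your proof is correct, and it shares its skeleton with the paper's: both assume the equation is non-trivial, apply Proposition~\ref{prop:CCKS09-a} (case $u_2=\te(u)$) to obtain $v_1=\cdots=v_n=v=pq$ and $u\te(u)=(pq)^{n-1}p$, apply Claim~\ref{claim:u3nequ4} to obtain $w_1=\cdots=w_m=w$, and finish by contradicting $|v|\neq|w|$ (Proposition~\ref{prop:different_length}). Where you genuinely diverge is the concluding step. The paper simply observes that, after these reductions, the original equation has literally become a classical Lyndon--Sch\"{u}tzenberger equation $(u\te(u))^2 = v^n w^m$ with exponents $2, n, m \ge 2$, so Theorem~\ref{thm:original} gives $\rho(v)=\rho(w)$, hence $v=w$ by primitivity. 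You instead exploit the fact that $u\te(u)$ is a $\te$-palindrome to convert $u\te(u)\in\Suff(w^+)$ into $u\te(u)\in\Pref(\te(w)^+)$, pair this with $u\te(u)\in\Pref(v^+)$ from the left factorization, and invoke the classical Fine and Wilf theorem --- your length check $2|u|>|v|+|w|$ via Lemma~\ref{lem:uv_length} is exactly what is needed --- to get $\rho(v)=\rho(\te(w))$ and hence $v=\te(w)$. Both routes are valid and of comparable length. The paper's is more direct and highlights the recurring theme of the section, namely reducing extended LS equations to classical ones; yours gets by with Fine--Wilf alone, a strictly more elementary tool (the LS theorem is itself typically proved from it), at the cost of the palindrome-symmetrization step --- which, incidentally, is the same device the paper uses inside the proof of Lemma~\ref{lem:pali_pref_pali}.
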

\begin{proof}
	In this setting, $u_3u_4 = u\te(u) = q w_1 \cdots w_m$. 
	Since both $u\te(u)$ and $q$ are $\te$-palindromes, we can employ Claim~\ref{claim:u3nequ4} to obtain $w_1 = \cdots = w_m = w$. 
	Now the equation turns into the LS equation $(u\te(u))^2 = v^n w^m$, and hence $\rho(v) = \rho(w)$ due to Theorem~\ref{thm:original}. 
	Both $v$ and $w$ being primitive, this contradicts the assumption $|v| \neq |w|$ and consequently the existence of non-trivial solutions. 
\end{proof}

\begin{proposition}
	Under the setting of Problem \ref{prob:main}, if $u_1u_2u_3u_4 = u\te(u)\te(u)u$, then $u, v, w \in \{t, \te(t)\}^+$ for some $t \in \Sigma^+$. 
\end{proposition}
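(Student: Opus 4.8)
The plan is to prove that no \emph{non-trivial} equation of this form can hold; together with Theorem~\ref{thm:trivial} this gives the claimed $\te$-periodicity. So suppose the equation is non-trivial, meaning $(n-1)|v| < 2|u| < n|v|$ (Theorem~\ref{thm:trivial} and Proposition~\ref{prop:longenough}), and recall that Proposition~\ref{prop:different_length} permits the extra assumption $|v| \neq |w|$. Since $u_1 = u$ and $u_2 = \te(u)$, Proposition~\ref{prop:CCKS09-a} furnishes non-empty $\te$-palindromes $p, q$ with $v = v_1 = \cdots = v_n = pq$ and $u\te(u) = (pq)^{n-1}p$. As $u_3 = \te(u) \neq u = u_4$, Claim~\ref{claim:u3nequ4} forces $w_1 = \cdots = w_m = w$, and Claim~\ref{claim:non-trivial} then yields $u_3u_4 = \te(u)u = q w^m$. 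I would therefore carry forward the two conjugate palindromic overlaps $u\te(u) = (pq)^{n-1}p$ and $\te(u)u = q w^m$.

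The next step, which I expect to be clean, extracts the structure of $w$. Both $\te(u)u$ and $q$ are $\te$-palindromes and $\te(w^m) = \te(w)^m$, so applying $\te$ to $\te(u)u = q w^m$ gives $q\,(w^m) = \te(w^m)\,q$, which is an instance of the $\te$-commutativity equation. Proposition~\ref{prop:th-commute} then supplies $\te$-palindromes $r, t$ with $rt$ primitive such that $q = r(tr)^i$ and $w^m = (tr)^j$; since $w$ is $\te$-primitive, hence primitive, and $tr$ is primitive, this forces $w = tr$ and $m = j$. In particular $\te(u)u = q w^m = r(tr)^{i+m} = (rt)^{i+m}r$, a prefix of $(rt)^\infty$ of period $|w|$, and $rt$ is itself $\te$-primitive by Corollary~\ref{cor:conj_pal}.

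The hard part is deriving the contradiction from the two resulting representations of $u$: one obtained from $u\te(u) = (pq)^{n-1}p$ (a word of period $|v|$ built from $v$-blocks) and one from $\te(u)u = r(tr)^{i+m}$ (a word of period $|w|$ built from $w = tr$). I would import the explicit shape of $u$ recorded at the start of this subsection --- $u = (pq)^{(n-1)/2}y$ with $p = y\te(y)$ when $n$ is odd, and $u = (pq)^{n/2-1}px$ with $q = x\te(x)$ when $n$ is even --- substitute it into $\te(u)u = r(tr)^{i+m}$, and catenate a suitable palindromic factor (e.g.\ prepending $t$ turns the right-hand side into $w^{\,i+m+1}$) so as to reach a genuine Lyndon--Sch\"utzenberger equation. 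Theorem~\ref{thm:original} would then equate primitive roots, and, as in the proofs of Proposition~\ref{prop:LS4_0001} and the $u^2\te(u)u$ case, Lemma~\ref{lem:rt_rootshare} together with the primitivity statements of Propositions~\ref{prop:rt2_prime_present1} and~\ref{prop:rt2_prime_present2} would force $\rho(r) = \rho(t)$, contradicting the primitivity of $rt$. I anticipate the real effort to lie in this last step: keeping the catenated word long enough to invoke Theorem~\ref{thm:original} and separating the bookkeeping for the two parities of $n$, since the naive Fine--Wilf overlap between the period-$|v|$ and period-$|w|$ windows of $u$ is, for small $n$ and $m$, not quite long enough to conclude directly.
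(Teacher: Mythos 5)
Your first two paragraphs are sound and partly nicer than the paper's own argument. The setup (non-triviality, $|v|\neq|w|$, Proposition~\ref{prop:CCKS09-a} giving $v=pq$ and $u\te(u)=(pq)^{n-1}p$, Claim~\ref{claim:u3nequ4} giving $w_1=\cdots=w_m=w$, hence $\te(u)u=qw^m$) coincides with the paper's. Your $\te$-commutativity step is correct: applying $\te$ to $\te(u)u=qw^m$ yields $qw^m=\te(w)^mq$, and Proposition~\ref{prop:th-commute} together with uniqueness of the primitive root forces $w=tr$, $q=r(tr)^i$, so $\te(u)u=r(tr)^{i+m}$. This is genuinely different from the paper, which never extracts this structure in its odd-$n$ case and reaches it in the even-$n$ case only by a longer route (first proving $x=\te(x)$ from the two sides of the equation, then solving $px=xz$ via Lemma~\ref{lem:pq-qp} and Proposition~\ref{prop:th-commute}). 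You would, however, also need to dispose of the degenerate solutions of Proposition~\ref{prop:th-commute} with $r=\lambda$ or $t=\lambda$, in which $w$ is a $\te$-palindrome and $q\in w^+$; this case is easy but unaddressed.

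The genuine gap is your third paragraph: the contradiction is never derived, only planned, and the plan as stated does not go through. Prepending $t$ to $\te(u)u=r(tr)^{i+m}$ does give $t\te(u)u=w^{i+m+1}$, but after substituting the explicit shape of $u$ the left-hand side is $t\te(y)(qp)^{(n-1)/2}(pq)^{(n-1)/2}y$ for odd $n$, or $t[\te(x)px]^{n/2-1}\te(x)p^2x[\te(x)px]^{n/2-1}$ for even $n$ --- not a product of powers of two or three words --- so Theorem~\ref{thm:original} is not applicable to it. Likewise, Propositions~\ref{prop:rt2_prime_present1} and~\ref{prop:rt2_prime_present2} certify primitivity of words whose shape comes from case 2 of Proposition~\ref{prop:CCKS09-a} (the case $u_1=u_2$, used in Subsection~\ref{subsec:ExLS4_00}); here $u_2=\te(u)$, your $r,t$ arise from a different source, and it is not clear the words occurring in this case ever take those shapes, so invoking them is speculative. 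The difficulty you flag yourself --- that the Fine--Wilf overlap between the period-$|v|$ and period-$|w|$ portions of $u$ is too short for small $n,m$ --- is precisely where the paper's proof does its real, case-dependent work: for odd $n$ it factors $\te(u)u=x_sx$ with $x=\te(y)(pq)^{(n-1)/2}y$ and $x_s\in\Suff(x)$, shows by a length count that $x^2$ and $w^{m+k}$ share a prefix of length at least $|x|+|w|$, concludes $\rho(x)=w$ by Fine and Wilf, and then shows the conjugate $\te(y)qy$ of $v$ equals $w^j$ with $j\ge2$, contradicting the primitivity of $v$; for even $n$ it proves $x=\te(x)$, excludes $|x|\ge|p|$ by Lemma~\ref{lem:pq-qp}, uses the parity of $m$, and only then reaches an LS equation. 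Neither this analysis nor a working substitute exploiting your $w=tr$ appears in the proposal, so the proof is incomplete at its decisive step.
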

\begin{proof}
	Recall that $u\te(u) = (pq)^{n-1}p$. 
	Claim~\ref{claim:u3nequ4} implies that $\te(u)u = q w^m$ with $q = w' w^{k-1}$ for some $1 \le k \le m$ and a non-empty proper suffix $w'$ of $w$. 

	{\bf Case 1 ($n$ is odd)}: 
	Then we have $\te(u)u = qw^m = x_s x$, where $x_s = \te(y) q(pq)^{(n-1)/2-1} y$ and $x = \te(y) (pq)^{(n-1)/2} y$; note that $x_s \in \Suff(x)$. 
	One can easily calculate that $|w| = \frac{1}{m}[n|p| + (n-2)|q|]$ and $|x_s| = \frac{1}{2}(n-1)(|p| + |q|)$, and hence $|x_s|-|w| = \frac{(m-2)(n-1)-2}{2m}|p|+\frac{(m-2)(n-1)+2}{2m}|q|$, which is positive because $n, m \ge 3$. 
	Thus we can say that $x^2$ and $w^{m+k}$ share a prefix of length at least $|x|+|w|$ so that by the Fine and Wilf theorem, $\rho(x) = \rho(w) = w$.
	Starting from $\te(y)yqw^m = \te(y)yx_sx = x^2$, we can verify that $2|x|- m|w| = |pq|$, that is, $|pq|$ is a multiple of $|w|$. 
	The suffix of $x$ of length $|pq|$ is $\te(y)qy$, which is $w^j$ for some $j \ge 2$ because $|pq| = |v| \neq |w|$. 
	Therefore, this conjugate of $v$ is not primitive, either. 
	This is a contradiction with the $\te$-primitivity of $v$. 

	{\bf Case 2 ($n$ is even)}: 
	In this case, $u = (pq)^{n/2-1}px$ for some $x \in \Sigma^+$ such that $q = x\te(x)$. 
	Substituting this into $\te(u)u = qw^m$ gives 
	\begin{equation}\label{eq:0110-1}
		[\te(x)px]^{n/2-1} \te(x)p^2 x [\te(x)px]^{n/2-1} = x\te(x) w^m. 
	\end{equation}
	From this equation, we can obtain $x = \te(x)$ and hence $px = xz$ for some $z \in \Sigma^+$. 
	If $|x| \ge |p|$, then Lemma \ref{lem:pq-qp} implies $\rho(x) = \rho(p)$ so that $v = pq = px^2$ would not be primitive. 
	Hence $|x| < |p|$ must hold and under this condition, the solution of $px = xz$ is given by $p = xy$ and $z = yx$ for some $y \in \Sigma^+$. 
	Since $p = \te(p)$, we have $p = xy = \te(y)x$. 
	Proposition \ref{prop:th-commute} gives $x = r(tr)^i$ and $y = (tr)^j$ for some $i \ge 0$, $j \ge 1$, and $\te$-palindromes $r, t$ such that $rt$ is primitive. 
	Both of $r$ and $t$ should be non-empty; otherwise, $\rho(p) = \rho(x)$ and $v = pq = px^2$ would not be primitive.  
	Substituting these into Eq.~(\ref{eq:0110-1}) yields the following equation. 
	\[
		\bigl[(tr)^j r(tr)^i [r(tr)^i r(tr)^{i+j} r(tr)^i]^{n/2-1} \bigr]^2 = w^m. 
	\]
	Since $w$ is $\te$-primitive, this equation means that $m$ has to be even. 
	Then $w^{m/2} = (tr)^j r(tr)^i [r(tr)^i r(tr)^{i+j} r(tr)^i]^{n/2-1}$. 
	By catenating $(r(tr)^i)^2$ from the left to the both sides of this equation, we obtain an LS equation $[r(tr)^i]^2 w^{m/2} = [r(tr)^i r(tr)^{i+j} r(tr)^i]^{n/2}$. 
	Theorem \ref{thm:original} gives $\rho(r(tr)^i) = \rho(r(tr)^i r(tr)^{i+j} r(tr)^i)$ and Lemma \ref{lem:rt_rootshare} reduces it to $\rho(r) = \rho(t)$, but this contradicts the primitivity of $pq = r(tr)^{i+j} (r(tr)^i)^2$. 
\end{proof}

\begin{proposition}
	Under the setting of Problem \ref{prob:main}, if $u_1u_2 = u\te(u)$, $u_3 = u_4$, and $m$ is odd, then $u, v, w \in \{t, \te(t)\}^+$ for some $t \in \Sigma^+$. 
\end{proposition}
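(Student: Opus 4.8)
The plan is to show that no \emph{non-trivial} equation of this form can hold; together with Theorem~\ref{thm:trivial} this yields the claim. So assume the equation is non-trivial. Since $u_1u_2 = u\te(u)$ we are in the case $u_2 = \te(u)$ of Proposition~\ref{prop:CCKS09-a}, which gives $v_1 = \cdots = v_n = v = pq$ and $u\te(u) = (pq)^{n-1}p$ for non-empty $\te$-palindromes $p,q$; Claim~\ref{claim:non-trivial} then yields $u_3u_4 = q\,w_1\cdots w_m$. Writing $u'$ for the common value $u_3 = u_4 \in \{u,\te(u)\}$, this reads $u'^2 = q\,w_1\cdots w_m$, where $q$ is a $\te$-palindrome and $|q| < |v| < |u| = |u'|$.

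Next I would invoke the hypothesis that $m$ is odd: Proposition~\ref{prop:u2qwm3odd} applies to $u'^2 = q\,w_1\cdots w_m$ and forces $w = \te(w)$, so that $w_1 = \cdots = w_m = w$ and $u'^2 = qw^m$. To split $u'$, apply Lemma~\ref{lem:u2qwm2}, which gives two cases. In the clean case $u',q \in \{w,\te(w)\}^+ = w^+$, the equality $w = \te(w)$ gives $\te(u') = u' \in w^+$ and hence $u\te(u) = (pq)^{n-1}p \in w^+$; multiplying on the right by $q \in w^+$ produces $v^n = (pq)^n \in w^+$, whence $v \in w^+$ by Lemma~\ref{lem:rt_rootshare}. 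As $v$ is $\te$-primitive this forces $v = w$, contradicting $|v| \neq |w|$ (Proposition~\ref{prop:different_length}).

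It remains to rule out the other alternative of Lemma~\ref{lem:u2qwm2}, namely $u' = qr$ for a non-empty $\te$-palindrome $r$. Substituting into $u'^2 = qw^m$ and cancelling $q$ on the left gives $rqr = w^m$. The useful observation is that, since $r$ and $q$ are $\te$-palindromes, $rqr = r(qr) = ru'$ while $rqr = (rq)r = \te(u')r$, so $ru' = \te(u')r = w^m$ is an instance of the $\te$-commutativity $ry = \te(y)r$ with $y = u'$. Proposition~\ref{prop:th-commute} then presents $r = \gamma(\delta\gamma)^a$ and $u' = (\delta\gamma)^b$ for $\te$-palindromes $\gamma,\delta$ with $\gamma\delta$ primitive, $a \ge 0$ and $b \ge 1$, so that $w^m = ru' = \gamma(\delta\gamma)^{a+b} = (\gamma\delta)^{a+b}\gamma$.

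The hard part is concentrated exactly here. One is tempted to conclude that $(\gamma\delta)^{a+b}\gamma$ is primitive (it is, by a Fine and Wilf argument in the style of Proposition~\ref{prop:rt2_prime_present1}, whenever $a+b \ge 2$), which already contradicts $m \ge 3$ since $w^m$ is not primitive; but this shortcut fails precisely when $a+b = 1$, where $\gamma\delta\gamma$ genuinely can be an odd power $w^m$. To settle this residual configuration I would feed back the extra information $u'^2 = qw^m$: with $a = 0$, $b = 1$ one has $r = \gamma$ and $u' = \delta\gamma$, and comparing $u'^2 = (\delta\gamma)^2$ with $q\,\gamma\delta\gamma$ (both ending in $w^m = \gamma\delta\gamma$) forces $q = \delta$. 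Substituting $q = \delta$ back into $u\te(u) = (pq)^{n-1}p$ turns it into an equation among $p$, $q$ and $\gamma$ subject to $\gamma q$ being primitive, which I would analyse with Lemma~\ref{lem:pq-qp} and the classical Lyndon--Sch\"{u}tzenberger theorem (Theorem~\ref{thm:original}) to contradict the primitivity of $v = pq$. The main obstacle is thus not the overall architecture but this degenerate case $a+b=1$, where the $\te$-commutation structure alone is inconclusive and the primitivity of $v$ must be exploited directly.
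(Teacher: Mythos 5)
Your setup and the first half of the argument are sound: the reduction to non-trivial equations, the use of Proposition~\ref{prop:CCKS09-a} and Claim~\ref{claim:non-trivial} to get $u'^2 = q\,w_1\cdots w_m$ with $u' = u_3 = u_4$, the application of Proposition~\ref{prop:u2qwm3odd} to force $w = \te(w)$, and the contradiction in the case $u', q \in w^+$ (via $v^n \in w^+$, Lemma~\ref{lem:rt_rootshare}, and Proposition~\ref{prop:different_length}) are all correct. The observation that $u' = qr$ with $q, r$ $\te$-palindromes turns $rqr = w^m$ into the $\te$-commutation $ru' = \te(u')r$, so that Proposition~\ref{prop:th-commute} applies, is a genuinely different and attractive idea that the paper does not use; your Fine-and-Wilf primitivity claim for $(\gamma\delta)^{a+b}\gamma$ with $a+b \ge 2$ is also provable (though you should note it needs both $\gamma$ and $\delta$ non-empty; the degenerate cases $\gamma = \lambda$ or $\delta = \lambda$ put $u'$ and $q$ back into $w^+$ and must be folded into your first case).

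The proof is nevertheless incomplete, and the gap is exactly where you flag it: the residual case $a+b = 1$. There, deriving $q = \delta$, $r = \gamma$ is correct but carries no new information --- it merely restates $u' = qr$, now with the extra fact that $qr$ is primitive. What remains to be shown is that $u' = qr$, $w^m = rqr$ ($m \ge 3$ odd, $w = \te(w)$) is incompatible with $u\te(u) = (pq)^{n-1}p$ and the $\te$-primitivity of $v = pq$, and this is precisely where all the substance of the paper's proof lives. The paper disposes of it by (i) observing that for $n \ge 5$ one has $|u| \ge 2|q|$, so the second conclusion of Proposition~\ref{prop:u2qwm3odd} gives $\rho(u_3) = \rho(q) = w$ and hence $\rho(p) = \rho(q)$, contradicting primitivity of $v$; and (ii) for $n \in \{3,4\}$, running four separate case-specific arguments according to $u_3 = u$ or $u_3 = \te(u)$, using the explicit shapes $u = (pq)^{(n-1)/2}y$ with $p = y\te(y)$ (odd $n$) and $u = (pq)^{n/2-1}px$ with $q = x\te(x)$ (even $n$), together with Lemma~\ref{lem:pq-qp}, Theorem~\ref{th:uv-expr}, and Theorem~\ref{thm:original}. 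Nothing in your sketch carries out this analysis: ``substituting $q=\delta$ back \dots\ which I would analyse'' names the tools but not the argument, and the argument genuinely depends on the parity of $n$ and on which of $u,\te(u)$ equals $u_3$ (the $n=3$, $u_3=\te(u)$ case, for instance, needs a second $\te$-commutation step, not Lemma~\ref{lem:pq-qp}). Within your own framework you could at least kill the sub-case $|r| \ge |q|$ by invoking the second half of Proposition~\ref{prop:u2qwm3odd} (it yields $\rho(r)=\rho(q)=w$, contradicting primitivity of $qr$), but the sub-case $|r| < |q|$ --- equivalently small $n$ --- remains open, so the proposal as it stands does not prove the proposition.
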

\begin{proof}
	We have $u_3 u_4 = q w_1 \cdots w_m$. 
	Since $u_3 = u_4$ and $|q| < |u|$, we can employ Proposition~\ref{prop:u2qwm3odd} to obtain $w = \te(w)$. 
	Moreover, when $n \geq 5$, we have $|u| \ge 2|q|$ and the proposition also gives $\rho(u_3) = \rho(q) = w$. 
	Since $w = \te(w)$, we can see that $\rho(u) = w$. 
	Then $\rho(p) = w$ because $\rho(u) = \rho(q) = w$ and $pq \in \Pref(u)$. 
	However, $\rho(p) = \rho(q)$ means that $v = pq$ would not be even primitive.
	Therefore in the following let $n$ be either 3 or 4. 

	First we consider the case when $u_3 = u$. 
	Then we have either $(pqy)^2 = qw^m$ (when $n = 3$) where $p = y\te(y)$, or $(pqpx)^2 = qw^m$ (when $n = 4$) where $q = x\te(x)$, for some $x,y \in \Sigma^+$. 
	In both cases, if $|p| \le |q|$, Lemma~\ref{lem:pq-qp} can be applied and we have $\rho(p) = \rho(q)$, so $v = pq$ would not be even primitive. 
	Hence $|p| > |q|$ must hold, but then $|u| \ge 2|q|$ and then Proposition~\ref{prop:u2qwm3odd} implies $\rho(p) = \rho(q)$.

	Next we consider the case when $u_3 = \te(u)$ and $n = 3$.
	Then $\te(u) = \te(y)qp$ so that $\te(y)qp\te(y)qp = qw^m$. 
	Let $\te(y)q = qz$ for some $z$ with $|y| = |z|$. 
	Using $pq = y\te(y)q = yqz$, from $\te(y)qp\te(y)qp = qw^m$ we can obtain $zyqzzy\te(y) = w^m$. 
	Since $w = \te(w)$, this equation gives $z = y = \te(y)$. 
	Then $\te(y)q = qz$ turns into $yq = qy$ and hence $\rho(y) = \rho(q)$ by Theorem~\ref{th:uv-expr}. 
	This however implies that $v = y\te(y)q$ would not be $\te$-primitive.

	Finally we consider the case when $u_3 = \te(u)$ and $n = 4$.
	Then we have $[\te(x)pqp]^2 = qw^m$, which gives $x = \te(x)$ because $q = x\te(x)$. 
	Then $\te(u)^2 = x^2 w^m$, which is an LS equation and Theorem~\ref{thm:original} implies $\rho(\te(u)) = \rho(x) = w$. 
	However since $x^2p = qp \in \Suff(\te(u))$, we also get $\rho(p) = w$ (otherwise $w$ would be a proper infix of its square in $x^2$). This leads to the usual contradiction that $v = px^2$ would not be primitive. 
\end{proof}

\begin{proposition}
	Under the setting of Problem \ref{prob:main}, if $u_1u_2 = u\te(u)$, $u_3 = u_4$, and $m$ is even, then $u, v, w \in \{t, \te(t)\}^+$ for some $t \in \Sigma^+$. 
\end{proposition}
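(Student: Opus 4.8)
The plan is to show, exactly as in the preceding propositions of this subsection, that no \emph{non-trivial} equation of this form can hold; the trivial equations are already covered by Theorem~\ref{thm:trivial}. Write $s := u_3 = u_4 \in \{u, \te(u)\}$. Since $u_1u_2 = u\te(u)$, Claim~\ref{claim:non-trivial} (whose derivation in the case $u_2 = \te(u)$ rests on $u\te(u)q = v^n$ with $v = pq$, as recorded at the start of the subsection) gives $s^2 = u_3u_4 = q w_1 \cdots w_m$ for the non-empty $\te$-palindrome $q$, and Condition~\ref{cond:length} of Problem~\ref{prob:main} yields $|q| \le |v| < |u| = |s|$. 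This is precisely the hypothesis of Lemma~\ref{lem:u2qwm2}, which I would invoke first.

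Lemma~\ref{lem:u2qwm2} leaves two cases. If $s, q \in \{w, \te(w)\}^+$, then $u \in \{w, \te(w)\}^+$ as well (the set is closed under $\te$), and substituting this into $u\te(u)s^2 = v^n w_1 \cdots w_m$ together with Lemma~\ref{lem:rt_th-rootshare} forces $v^n \in \{w, \te(w)\}^+$ and hence $v \in \{w, \te(w)\}$ by the $\te$-primitivity of $v$; this makes the equation trivial, a contradiction. So I may assume the second alternative: $s = qr$ for a non-empty $\te$-palindrome $r$, with $rqr = w_1 \cdots w_m$.

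The heart of the argument is to combine $s = qr$ with the identity $u\te(u) = (pq)^{n-1}p$. When $s = u$, then $\te(u) = \te(qr) = rq$, so $u\te(u) = qr^2q = (pq)^{n-1}p$; appending $q$ on the right turns this into $qr^2q^2 = (pq)^n = v^n$. Rotating the leading factor $q$ to the end produces $r^2q^3 = {v'}^n$, where $v'$ is the corresponding conjugate of the primitive word $v$, hence itself primitive with $|v'| = |v|$. This is a genuine Lyndon--Sch\"{u}tzenberger equation (exponents $2,3,n$, all $\ge 2$), so Theorem~\ref{thm:original} gives $\rho(q) = \rho(v') = v'$; but then $q$ would be a power of $v'$ while $|q| < |v| = |v'|$, a contradiction. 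Thus the case $s = u$ closes cleanly.

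The main obstacle is the remaining case $s = \te(u)$, where the same manipulation gives $u = rq$ and $u\te(u) = rq^2r = (pq)^{n-1}p$; now the squared factor sits \emph{inside}, so no single conjugation exposes two blocks of exponent $\ge 2$ and Theorem~\ref{thm:original} is not directly applicable. My plan here is to extract periodicity instead: $rq^2r = (pq)^{n-1}p$ carries the period $|pq| = |v|$ (from the right-hand side) and the period $|rq^2|$ (from the border $r$ of the palindrome $rq^2r$), and for $n$ large enough the word is long enough for the Fine and Wilf theorem to force $|v| \mid |rq^2|$ and hence $rq^2 = v^k$; peeling $q$ off this equation, using that $p,q,r$ are $\te$-palindromes and that $v = pq$ is primitive, should drive $\rho(p) = \rho(q)$, contradicting the primitivity of $v$. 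The small residual values $n \in \{3,4\}$ (exactly as in the companion odd-$m$ proposition) would be handled separately by substituting the parity-dependent representations $p = y\te(y)$ ($n$ odd) and $q = x\te(x)$ ($n$ even) and invoking Lemma~\ref{lem:pq-qp} and Proposition~\ref{prop:th-commute} to reach the same contradiction $\rho(p) = \rho(q)$. The length bookkeeping that decides when Fine and Wilf applies, and the separate small-$n$ analysis, are where the real work lies.
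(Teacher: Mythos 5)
Your reduction via Lemma~\ref{lem:u2qwm2} matches the paper's opening move, and two of your three cases are correct and handled more uniformly than in the paper. For $u_3 = u_4 = u$, rotating $qr^2q^2 = (pq)^n$ into $r^2q^3 = (zq)^n$ (where $pq = qz$, so $zq$ is a primitive conjugate of $v$) and invoking Theorem~\ref{thm:original} against $|q| < |v|$ is a complete argument; the paper instead splits this case by the parity of $n$ and works through the representations $p = y\te(y)$, $q = x\te(x)$, using Theorem~\ref{th:uv-expr} and Proposition~\ref{prop:uteuqnu}. For $u_3 = u_4 = \te(u)$ with $n \ge 5$ your Fine--Wilf step is also sound: since $2|u| = n|v| - |q|$, one has $|r| = (n|v| - 3|q|)/2 > (n-3)|v|/2 \ge |v|$, so $(rq^2)^2$ and $(pq)^n$ share the prefix $rq^2r$ of length more than $|rq^2| + |v|$, whence $rq^2 = v^k$; then $rq = (pq)^{k-1}p$, applying $\te$ gives $qr = rq$, and the contradiction follows.

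The genuine gap is the case $u_3 = u_4 = \te(u)$ with $n \in \{3, 4\}$, which you defer, and your stated plan for it would not go through. These cases are non-vacuous: your Fine--Wilf condition fails for $n = 4$ roughly when $|q| > \frac{2}{3}|v|$, and for $n = 3$ when $|q| > \frac{1}{3}|v|$. First, the appeal to ``exactly as in the companion odd-$m$ proposition'' does not transfer: the small-$n$ arguments there rest on the conclusion $w = \te(w)$ of Proposition~\ref{prop:u2qwm3odd}, which requires $m$ odd and is unavailable here. Second, the tools you name are not the ones that close these cases. For odd $n$ one arrives at the conjugacy $\te(y)q = qz$, which is \emph{not} of the $\te$-commutativity form $xy = \te(y)x$ covered by Proposition~\ref{prop:th-commute} until one already knows $z = y$; the paper instead derives $z = y = \te(y)$ from the palindromicity of the word $r$ supplied by Lemma~\ref{lem:u2qwm2}, and only then concludes $\rho(y) = \rho(q)$ by Theorem~\ref{th:uv-expr}, contradicting the $\te$-primitivity of $v = y\te(y)q$. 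For even $n$ the paper's route is different again: from $q = x\te(x)$ it gets $x = \te(x)$, then applies Lemma~\ref{lem:pali_pref_pali} to force either $x \in \{w, \te(w)\}^+$ (killed by $|v| \neq |w|$) or $w_1 = \cdots = w_m = w$, in which case the classical LS equation $\te(u)^2 = x^2 w^m$ together with $qp = x^2p \in \Suff(\te(u))$ yields the contradiction $\rho(p) = w$. Neither Lemma~\ref{lem:pq-qp} nor Proposition~\ref{prop:th-commute} substitutes for these arguments, so until you supply them your proof is incomplete.
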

\begin{proof} 
	As before we consider only non-trivial equation so that we have $u_3 u_4 = q w_1 \cdots w_m$ and $|v| \neq |w|$. 
	Lemma~\ref{lem:u2qwm2} gives two cases, but actually it suffices to consider the case when $u = qr$ for some non-empty $\te$-palindrome $r$. 

	First we consider the case when $u_3 = u$ and $n$ is even. 
	Then $[(pq)^{n/2-1}px]^2 = q w_1 \cdots w_m$, where $q = x\te(x)$ for some $x \in \Sigma^+$. 
	If $|p| \le |q|$, then $pq = qp$ and $v$ would not be even primitive. 
	Hence let $p = qz_1$ for some $z_1 \in \Sigma^+$. 
	Then $r = z_1x\te(x)(pq)^{n/2-2}x\te(x)z_1x$. 
	Since $r = \te(r)$, this equation gives $z_1x = \te(z_1x)$ and $x = \te(x)$. 
	Thus we have $z_1x = x\te(z_1)$ and $p = x^2 z_1 = \te(z_1) x^2$. 
	Then $x^3 z_1 = x \te(z_1) x^2 = z_1 x^3$ so that $\rho(x) = \rho(z_1)$ by Theorem~\ref{th:uv-expr}.
	However, this result contradicts the primitivity of $v = pq = x^2z_1x^2$.

	The second case is when $u_3 = u$ an $n$ is odd. 
	We have $[(pq)^{(n-1)/2}y]^2 = q w_1 \cdots w_m$, where $p = y\te(y)$. 
	From this equation, $q$ is of even length so let $q = x\te(x)$. 
	If $|p| \le |q|$, then we can apply Lemma~\ref{lem:pq-qp} to the equation above to prove that $\rho(p) = \rho(q)$, which contradicts the primitivity of $v$.
	Thus we can let $y = xz_2$ for some $z_2 \in \Sigma^+$. 
	Then $[(xz_2\te(z_2)\te(x)x\te(x))^{(n-1)/2}xz_2]^2 = x\te(x) w_1 \cdots w_m$. 
	We can easily check that $w_{m/2+1} \cdots w_m = z_2 [\te(z_2)\te(x)x\te(x)xz_2]^{(n-1)/2}$. 
	According to Proposition~\ref{prop:uteuqnu}, we can deduce from this that $z_2, \te(x)x \in \{w, \te(w)\}^+$ and this further implies $x \in \{w, \te(w)\}^+$. 
	However then $v = pq = xz_2\te(z_2)\te(x)x\te(x)$ would not be $\te$-primitive. 

	Thirdly we consider the case when $u_3 = \te(u)$ and $n$ is even. 
	We have $[\te(x)p(qp)^{n/2-1}]^2 = x\te(x) w_1 \cdots w_m$, and this equation immediately gives $x = \te(x)$. 
	Then $p(qp)^{n/2-1}xp(qp)^{n/2-1} = x w_1 \cdots w_m$. 
	Since the left-hand side and $x$ are $\te$-palindromes, we have either $x \in \{w, \te(w)\}^+$ or $w_1 = \cdots = w_m = w$ by Lemma~\ref{lem:pali_pref_pali}. 
	In the former case, $\te(u)^2 = x^2 w_1 \cdots w_m \in \{w, \te(w)\}^+$ and hence $\te(u), u \in \{w, \te(w)\}^+$ (Lemma~\ref{lem:rt_th-rootshare}). 
	Then $v^n = u\te(u)x\te(x) \in \{w, \te(w)\}^+$, and hence $v \in \{w, \te(w)\}$ because of Lemma~\ref{lem:rt_th-rootshare} and the $\te$-primitivity of $v, w$. 
	However, this contradicts the assumption $|v| \neq |w|$. 
	In the latter case, we have $\te(u)^2 = x^2 w^m$ and hence $\rho(\te(u)) = \rho(x) = w$ (Theorem~\ref{thm:original}). 
	However since $qp = x^2p \in \Suff(\te(u))$, we reach the contradictory result $\rho(p) = w$. 

	The final case is when $u_3 = \te(u)$ and $n$ is odd. 
	Then $[\te(y)(qp)^{(n-1)/2}]^2 = q w_1 \cdots w_m$, where $p = y\te(y)$ for some $y \in \Sigma^+$. 
	Let $\te(y)q = qz_4$ for some $z_4$ with $|y| = |z_4|$. 
	Then $r = z_4(y\te(y)q)^{(n-1)/2}y\te(y)$, which is a $\te$-palindrome so that $z_4 = y = \te(y)$. 
	Now we can transform $\te(y)q = qz_4$ into $yq = qy$ and hence $\rho(y) = \rho(q)$ (Theorem~\ref{th:uv-expr}). 
	However, then $v = y\te(y)q$ would not be $\te$-primitive. 
\end{proof}

Combining the results obtained in this section, we can give a positive answer to Problem~\ref{prob:main}. 
Furthermore, with the result proved in~\cite{CCKS09} (also see Table~\ref{tbl:exLS_summary}), this positive answer concludes the following theorem, the strongest positive result we obtain on the ExLS equation. 

\begin{theorem}\label{thm:exLS4}
	Let $u, v, w \in \Sigma^+$ and let $u_1, \ldots, u_\ell \in \{u, \te(u)\}$, $v_1, \ldots, v_n \in \{v, \te(v)\}$, and $w_1, \ldots, w_m \in \{w, \te(w)\}$. 
	For $\ell \ge 4$ and $n, m \ge 3$, the equation $u_1 \cdots u_\ell = v_1 \cdots v_n w_1 \cdots w_m$ implies $u, v, w \in \{t, \te(t)\}^+$ for some $t \in \Sigma^+$. 
\end{theorem}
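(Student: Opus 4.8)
The plan is to split the statement into the regime $\ell \ge 5$, which is already known, and the new regime $\ell = 4$. For $\ell \ge 5$ the triple $(\ge 5, \ge 3, \ge 3)$ was shown to impose $\te$-periodicity in~\cite{CCKS09} (recorded in Table~\ref{tbl:exLS_summary}), so nothing further is needed there. The entire burden is therefore the case $\ell = 4$, and I would first reduce an arbitrary $(4, n, m)$ equation to the normalized setting of Problem~\ref{prob:main}. This reduction proceeds by four harmless normalizations: replacing $v$ and $w$ by their $\te$-primitive roots (which only enlarges $n$ and $m$, keeping both $\ge 3$) installs Condition~1; applying $\te$ to the whole equation reverses it and interchanges the roles of the $v$-block and the $w$-block, so up to this reflection we may assume $|v_1 \cdots v_n| \ge |w_1 \cdots w_m|$, giving Condition~2; renaming $u \leftrightarrow \te(u)$, $v \leftrightarrow \te(v)$, $w \leftrightarrow \te(w)$ as needed fixes $u_1 = u$, $v_1 = v$, $w_m = w$ (Condition~3) without affecting the set $\{t, \te(t)\}$ in the conclusion; and finally Condition~4, $|v|, |w| < |u|$, is supplied automatically by Lemma~\ref{lem:uv_length}. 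After this, proving the theorem for $\ell = 4$ is exactly answering Problem~\ref{prob:main} affirmatively.

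Next I would dichotomize the normalized equation into \emph{trivial} and \emph{non-trivial} instances. If there is an index $i$ with $u_1u_2 = v_1 \cdots v_i$, then the generalized Fine and Wilf theorem (Theorem~\ref{thm:exFWlcm}) forces $u \in \{v, \te(v)\}^+$ and hence, via the two-out-of-three principle, $u, v, w \in \{t, \te(t)\}^+$; this is Theorem~\ref{thm:trivial}. Proposition~\ref{prop:longenough} further shows that $n|v| \ge 2|u| + |v|$ already makes the equation trivial, so a genuinely non-trivial equation must satisfy the tight length window $(n-1)|v| < 2|u| < n|v|$. This is precisely the hypothesis of Proposition~\ref{prop:CCKS09-a}, which I would invoke to obtain a rigid structural description of $u$ and $v$ in terms of non-empty $\te$-palindromes: either $u\te(u) = (pq)^{n-1}p$ with $v = pq$ (when $u_2 = \te(u)$), or the $r, t$-representation with $v = r(tr)^i(rt)^{i+j}r$ (when $u_2 = u$), where $rt$ is primitive. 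Proposition~\ref{prop:different_length} then lets me add the extra working hypothesis $|v| \neq |w|$, which will be the lever in several contradictions.

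The core of the argument is then to show that \emph{no} non-trivial equation can hold, so that every admissible equation is trivial and the conclusion follows. I would organize this by the value of $u_2$ (Subsection~\ref{subsec:ExLS4_00} for $u_2 = u_1$, Subsection~\ref{subsec:ExLS4_01} for $u_2 = \te(u_1)$) and then by the pair $(u_3, u_4)$, handling parity of $n$ and $m$ separately where Proposition~\ref{prop:CCKS09-a} forces it. In each branch the mechanism is uniform: substitute the $p,q$- or $r,t$-representation into $u_1u_2u_3u_4 = v_1 \cdots v_n w_1 \cdots w_m$, solve for the block $w_1 \cdots w_m$, and apply the \emph{splitting strategy}. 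Concretely, Propositions~\ref{prop:pali_split} and~\ref{prop:clean_split} peel $\te$-palindromic factors off $w_1 \cdots w_m$ while staying inside $\{w, \te(w)\}^+$, after which Lemmas~\ref{lem:rt_th-rootshare} and~\ref{lem:rt_rootshare} descend to $r, t$ or $p, q$ themselves; the primitivity guarantees of Propositions~\ref{prop:rt2_prime_present1} and~\ref{prop:rt2_prime_present2} certify that Proposition~\ref{prop:clean_split} is applicable; and the classical Lyndon--Sch\"utzenberger theorem (Theorem~\ref{thm:original}) is triggered whenever a catenation trick produces a genuine power equation. The contradiction in almost every case is that $rt$ (or $pq$) turns out not to be primitive, or that $\rho(v) = \rho(w)$ forces $|v| = |w|$ against Proposition~\ref{prop:different_length}.

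The main obstacle will be this exhaustive case analysis rather than any single deep step: the delicate points are the parity bookkeeping (Lemma on $\te$-palindromes of even length and the even/odd split of $n$ and $m$) and the cases where $u_3 = u_4$, in which the block equation takes the shape $u_3u_4 = q\,w_1 \cdots w_m$ with $q$ a $\te$-palindrome. There, straightforward peeling does not close the argument, and I expect to need the heavier tools of Subsection~\ref{subsec:non-trivial_ExLS4}, namely Lemma~\ref{lem:u2qwm2} and Proposition~\ref{prop:uteuqnu}, to pin down the internal structure of $u$ before a power equation emerges. Once every non-trivial branch is ruled out, every admissible $(4, n, m)$ equation is trivial, Problem~\ref{prob:main} has a positive answer, and combining this with the $\ell \ge 5$ result of~\cite{CCKS09} yields the theorem for all $\ell \ge 4$.
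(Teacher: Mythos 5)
Your proposal follows exactly the paper's own route: the split into the known case $\ell \ge 5$ and the new case $\ell = 4$, the normalization to Problem~\ref{prob:main}, the trivial/non-trivial dichotomy via Theorem~\ref{thm:trivial} and Proposition~\ref{prop:longenough}, the structural description from Proposition~\ref{prop:CCKS09-a} with $|v| \neq |w|$ from Proposition~\ref{prop:different_length}, and the exhaustive case analysis on $(u_2, u_3, u_4)$ driven by the splitting strategy, the primitivity certificates of Propositions~\ref{prop:rt2_prime_present1} and~\ref{prop:rt2_prime_present2}, and the heavier tools (Lemma~\ref{lem:u2qwm2}, Propositions~\ref{prop:uteuqnu} and~\ref{prop:u2qwm3odd}) in the $u_3 = u_4$ branches. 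This is the same decomposition and the same key lemmas as in Subsections~\ref{subsec:ExLS4_setting}--\ref{subsec:ExLS4_01}, so the plan is correct, with the understanding that the actual content of the proof lies in carrying out the case-by-case contradictions you have outlined.
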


	\subsection{The case $\ell \le 3$ of the ExLS equation}\label{subsec:ExLS3}

We conclude this section with some examples which prove that an extended Lyndon-Sch\"{u}tzenberger theorem cannot be stated for $\ell = 2$, and for some particular cases when $\ell = 3$.

\begin{example}\label{ex:ExLS2}
	Let $\Sigma = \{a, b\}$ and $\theta$ be an antimorphic involutions on $\Sigma^*$ defined as $\theta(a) = a$ and $\theta(b) = b$. 
	Let $v = a^{2m} b^{2}$ and $w = aa$ (i.e., $w = \theta(w)$) for some $m \ge 1$. 
	Then $v^n w^m = (a^{2m} b^{2})^n a^{2m}$. 
	By letting either $u = (a^{2m} b^{2})^{n/2} a^m$ if $n$ is even or $u = (a^{2m} b^{2})^{(n-1)/2} a^{2m} b$ otherwise, we have $u\te(u) = v^n w^m$. 
	Nevertheless, there cannot exist a word $t$ such that $u, v, w \in \{t, \te(t)\}^+$ because $v$ contains $b$, while $w$ does not. 
	In conclusion, for arbitrary $n, m \ge 2$, $(2, n, m)$ does not impose $\te$-periodicity.
\end{example}

Next we examine briefly the $(3, n, m)$ ExLS equation. 
The actual problem which we address is formalized as follows: 

\begin{problem}\label{prob:ExLS3}
	Let $u, v, w \in \Sigma^+$ and integers $n, m \ge 3$. 
	Then, let $u_1, u_2, u_3 \in \{u, \te(u)\}$, $v_1, \ldots, v_n \in \{v, \te(v)\}$, and $w_1, \ldots, w_m \in \{w, \te(w)\}$. 
	Does the equation $u_1u_2u_3 = v_1 \cdots v_n w_1 \cdots w_m$ imply $u, v, w \in \{t, \te(t)\}^+$ for some $t \in \Sigma^+$ under all of the following conditions? 
	\begin{arabiclist}
	\item	$v$ and $w$ are $\te$-primitive, 
	\item	$|v_1 \cdots v_n| \ge |w_1 \cdots w_m|$, 
	\item	$u_1 = u$, $v_1 = v$, and $w_m = w$. 
	\end{arabiclist}
\end{problem}

As shown from now by examples, the general answer is ``No''. 
More significant is the fact that depending on the values of variables $u_2, u_3$ and on the lengths of $v_1 \cdots v_n$ and $w_1 \cdots w_m$, the $(3, n, m)$ ExLS equation exhibits very complicated behavior. 

First we present a parameterized example to show that for arbitrary $m \ge 2$, $(3, 3, m)$ does not impose $\te$-periodicity. 

\begin{example}\label{ex:ExLS33m}
	Let $\Sigma = \{a, b\}$ and $\theta$ be the mirror image over $\Sigma^*$. 
	For $u = (abb)^{2m-1}ab$, $v = (abb)^{m-1}ab$, and $w = (bba)^3$, we have $u^2 \te(u) = v \te(v)^2 w^m$ for any $m \ge 2$. 
	Nevertheless, there does not exist a word $t \in \Sigma^+$ satisfying $u, v, w \in \{t, \te(t)\}^+$. 
\end{example}

In this example, the border between $v\te(v)^2$ and $w^m$ is located at $u_2$. 
Intriguingly, as long as $u_1u_2u_3 = uu\te(u)$ we cannot shift the border to $u_3$ without imposing $u, v, w \in \{t, \te(t)\}^+$ for some $t \in \Sigma^+$. 

\begin{proposition}
	For any $n, m \ge 3$, if $uu\te(u) = v_1 \cdots v_n w_1 \cdots w_m$ and $n|v| > 2|u|$, then $u, v, w \in \{t, \te(t)\}^+$ for some $t \in \Sigma^+$. 
\end{proposition}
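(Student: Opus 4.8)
The plan is to reduce to the prefix situation governed by Proposition~\ref{prop:CCKS09-a} and then apply the splitting machinery of Section~\ref{sec:combinatorial} to the trailing $w$-blocks. As usual we may assume $v$ and $w$ are $\te$-primitive (otherwise replace them by their $\te$-primitive roots), and, relabelling $v$ as $\te(v)$ if necessary, that $v_1 = v$. First I record the length facts. From $3|u| = n|v| + m|w|$ together with $n, m \ge 3$ and the hypothesis $n|v| > 2|u|$, I get $3|v| \le n|v| < 3|u|$, hence $|v| < |u|$, and $m|w| = 3|u| - n|v| < |u|$. Thus $w_1 \cdots w_m$ is a proper suffix of the final block $\te(u)$, while $v_1 \cdots v_n$ is a prefix of $uu\te(u)$ that overruns the border at position $2|u|$ between $u_2 = u$ and $u_3 = \te(u)$.

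Next I locate that border inside the $v$-factorisation. Let $k$ be the largest index with $k|v| \le 2|u|$; since $2|v| < 2|u|$ we have $k \ge 2$, and since $n|v| > 2|u| \ge k|v|$ we have $k < n$. If the border is \emph{clean}, i.e.\ $k|v| = 2|u|$, then the length-$2|u|$ prefix $v_1 \cdots v_k$ equals $u^2$, so $u^2 \in \{v, \te(v)\}^+$ and the first property of Lemma~\ref{lem:rt_th-rootshare} yields $u \in \{v, \te(v)\}^+$. Substituting back into the equation forces $w_1 \cdots w_m \in \{v, \te(v)\}^+$, and then $w \in \{v, \te(v)\}$ by the $\te$-primitivity of $w$ (exactly as in Theorem~\ref{thm:trivial}); hence $u, v, w \in \{v, \te(v)\}^+$ and we are done with $t = v$.

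In the remaining case $k|v| < 2|u| < (k+1)|v|$ I apply Proposition~\ref{prop:CCKS09-a} with $n$ replaced by $n' = k+1 \ge 3$ to the prefix relation $v_1 \cdots v_{n'} \in \Pref(uu\te(u))$, whose length hypothesis $(n'-1)|v| < 2|u| < n'|v|$ is precisely the case assumption. Because $u_2 = u$, only Case~2 of that proposition can hold: if $n'$ is odd this case is excluded, so no configuration exists and the claim holds vacuously for those parameters; if $n'$ is even we obtain non-empty $\te$-palindromes $r, t$ with $rt$ primitive such that $v = r(tr)^i(rt)^{i+j}r$ and $u = v^{n'/2-1} r(tr)^i (rt)^j$ for some $i \ge 0$, $j \ge 1$, and, as in Subsection~\ref{subsec:ExLS4_00}, $rt$ is in fact $\te$-primitive by Corollary~\ref{cor:pq2_te-prime}. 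This determines $u$, and hence the whole word $uu\te(u)$, explicitly in terms of $r$ and $t$.

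It then remains to analyse the suffix $v_{n'+1} \cdots v_n w_1 \cdots w_m$, which now equals a concrete word built from $r$ and $t$. Computing $\te(u)$ and peeling off the extra $v$-blocks, I would express $w_1 \cdots w_m$ as an explicit $\{r, t\}$-word and split it using Propositions~\ref{prop:pali_split} and~\ref{prop:clean_split}, whose primitivity hypotheses are supplied by Propositions~\ref{prop:rt2_prime_present1} and~\ref{prop:rt2_prime_present2}; Lemma~\ref{lem:rt_th-rootshare} then subdivides the resulting $\te$-palindromic components. I expect this to force $r, t \in \{w, \te(w)\}^+$, contradicting the $\te$-primitivity of $rt$ and thereby ruling out the case, so that only the clean-boundary case survives and yields $\te$-periodicity. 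The main obstacle is precisely this last step: the bookkeeping of the substitution, the alignment of the $w$-blocks against the $r,t$-structure, and the split into subcases according to the parity of $m$ and of $n'$ — a computation parallel to, but somewhat more delicate than, the $\ell = 4$ analyses of Subsection~\ref{subsec:ExLS4_00}, where the extra $v$-blocks beyond the first $n'$ make the present situation genuinely heavier.
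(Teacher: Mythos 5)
Your overall plan---reduce to Proposition~\ref{prop:CCKS09-a} and then rule out the resulting $r,t$-configuration---is the same as the paper's, but there is a genuine gap: the entire core of the argument is missing. After Proposition~\ref{prop:CCKS09-a} gives $v = r(tr)^i\,r(tr)^{i+j}$ and $u = v^{n'/2-1}r(tr)^i(rt)^j$, you only say that you \emph{expect} splitting via Propositions~\ref{prop:pali_split} and~\ref{prop:clean_split} (with primitivity supplied by Propositions~\ref{prop:rt2_prime_present1} and~\ref{prop:rt2_prime_present2}) to force $r, t \in \{w, \te(w)\}^+$, and you explicitly flag this step as the main obstacle. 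That step is exactly where all the work lies, and it does not go through by generic splitting; the paper needs several specific ideas there, none of which appear in your sketch. Concretely: (a) comparing prefixes in $\te(u) = r(tr)^i r(tr)^i w_1 \cdots w_m$ forces $i = 0$ (otherwise $tr = rt$, contradicting primitivity of $rt$); (b) the sub-cases $|t| \le |r|$ and $|r| < |t| \le 2|r|$ are eliminated via Theorem~\ref{th:uv-expr}, leaving $|t| > 2|r|$; (c) then $r^2 \in \Pref(t) \cap \Suff(t)$, hence $r^4$ is a suffix of the right-hand side, and one splits on whether the $w$-blocks align with it: if they do, Proposition~\ref{prop:pali_split} yields $t \in \{w,\te(w)\}^+$, contradicting the $\te$-primitivity of $v$; if not, catenating $r^2$ on the left turns the left-hand side into a $\te$-palindrome, so Lemma~\ref{lem:pal_x1-xn} forces $w_1 = \cdots = w_m = w$, and catenating $r$ instead produces the classical LS equation $[(rt)^j r^2]^{n/2} = r^3 w^m$, killed by Theorem~\ref{thm:original}. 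Note also that the tools you name are not the ones actually needed: Propositions~\ref{prop:rt2_prime_present1} and~\ref{prop:rt2_prime_present2} play no role here (they belong to the $\ell = 4$, $u_1 = u_2$ analysis); the decisive instruments are Lemma~\ref{lem:pal_x1-xn}, Theorem~\ref{th:uv-expr}, and the catenation tricks reducing to Theorem~\ref{thm:original}.

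A secondary problem is that your case division makes the remaining work strictly harder than it needs to be. You dispose only of the clean border $k|v| = 2|u|$ directly, and you plan to treat every non-clean border, including those with $k < n-1$, by the explicit computation; the leftover blocks $v_{n'+1} \cdots v_n$ then sit between the structured prefix and $w_1 \cdots w_m$, which is the ``genuinely heavier'' situation you acknowledge. The paper avoids this entirely: whenever $(n-1)|v| \ge 2|u|$ (which covers your clean case \emph{and} every case with leftover $v$-blocks), the prefix $v_1 \cdots v_n$ of $uu\te(u)$ has length at least $2|u| + |v|$, so Theorem~\ref{thm:exFWgcd} immediately gives $u \in \{v, \te(v)\}^+$ and hence $\te$-periodicity. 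In fact this shows your sub-case $k < n-1$ can never produce a non-clean border at all: $u \in \{v,\te(v)\}^+$ forces $2|u|$ to be a multiple of $|v|$. So the only configuration requiring the hard analysis is $k = n-1$, i.e.\ $(n-1)|v| < 2|u| < n|v|$, with no leftover blocks---had you started from that reduction, your step (which remains unproved) would at least coincide with the computation the paper actually carries out.
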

\begin{proof}
	It suffices to consider the case when $(n-1)|v| < 2|u| < n|v|$, otherwise Theorem~\ref{thm:exFWgcd} applies. 
	As done in the analyses on the ExLS equation with $\ell = 4$, we can assume that both $v$ and $w$ are $\te$-primitive. 
	Then, using Proposition~\ref{prop:CCKS09-a}, we obtain that $n$ is even, $u = [r(tr)^i r(tr)^i (tr)^j]^{n/2-1} r(tr)^i (rt)^j$ and $v = r(tr)^i r(tr)^i (tr)^j$ for some $i \ge 0$, $j \ge 1$, and two non-empty $\te$-palindromes $r, t$ such that $rt$ is primitive. 
	Moreover, $\te(u) = (tr)^j r(tr)^i [(rt)^j r(tr)^i r(tr)^i]^{n/2-1} = r(tr)^i r(tr)^i w_1 \cdots w_m$. 
	Hence if $i \ge 1$, then $tr = rt$, which contradicts the primitivity of $rt$ (Theorem~\ref{th:uv-expr}). 
	Thus we have 
	\begin{equation}\label{eq:exLS-001-3-1}
		(tr)^j r [(rt)^j r^2]^{n/2-1} = r^2 w_1 \cdots w_m.
	\end{equation}

	If $|t| \leq |r|$, then $t \in \Pref(r)$ from which $rt \in \Pref (r^2 w_1 \cdots w_m)$, and finally $rt = tr$, contradicting the primitivity of $rt$ again. 
	If $|r| < |t| \leq 2|r|$, then we can write $rrs = tr$ for some $s \in \Sigma^+$ such that $|r|+|s|=|t|$. 
	Since $s \in \Suff(r)$ and $r$ is a $\te$-palindrome, $\te(s) \in \Pref(r)$, i.e., $r=\te(s)r_1$ for some $r_1 \in \Sigma^+$. 
	Then, $rrs = r\te(s)r_1s = tr$, so $r\te(s) = t$ because their length is the same. 
	Since $\te(s) \in \Suff(t)$ and $t$ is a $\te$-palindrome, it holds that $s \in \Pref(t)$ and $rrs \in \Pref(rrt)$. 
	Therefore, $rrt$ and $tr$ share a prefix of length $|t|+|r|$ so that Theorem~\ref{th:uv-expr} concludes that $\rho(r) = \rho(t)$, contradicting the primitivity of $rt$.

	Thus both $i = 0$ and $|t| > 2|r|$ must hold. 
	Eq.~(\ref{eq:exLS-001-3-1}) implies that $r^2 \in \Pref(t)$, that is, $r^2 \in \Suff(t)$ ($t$ is a $\te$-palindrome), and hence $r^4 \in \Suff((rt)^j r^2)$. 
	So we can let $r^4 = z_1 w_{k+1} \cdots w_m$ for some $k \ge 1$ and $z_1 \in \Suff(w_k)$. 
	If $z_1 = \lambda$, then this equation gives $r \in \{w, \te(w)\}^+$ because $w$ is assumed to be $\te$-primitive due to Theorem \ref{thm:exFWlcm}. 
	Then Eq.~(\ref{eq:exLS-001-3-1}) means $(tr)^j r [(rt)^j r^2]^{n/2-1} \in \{w, \te(w)\}^+$. 
	Using Proposition~\ref{prop:pali_split}, we obtain $t \in \{w, \te(w)\}^+$, but this contradicts the $\te$-primitivity of $v$. 
	Otherwise, catenating $r^2$ from the left to the both sides of Eq.~(\ref{eq:exLS-001-3-1}) gives us $r[(rt)^j r^2]^{n/2} = z_1 w_{k+1} \cdots w_m w_1 \cdots w_m$. 
	Note that the left hand side of this equation is a $\te$-palindrome so that Lemma~\ref{lem:pal_x1-xn} implies $w_1 = \cdots = w_m = w$. 
	Now catenating $r$ in the same way to Eq.~(\ref{eq:exLS-001-3-1}) gives $[(rt)^j r^2]^{n/2} = r^3 w^m$. 
	This is in the form of LS equation and Theorem~\ref{thm:original} implies $\rho((rt)^j r^2) = \rho(r) = w$ because $w$ is primitive. 
	From this we further deduce that $\rho(t) = w$. 
	However, then $rt$ would not be primitive. 
\end{proof}

Once we change $u_1u_2u_3$ from $u^2\te(u)$ to $u\te(u)^2$, it becomes possible to construct a parameterized example for $(3, 3, m)$ with the border between $v_1 \cdots v_n$ and $w_1 \cdots w_m$ on $u_3$, though it works only when $m$ is a multiple of 3. 

\begin{example}
	Let $\Sigma = \{a, b\}$ and $\theta$ be the mirror image over $\Sigma^*$. 
	For $i, j \ge 0$, let $u = (ab)^{i+j+1} (ba)^{2i+2j+2} b(ab)^j$, $v = (ab)^{i+j+1} (ba)^{i+2j+1} b$, and $w = ab$. 
	Then $u\te(u)^2 = v^3 w^{2(i+j+1)} \te(w)^{i+j+1}$, but we cannot find such $t$ that $u, v, w \in \{t, \te(t)\}^+$. 
\end{example}

Next we increase $n$ to 4, and prove that still we can construct a parameterized example of the $(3, 4, 2i)$ ExLS equation. 

\begin{example}\label{ex:ExLS34even}
	Let $\Sigma = \{a, b\}$ and $\theta$ be the mirror image over $\Sigma$. 
	For $i \ge 1$, let $u = a^4(ba^3)^i(a^3b)^i$, $v = a^4(ba^3)^{i-1}ba^2$, and $w = ba^3$.
	Then we have $u^3 = v^2 \te(v)^2 w^i \te(w)^i$, but there does not exist a word $t \in \Sigma^+$ satisfying $u, v, w \in \{t, \te(t)\}^+$.  
\end{example}

The cases $(3,n,m)$ when $n=4$ and $m$ is odd, as well as when $m,n \geq 5$, remain open.

\begin{table}[h]
\tbl{Updated summary on the results regarding the extended Lyndon-Sch\"{u}tzenberger equation\label{tbl:new}}
{\begin{tabular}{r@{\hspace{8mm}}r@{\hspace{8mm}}r@{\hspace{10mm}}c@{\hspace{10mm}}l}
\toprule
$l$ & $n$ & $m$ & $\te$-periodicity & \\
\colrule
$\ge 4$ & $\ge 3$ & $\ge 3$ & YES & Theorem \ref{thm:exLS4} \\
\colrule
$3$ & $\ge 5$ & $\ge 5$ & ? & \\
$3$ & $4$ & odd & ? & \\
\colrule
$3$ & $4$ & even & NO & Example \ref{ex:ExLS34even} \\
$3$ & $3$ & $\ge 3$ & NO & Example \ref{ex:ExLS33m} \\
\multicolumn{3}{c}{one of them is 2} & NO & Example \ref{ex:ExLS2} \\
\botrule
\end{tabular}}
\end{table}

	\section{Conclusion}\label{sec:conclusion}

In this paper, we proved several consequences of the overlap between pseudo-primitive words. 
They made it possible to prove that, for a given antimorphic involution $\te$ and words $u, v, w \in \Sigma^+$, if $\ell \ge 4$ and $n, m \ge 3$, then the ExLS equation $u_1 \cdots u_\ell = v_1 \cdots v_n w_1 \cdots w_m$ implies that $u, v, w \in \{t, \te(t)\}^+$
 for some $t$. 
This is the strongest result obtained so far on the ExLS equation. 
Our case analyses on $(3, \ge 3, \ge 3)$ ExLS equations demonstrated that these tools may not be sufficient to provide a complete characterization of ExLS equations. 
Further investigation on the overlaps of $\te$-primitive words, reduction schemes from ExLS equations to LS equations, and the weak defect effect seems promising and required to fill the gap in Table \ref{tbl:new}.

\section*{Acknowledgments}

This research was supported by  Natural Sciences and Engineering Research Council of Canada Discovery Grant R2824A01,
 and Canada Research Chair Award to L.K.

	\bibliographystyle{plain}
	\bibliography{lyndonsch234}

\end{document}